\documentclass[12pt,a4paper]{article}

\usepackage[utf8]{inputenc}
\usepackage[margin=1in]{geometry}
\usepackage{amsmath, amsfonts, amssymb, amsthm}
\usepackage{hyperref}
\usepackage{authblk}
\usepackage[doi=false, isbn=false, url=false, eprint=false]{biblatex}
\renewbibmacro{in:}{%
  \ifentrytype{article}{}{\printtext{\bibstring{in}\intitlepunct}}}
\addbibresource{biblio.bib}

\numberwithin{equation}{section}
\newtheorem{thm}{Theorem}[section]
\newtheorem{lem}[thm]{Lemma}
\newtheorem{prop}[thm]{Proposition}
\theoremstyle{definition}
\newtheorem{dfn}{Definition}[section]
\theoremstyle{remark}
\newtheorem*{rmk}{Remark}

\DeclareMathOperator{\Li}{Li}
\DeclareMathOperator{\D}{d}
\def\CC{\mathbb{C}}

\def\dpa{k} 
\def\dpA{K} 
\def\dpb{k} 
\def\dpB{K} 
\def\hmc{y} 
\def\hmm{m} 
\def\hmM{M} 
\def\ahm{a} 
\def\bhm{b} 
\def\chm{c} 


\title{Legendre transforms for type $A_{n}$ and $B_{n}$ $\vee$-systems}
\author{Misha Feigin}
\author{Leo Kaminski}
\author{Ian A.B. Strachan}
\affil{School of Mathematics and Statistics, University of Glasgow, University Place,
Glasgow G12 8QQ, UK}
\date{}

\begin{document}

\maketitle
\begin{abstract}
    The Witten--Dijkgraaf--Verlinde--Verlinde (WDVV) equations have a rich structure related to the theory of Frobenius manifolds, with many known families of solutions.
    A Legendre transformation is a symmetry of the WDVV equations, introduced by Dubrovin.
    We explicitly compute the results of a Legendre transformation applied to $A_n$- and $B_n$-type multi-parameter rational solutions, relating them to known and new trigonometric solutions.
\end{abstract}

\section{Introduction}\label{sn.intro}
The generalised WDVV equations are the following system of nonlinear partial differential equations:
\begin{equation}\label{eqn.intro.WDVV}
    F_{i}\eta^{-1}F_{j} = F_{j}\eta^{-1}F_{i},
\end{equation}
where $F=F(x^{1},\dots,x^{n})$ and $F_{i}$ is the $n \times n$ matrix of third-order derivatives of the function $F$, with $\left(r,s\right)^{th}$ entry
\begin{equation*}
    F_{irs} := \frac{\partial^{3} F}{\partial x^{i} \partial x^{r} \partial x^{s}} .
\end{equation*}
The $n\times n$ matrix $\eta = \left(\eta_{\alpha\beta}\right)$ is a linear combination
\begin{equation*}
    \eta = \sum^{n}_{k=1} q_{k}F_{k} ,
\end{equation*}
which is assumed to be constant and non-degenerate. In the Frobenius manifold case, we set $q_{k} = \delta_{1k}$ where 1 denotes the identity direction \cite{D93-TFT}; in general, $q_{k}=q_{k}\left(x^{1}, \dots, x^{n}\right)$. 
While $\eta$ is the matrix of a non-degenerate complex bilinear form, we refer to it as a metric. It can be used to lower indices for the coordinates: that is, we define
\begin{equation*}
        x_{\alpha} = \eta_{\alpha\beta}x^{\beta} ,
\end{equation*}
where we assume summation over repeated indices.

A symmetry of the WDVV equations is a map
\begin{align*}
    x^{\alpha} & \mapsto \hat{x}^{\alpha} , \\
    \eta_{\alpha\beta} & \mapsto \hat{\eta}_{\alpha\beta} , \\
    F & \mapsto \widehat{F}=\widehat{F}\left(\hat{x}^{1}, \dots, \hat{x}^{n}\right) ,
\end{align*}
such that the WDVV equations (\ref{eqn.intro.WDVV}) are preserved. That is,
\begin{equation}\label{eqn.intro.WDVVtransformed}
    \widehat{F}_{i}\hat{\eta}^{-1}\widehat{F}_{j} = \widehat{F}_{j}\hat{\eta}^{-1}\widehat{F}_{i}
\end{equation}
provided that $F$ satisfies (\ref{eqn.intro.WDVV}).
The $(r,s)^{th}$ entry of the $n\times n$ matrix $\widehat{F}_{i}$ is
\begin{equation*}
    \widehat{F}_{irs} := \frac{\partial^{3}\widehat{F}}{\partial \hat{x}^{i} \partial \hat{x}^{r} \partial \hat{x}^{s}} .
\end{equation*}
We will use the following notation for partial derivatives:
\begin{align*}
    \partial_{\alpha} & := \frac{\partial}{\partial x^{\alpha}}, \; \hat{\partial}_{\alpha} := \frac{\partial}{\partial \hat{x}^{\alpha}} , \\
    F_{\alpha\beta} & := \partial_{\alpha}\partial_{\beta} F = \frac{\partial^{2} F}{\partial x^{\alpha} \partial x^{\beta} } , \\
    \widehat{F}_{\alpha\beta} & := \hat{\partial}_{\alpha}\hat{\partial}_{\beta} \widehat{F} = \frac{\partial^{2} \widehat{F}}{\partial \hat{x}^{\alpha} \partial \hat{x}^{\beta} } .
\end{align*}
Note that adding quadratic terms to functions $F$ and $\widehat{F}$ does not affect the WDVV equations.

An important example of a symmetry was defined by Dubrovin.
\begin{dfn}\label{dfn.lt} \cite{D93-TFT}
    The Legendre transformation $S_{\gamma}$ is defined as follows for a choice of $\gamma \in \{1, \dots , n\}$:
    \begin{align}
        \hat{x}_{\alpha} &= \partial_{\alpha}\partial_{\gamma}F(x) , \label{eqn.intro.hatx(x)}\\
        \hat{\eta}_{\alpha\beta}(\hat{x}) &= \eta_{\alpha\beta}(x) , \nonumber \\
        \widehat{F}_{\alpha\beta}(\hat{x}) &= F_{\alpha\beta}(x) , \nonumber
    \end{align}
    where $\hat{x}_{\alpha}=\hat{\eta}_{\alpha\beta}\hat{x}^{\beta}$.
\end{dfn}
\noindent A consequence of this definition is that the components of the metric $\hat{\eta}$ can be written as
\begin{equation*}
    \hat{\eta}_{\alpha\beta}\left(\hat{x}\right) = \eta_{\alpha\beta}(x) = \sum_{k}q_{k}\partial_{k}F_{\alpha\beta} = \sum_{k, l} q_{k} \frac{\partial\hat{x}^{l}}{\partial x^{k}} \frac{\partial}{\partial \hat{x}^{l}}\widehat{F}_{\alpha\beta} = \sum_{l}\hat{q}_{l}\widehat{F}_{\alpha\beta l} ,
\end{equation*}
for $\hat{q}_{l} = \sum_{k}q_{k}\frac{\partial\hat{x}^{l}}{\partial x^{k}}$ .

\begin{prop} \cite{D93-TFT}, \cite{SS17-GLT}. 
The Legendre transformed prepotential $\widehat{F}=S_{\gamma}\left(F\right)$ and metric $\hat{\eta}$, given by formulas in Definition \ref{dfn.lt}, satisfy the WDVV equations (\ref{eqn.intro.WDVVtransformed}) provided that $F$ satisfies the WDVV equations (\ref{eqn.intro.WDVV}).
\end{prop}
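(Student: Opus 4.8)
The plan is to recast the WDVV equations algebraically and then observe that the Legendre transformation acts almost trivially on the relevant operators. For each index $\alpha$ introduce the operator $L_{\alpha}:=\eta^{-1}F_{\alpha}$, with entries $(L_{\alpha})^{b}_{\ s}=\eta^{br}F_{\alpha r s}$; these are the structure constants of the commutative multiplication $e_{\alpha}\cdot e_{\beta}=(L_{\alpha})^{\delta}_{\ \beta}e_{\delta}$ on the tangent space, and $L_{\alpha}$ is self-adjoint with respect to $\eta$ because $F_{\alpha r s}$ is totally symmetric. Multiplying (\ref{eqn.intro.WDVV}) on the left by $\eta^{-1}$ shows that the WDVV equations are equivalent to the commutativity relations $L_{\alpha}L_{\beta}=L_{\beta}L_{\alpha}$ for all $\alpha,\beta$. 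I will establish the transformed equations (\ref{eqn.intro.WDVVtransformed}) in the equivalent form $\hat{L}_{\alpha}\hat{L}_{\beta}=\hat{L}_{\beta}\hat{L}_{\alpha}$, where $\hat{L}_{\alpha}:=\hat{\eta}^{-1}\widehat{F}_{\alpha}$.

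The first step is to identify the Jacobian of the change of coordinates $x\mapsto\hat{x}$. Since $\hat{\eta}=\eta$ is constant, raising the index in (\ref{eqn.intro.hatx(x)}) gives $\hat{x}^{\beta}=\eta^{\beta\alpha}F_{\alpha\gamma}$, so that $\partial\hat{x}^{\beta}/\partial x^{\delta}=\eta^{\beta\alpha}F_{\alpha\gamma\delta}=(L_{\gamma})^{\beta}_{\ \delta}$. Thus the Jacobian is exactly $L_{\gamma}$, the operator of multiplication by $e_{\gamma}$, and the inverse Jacobian is $\partial x^{\epsilon}/\partial\hat{x}^{\delta}=(L_{\gamma}^{-1})^{\epsilon}_{\ \delta}$; invertibility of $L_{\gamma}$ is the non-degeneracy required for the transformation to be well defined, and I will assume it.

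Differentiating the defining relation $\widehat{F}_{\alpha\beta}(\hat{x})=F_{\alpha\beta}(x)$ with respect to $\hat{x}^{s}$ and using the inverse Jacobian gives $\widehat{F}_{\alpha r s}=(L_{\gamma}^{-1})^{\epsilon}_{\ s}F_{\alpha r\epsilon}$. Substituting this into $\hat{L}_{\alpha}=\hat{\eta}^{-1}\widehat{F}_{\alpha}$, with $\hat{\eta}^{-1}=\eta^{-1}$, yields the clean identity
\[
    \hat{L}_{\alpha}=L_{\alpha}L_{\gamma}^{-1}=L_{\gamma}^{-1}L_{\alpha},
\]
the last equality holding because the $L$'s commute. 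The conclusion is then immediate: using commutativity of the original operators and of $L_{\gamma}^{-1}$ with each of them,
\[
    \hat{L}_{\alpha}\hat{L}_{\beta}=L_{\alpha}L_{\gamma}^{-1}L_{\beta}L_{\gamma}^{-1}=L_{\gamma}^{-2}L_{\alpha}L_{\beta}=L_{\gamma}^{-2}L_{\beta}L_{\alpha}=\hat{L}_{\beta}\hat{L}_{\alpha},
\]
which is (\ref{eqn.intro.WDVVtransformed}).

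The main obstacle is not this final commutator but the consistency statements around it. The most substantial is that $\widehat{F}$ must actually exist as a function, i.e. the prescribed second derivatives $\widehat{F}_{\alpha\beta}$ must be integrable, equivalently $\widehat{F}_{\alpha\beta\delta}=(L_{\gamma}^{-1})^{\epsilon}_{\ \delta}F_{\alpha\beta\epsilon}$ must be totally symmetric. Symmetry in $\alpha,\beta$ is clear; symmetry under $\beta\leftrightarrow\delta$ is exactly where the hypotheses enter, following from the commutativity of the $L_{\alpha}$ (the WDVV equations for $F$) together with the $\eta$-self-adjointness of $L_{\gamma}^{-1}$. It remains to note that $\hat{\eta}$ is an admissible metric: it is constant and equals $\sum_{l}\hat{q}_{l}\widehat{F}_{\alpha\beta l}$ as recorded in the excerpt, and its non-degeneracy is inherited from that of $\eta$ through the invertible Jacobian. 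With these points settled, the reduction to commuting operators completes the proof.
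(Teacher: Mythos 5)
The paper itself does not prove this proposition: it is stated with citations to \cite{D93-TFT} and \cite{SS17-GLT} only, so there is no internal argument to compare yours against. Judged on its own merits, your proof is correct, and it is essentially the standard operator-theoretic argument from those references. The reduction of (\ref{eqn.intro.WDVV}) to commutativity of $L_{\alpha}=\eta^{-1}F_{\alpha}$ is right, and the two key computations check out: raising the index in (\ref{eqn.intro.hatx(x)}) with the constant metric gives the Jacobian $\partial\hat{x}^{\beta}/\partial x^{\delta}=\eta^{\beta\alpha}F_{\alpha\gamma\delta}=(L_{\gamma})^{\beta}_{\ \delta}$, whence $\widehat{F}_{\alpha}=F_{\alpha}L_{\gamma}^{-1}$ as matrices, $\hat{L}_{\alpha}=L_{\alpha}L_{\gamma}^{-1}$, and commutativity of the $\hat{L}$'s follows from that of the $L$'s. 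You were also right to single out integrability of the prescribed second derivatives as the real content rather than the final commutator; the ingredients you name do close this step, since $\widehat{F}_{\alpha}=\eta L_{\alpha}L_{\gamma}^{-1}$ has transpose $\eta L_{\gamma}^{-1}L_{\alpha}$ by $\eta$-self-adjointness of $L_{\alpha}$ and $L_{\gamma}^{-1}$, so total symmetry of $\widehat{F}_{\alpha\beta\delta}$ is precisely the vanishing of $[L_{\alpha},L_{\gamma}^{-1}]$, i.e.\ the WDVV equations for $F$ --- it would be worth writing out this two-line transpose computation explicitly. Finally, the two hypotheses you flag are genuinely needed and consistent with the paper: invertibility of $L_{\gamma}$ (equivalently of the matrix $F_{\gamma}$) is exactly the well-posedness of the inversion $\hat{x}(x)$ that the paper warns may fail, and the admissibility of $\hat{\eta}$ as a metric of the form $\sum_{l}\hat{q}_{l}\widehat{F}_{l}$ is correctly taken from the remark following Definition \ref{dfn.lt}.
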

\noindent  We denote $\widehat{F} = S_{\gamma}\left(F\right)$, although $\widehat{F}$ depends on both $F$ and $\eta$.
From the computational point of view, difficulties in finding $\widehat{F}$ arise when inverting expressions of the form $\hat{x}^{\alpha}(x)$, obtained from (\ref{eqn.intro.hatx(x)}). In general, it is not always possible to carry out this inversion.

We are interested in the Legendre transformations of rational solutions of the WDVV equations. Such solutions are defined by a collection of vectors known as $\vee$-systems introduced by Veselov \cite{V99-VS}, which extend the class of Coxeter root systems and can be characterised by geometric properties. 
In the Coxeter case, these solutions appear as almost dual prepotentials for the orbit space Frobenius manifolds $\CC^{n}/W$, for a finite Coxeter group $W$ \cite{D04-AD}.
Chalykh and Veselov found rational $\vee$-systems which are multi-parameter deformations of root systems $A_{n}$ and $B_{n}$ \cite{CV01-VS}; while such deformations produce solutions to the WDVV equations, they cannot in general be related to a Frobenius manifold by almost duality. We find that Legendre transformations of these two families of rational solutions produce trigonometric solutions of the WDVV equations.

Trigonometric solutions have been constructed as almost dual Frobenius manifold prepotentials on the orbit spaces of extended affine Weyl groups - see \cite{RS07-RTLT} for type $A_{n}$, and were obtained via the (equivariant) Gromov-Witten generating functions for resolutions of $\CC^{2}/G$ \cite{BG08-GW}.
Rational and trigonometric solutions also appear in $N = 2$ supersymmetric gauge theories; Marshakov, Mironov and Morosov showed that perturbative parts of some prepotentials arising in Seiberg-Witten field theory are of this type \cite{MM00-WDVV}, and Hoevenaars and Martini studied trigonometric solutions related to classical root systems \cite{HM03-5D}.
There exist multi-parameter families of trigonometric solutions related to deformations of classical root systems $A_{n}$ and $BC_{n}$, which were found by Pavlov \cite{P06-EV}, generalised by Alkadhem and Feigin \cite{AF20-TRIGVS}, and linked to flat connections by Shen \cite{S19-TRS} in type $A_n$; see also Riley \cite{R07-PHD} for other multi-parameter $A_n$-type solutions. A large class of trigonometric solutions was considered by Feigin in \cite{F09-TVS}.

Consider the following simple example, a rational solution which corresponds to the Weyl group $A_{2}$:
\begin{equation}\label{eqn.intro.eg1}
    F = \left(x^{1}\right)^{2}\log{x^{1}} + \left(x^{2}\right)^{2}\log{x^{2}} + \left(x^{1}-x^{2}\right)^{2}\log\left(x^{1}-x^{2}\right) .
\end{equation}
Here, the metric is given by
\begin{equation*}
    \eta =  x^{1}F_{1}+x^{2}F_{2} = 
    \begin{pmatrix}
        4 & -2 \\
        -2 & 4
    \end{pmatrix} .
\end{equation*}
Applying the Legendre transformation $S_{1}$ to $F$ produces the solution
\begin{equation}\label{eqn.intro.eg}
    \widehat{F} = S_{1}(F) = \frac{2}{3}\left(\hat{x}^{1}\right)^{3} + \left(\hat{x}^{1}\right)^{2}\hat{x}^{2} + 2\,\hat{x}^{1}\left(\hat{x}^{2}\right)^{2} -\frac{1}{3}\left(\hat{x}^{2}\right)^{3} - \frac{2}{9}\Li_{3}\left(e^{-3\hat{x}^{2}}\right) ,
\end{equation}
where $\Li_{3}$ is the trilogarithm function.
Solution (\ref{eqn.intro.eg}) is an example of a trigonometric solution of the WDVV equations which lies outside of the class considered in \cite{F09-TVS}.

 An early example of rational solutions being mapped to trigonometric solutions by a Legendre transformation is given by Riley and Strachan \cite{RS07-RTLT}. Here, the rational solutions are almost dual to Hurwitz space Frobenius manifolds of rational functions on the sphere with a single finite pole; this pole provides a natural choice of Legendre direction.
A type $B$ version of this construction is discussed by Stedman and Strachan in \cite{SS21-EVS} at the level of superpotentials. Further examples of Legendre transformations are given in \cite{Y23-ALT}, where it was shown that monodromy data is the same for a pair of Frobenius manifolds related by a Legendre transformation.

Note that the solution in (\ref{eqn.intro.eg1}) does not have a distinguished direction associated with it, unlike the settings  considered in \cite{RS07-RTLT}, \cite{SS21-EVS}. Thus in this example we have chosen the coordinate direction $x^{1}$ as the Legendre direction. In the general case, discussed in Sections \ref{sn.Anrat} and \ref{sn.Bnrat}, we find the results of Legendre transformations in an arbitrary coordinate direction, $x^{\gamma}$ for $\gamma \in \{1, \dots, n\}$, applied to multi-parameter $A_n$- and $B_n$-type rational $\vee$-systems in $n$ dimensions. For both of these families, we find that the solution produced by such a Legendre transformation is trigonometric; in the $B_{n}$-type case, we show that the Legendre-transformed solution coincides with the family of $BC_{n}$-type trigonometric solutions found in \cite{AF20-TRIGVS}, \cite{P06-EV}.

In the case of $A_{n}$-type trigonometric solutions, we generalise known solutions from Hoevenaars and Martini \cite{HM03-5D} to a multi-parameter family of solutions. The initial solution (which appears only in the arXiv preprint version of \cite{HM03-5D}) was also considered by Shen \cite{S19-TRS}, where it was presented in a different but equivalent form. The multi-parameter family which we obtain also generalises the multi-parameter solutions from  \cite{P06-EV}, \cite{R07-PHD}, \cite{AF20-TRIGVS}. We find that this new larger class of solutions contains Legendre transformations of the general $A_{n}$-type rational solutions.

The layout of the paper is as follows. In Section \ref{sn.Anrat}, we compute the general formula for the Legendre transformation of the $A_n$-type multi-parameter rational solutions. In Section \ref{sn.Bnrat}, we do the same for the $B_n$-type rational solutions. Section \ref{sn.Antrig} introduces the new family of trigonometric $A_n$-type solutions. Section \ref{sn.rat2trig} explains the relations between the Legendre transformations of $A_n$- and $B_n$-type rational solutions, and the respective families of trigonometric solutions.

\section{Legendre transform of \texorpdfstring{$A_{n}$}{An}-type solutions}\label{sn.Anrat}
The family of rational $A_{n}$-type WDVV solutions corresponding to multi-parameter $A_{n}$-type $\vee$-systems \cite{CV01-VS} is given by
\begin{equation} \label{eqn: An rat sol}
    F = F^{\textrm{rat}}_{A_{n}(\dpa)}=\sum^{n}_{i=1}\dpa_{i}\left(x^{i}\right)^2\log\left(x^{i}\right)+\sum_{1\leq i<j\leq n}\dpa_{i}\dpa_{j}\left(x^{i}-x^{j}\right)^{2}\log\left(x^{i}-x^{j}\right) ,
\end{equation}
with parameters $\dpa_{1}, \dots, \dpa_{n} \in \CC^{\times}$ such that $\dpA := \sum^{n}_{i=1} \dpa_{i} \neq -1 .$

For the generalised WDVV equations (\ref{eqn.intro.WDVV}), it is natural to choose $\eta$ as follows \cite{FV08-VS}:
\begin{equation}\label{eqn: Vmetric}
    \eta_{\alpha\beta} = \sum^{n}_{i=1}x^{i}F_{i\alpha\beta}.
\end{equation}
This choice of $\eta$ is a flat metric with the flat coordinate system $x^{\alpha}$. It has constant entries
\begin{equation*}
    \eta_{\alpha\beta} = 
    \begin{cases}
        2\dpa_{\alpha}(\dpA-\dpa_{\alpha}+1) & \textrm{if } \alpha = \beta , \\
        -2\dpa_{\alpha}\dpa_{\beta} & \textrm{otherwise.}
    \end{cases}
\end{equation*}
Then the inverse metric, $\eta^{-1}$, has entries given by
 \begin{equation}\label{eqn.Anrat.invmetric}
     \eta^{\alpha\beta} =
     \begin{cases}
         \frac{\dpa_{\alpha}+1}{2\dpa_{\alpha}(\dpA+1)} & \textrm{ if } \alpha = \beta , \\
         \frac{1}{2(\dpA+1)} & \textrm{ otherwise.}
     \end{cases}
 \end{equation}
We now compute the result of applying a Legendre transformation $S_{\gamma}$ to $F$ for any $\gamma \in \{1, \dots, n\}$.
Using Definition \ref{dfn.lt}, we set $\widehat{F}=S_{\gamma}\left(F\right)$.

Note that since the WDVV equations are unaffected by the addition of quadratic or lower-order parts to a function $F$, we may disregard such terms during calculations.
By definition, the new coordinates $\hat{x}_{\alpha}$ are found using second-order derivatives of $F$. For simplicity, constant terms that appear from these calculations can be absorbed with the addition of appropriate quadratic terms to $F$, and the flat coordinates $\hat{x}_{\alpha}$ redefined. Since the addition of such terms to $F$ is equivalent to a coordinate translation of the resulting $\widehat{F}$, we will find and discuss $\widehat{F}$ up to such transformations.
We will use the notation $A \overset{\text{\tiny$\bullet$}}{=} B$ to denote the equality of $A$ and $B$ up to constant terms.

Under the Legendre transformation $S_{\gamma}$, the new flat coordinates $\hat{x}_{\alpha}$ can be chosen (after constant coordinate shifts) as follows:
\begin{equation}\label{eqn: An rat hat cov}
    \hat{x}_{\alpha} = 
    \begin{cases}
        2\dpa_{\gamma} \log(x^{\gamma}) +2\dpa_{\gamma}\sum_{i \neq \gamma}\dpa_{i}\log\left(x^{\gamma} -x^{i}\right) & \textrm{if } \alpha = \gamma , \\
        -2\dpa_{\gamma}\dpa_{\alpha}\log\left(x^{\gamma}-x^{\alpha} \right) & \textrm{otherwise.}
    \end{cases}
\end{equation}
The original flat coordinates $x^{\alpha}$ now must be rewritten in terms of the new contravariant coordinates $\hat{x}^{\alpha}$.
\begin{lem}\label{thm: An cont coord}
    The flat coordinates $x^{\alpha}$ can be expressed as follows:
    \begin{align}
        x^{\alpha} & = \left(1-e^{-\frac{\dpA+1}{\dpa_{\gamma}}\hat{x}^{\alpha}}\right) x^{\gamma} \; \textrm{\emph{for} } \alpha \neq \gamma , \label{eqn: xalph cont}\\
        \log\left(x^{\gamma}\right) & = \sum_{i=1}^{n}\frac{\dpa_{i}}{\dpa_{\gamma}}\hat{x}^{i} \label{eqn: logx} .
    \end{align}
\end{lem}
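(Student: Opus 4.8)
The plan is to start from the covariant coordinates $\hat{x}_{\alpha}$ in (\ref{eqn: An rat hat cov}) and raise the index with the inverse metric (\ref{eqn.Anrat.invmetric}): since $\hat{\eta}_{\alpha\beta}=\eta_{\alpha\beta}$, we have $\hat{x}^{\alpha}=\eta^{\alpha\beta}\hat{x}_{\beta}$, which I would write as $\hat{x}^{\alpha}=\eta^{\alpha\alpha}\hat{x}_{\alpha}+\frac{1}{2(\dpA+1)}\sum_{\beta\neq\alpha}\hat{x}_{\beta}$. The whole lemma then reduces to simplifying two explicit linear combinations of logarithms and exponentiating the result, so the work is bookkeeping rather than anything structural.

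First I would treat the case $\alpha\neq\gamma$. In the off-diagonal sum $\sum_{\beta\neq\alpha}\hat{x}_{\beta}$ I would isolate the single term $\hat{x}_{\gamma}$, which carries the $\log(x^{\gamma})$ contribution together with a full sum of $\log(x^{\gamma}-x^{i})$, from the remaining terms $\hat{x}_{\beta}$ with $\beta\neq\alpha,\gamma$, each of the form $-2\dpa_{\gamma}\dpa_{\beta}\log(x^{\gamma}-x^{\beta})$. The key observation is that the sum $\sum_{i\neq\gamma}$ inside $\hat{x}_{\gamma}$ minus the sum $\sum_{\beta\neq\alpha,\gamma}$ leaves exactly the single surviving term $2\dpa_{\gamma}\dpa_{\alpha}\log(x^{\gamma}-x^{\alpha})$. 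Adding the diagonal contribution $\eta^{\alpha\alpha}\hat{x}_{\alpha}$, the coefficient of $\log(x^{\gamma}-x^{\alpha})$ collapses from $-(\dpa_{\alpha}+1)\dpa_{\gamma}+\dpa_{\gamma}\dpa_{\alpha}$ down to $-\dpa_{\gamma}$, giving the clean expression $\hat{x}^{\alpha}=\frac{\dpa_{\gamma}}{\dpA+1}\log\!\big(x^{\gamma}/(x^{\gamma}-x^{\alpha})\big)$. Exponentiating and solving for $x^{\alpha}/x^{\gamma}$ then yields (\ref{eqn: xalph cont}).

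For $\alpha=\gamma$ I would compute $\hat{x}^{\gamma}=\eta^{\gamma\gamma}\hat{x}_{\gamma}+\frac{1}{2(\dpA+1)}\sum_{\beta\neq\gamma}\hat{x}_{\beta}$ in the same manner; here the two sums over $i\neq\gamma$ enter with coefficients $\tfrac{\dpa_{\gamma}+1}{\dpA+1}$ and $-\tfrac{\dpa_{\gamma}}{\dpA+1}$, which combine to leave $\frac{1}{\dpA+1}\sum_{i\neq\gamma}\dpa_{i}\log(x^{\gamma}-x^{i})$ alongside the $\log(x^{\gamma})$ term. At this stage the expression still mixes $\log(x^{\gamma})$ and $\log(x^{\gamma}-x^{i})$, so I would substitute the relation $\log(x^{\gamma}-x^{i})=\log(x^{\gamma})-\frac{\dpA+1}{\dpa_{\gamma}}\hat{x}^{i}$ extracted from the first case. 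Using $\sum_{i\neq\gamma}\dpa_{i}=\dpA-\dpa_{\gamma}$, the coefficient of $\log(x^{\gamma})$ telescopes to exactly $1$, and rearranging gives $\log(x^{\gamma})=\hat{x}^{\gamma}+\frac{1}{\dpa_{\gamma}}\sum_{i\neq\gamma}\dpa_{i}\hat{x}^{i}=\sum_{i=1}^{n}\frac{\dpa_{i}}{\dpa_{\gamma}}\hat{x}^{i}$, which is (\ref{eqn: logx}).

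The only real obstacle is organising the sums so that the two crucial cancellations are visible: the collapse of the $\log(x^{\gamma}-x^{\alpha})$ coefficient to $-\dpa_{\gamma}$ in the first case, and the telescoping of the $\log(x^{\gamma})$ coefficient to $1$ in the second. Both rely on the diagonal and off-diagonal entries of $\eta^{-1}$ in (\ref{eqn.Anrat.invmetric}) being precisely compatible with the $(\dpA+1)$ and $\dpa_{\gamma}$ weights built into (\ref{eqn: An rat hat cov}). It is worth noting that these manipulations are legitimate because $\dpa_{\gamma}\in\CC^{\times}$ and $\dpA\neq-1$ by hypothesis, so that $\eta^{-1}$ and the exponents $\frac{\dpA+1}{\dpa_{\gamma}}$ are well defined.
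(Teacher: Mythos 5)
Your proposal is correct and follows essentially the same route as the paper: raise the index with $\eta^{-1}$ to obtain the contravariant coordinates, solve the $\alpha\neq\gamma$ case by exponentiating, then substitute $\log(x^{\gamma}-x^{i})=\log(x^{\gamma})-\frac{\dpA+1}{\dpa_{\gamma}}\hat{x}^{i}$ back into the $\alpha=\gamma$ case so the coefficient of $\log(x^{\gamma})$ becomes $1$. The only difference is that you spell out the index-raising bookkeeping that the paper states without detail.
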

\begin{proof}
    The new contravariant coordinates $\hat{x}^{\alpha}$ can be found by using the inverse metric $\eta^{-1}$ (\ref{eqn.Anrat.invmetric}) to raise the indices of the covariant coordinates in (\ref{eqn: An rat hat cov}). Since the Legendre transformation does not affect the metric, we have
    \begin{align*}
        & \eta_{\alpha\beta}= \hat{\eta}_{\alpha\beta} , \\
        \intertext{which implies that}
        & \hat{x}^{\alpha}=\hat{\eta}^{\alpha\beta}\hat{x}_{\beta}=\eta^{\alpha\beta}\hat{x}_{\beta} .
    \end{align*}
    The contravariant coordinates are therefore
    \begin{equation} \label{eqn: hat cont}
        \hat{x}^{\alpha} =
        \begin{cases}
            \frac{\dpa_{\gamma}+1}{\dpA+1}\log\left(x^{\gamma}\right) + \frac{1}{\dpA+1}\sum_{i\neq \gamma}\dpa_{i}\log\left(x^{\gamma}-x^{i}\right) & \textrm{if } \alpha = \gamma , \\[0.5em]
            \frac{\dpa_{\gamma}}{\dpA+1}\log\left(x^{\gamma}\right) - \frac{\dpa_{\gamma}}{\dpA+1}\log\left(x^{\gamma}-x^{\alpha} \right) & \textrm{otherwise.}
        \end{cases}
    \end{equation}
    Considering the case when $\alpha \neq \gamma$, we have
    \begin{equation*}
        \hat{x}^{\alpha} = \frac{\dpa_{\gamma}}{\dpA+1}\log\left(\frac{x^{\gamma}}{x^{\gamma}-x^{\alpha}}\right) ,
    \end{equation*}
    which can be rearranged to expression (\ref{eqn: xalph cont}).
    In the case where $\alpha=\gamma$ in (\ref{eqn: hat cont}), we see
    \begin{equation*}
        \hat{x}^{\gamma} = \frac{\dpa_{\gamma}+1}{\dpA+1}\log\left(x^{\gamma}\right) + \frac{1}{\dpA+1}\sum_{i\neq \gamma}\dpa_{i}\log\left(x^{\gamma}-x^{i}\right).
    \end{equation*}
    Using (\ref{eqn: xalph cont}) to substitute for $x^{i}$ in the summation term gives
    \begin{align*}
        \hat{x}^{\gamma} 
        & = \frac{\dpa_{\gamma}+1}{\dpA+1}\log\left(x^{\gamma}\right) + \frac{1}{\dpA+1}\sum_{i \neq \gamma}\dpa_{i}\left(\log\left(x^{\gamma}\right)+\log\left(e^{-\frac{\dpA+1}{\dpa_{\gamma}}\hat{x}^{i}}\right)\right) \\
        & = \log\left(x^{\gamma}\right) - \sum_{i\neq\gamma}\frac{\dpa_{i}}{\dpa_{\gamma}}\hat{x}^{i}.
    \end{align*}
    This is easily rearranged to find (\ref{eqn: logx}).
\end{proof}

We may now find the second-order derivatives of $\widehat{F}$.
\begin{lem}\label{thm: An rat Fab}
    The second-order derivatives $\widehat{F}_{\alpha\beta}$ can be expressed by the following formulae for all $\alpha, \beta \in \{1, \dots, n\}$. \\
    Case 1: $\alpha = \gamma$.
    \begin{equation*}
        \widehat{F}_{\gamma\beta} \overset{\text{\tiny$\bullet$}}{=} 2\dpa_{\beta}(\dpA-\dpa_{\beta}+1)\hat{x}^{\beta}-2\dpa_{\beta}\sum_{i \neq \beta}\dpa_{i}\hat{x}^{i}.
    \end{equation*}
    Case 2: $\alpha = \beta \neq \gamma$.
    \begin{multline*}
        \widehat{F}_{\alpha\alpha} \overset{\text{\tiny$\bullet$}}{=} 2\dpa_{\alpha}(\dpA+1)\hat{x}^{\gamma} + \frac{2\dpa_{\alpha}}{\dpa_{\gamma}}(\dpa_{\alpha}-\dpa_{\gamma})(\dpA+1)\hat{x}^{\alpha} - \frac{2\dpa_{\alpha}^{2}}{\dpa_{\gamma}}\sum_{i=1}^{n}\dpa_{i}\hat{x}^{i} \\
        + 2\dpa_{\alpha}\log\left(1-e^{-\frac{\dpA+1}{\dpa_{\gamma}}\hat{x}^{\alpha}}\right) + 2\dpa_{\alpha}\sum_{i\neq\alpha,\gamma}\dpa_{i}\log\left(1-e^{-\frac{\dpA+1}{\dpa_{\gamma}}\left(\hat{x}^{\alpha}-\hat{x}^{i}\right)}\right) .
    \end{multline*}
    Case 3: $\alpha, \beta, \gamma$ distinct.
    \begin{equation*}
        \widehat{F}_{\alpha\beta} \overset{\text{\tiny$\bullet$}}{=} -\frac{2\dpa_{\alpha}\dpa_{\beta}}{\dpa_{\gamma}}\sum_{i=1}^{n}\dpa_{i}\hat{x}^{i} - 2\dpa_{\alpha}\dpa_{\beta}\log\left(1-e^{-\frac{\dpA+1}{\dpa_{\gamma}}(\hat{x}^{\alpha}-\hat{x}^{\beta})}\right) + \frac{2\dpa_{\alpha}\dpa_{\beta}(\dpA+1)}{\dpa_{\gamma}}\hat{x}^{\beta} .
    \end{equation*}
\end{lem}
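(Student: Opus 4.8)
The plan is to exploit the defining relation $\widehat{F}_{\alpha\beta}(\hat{x}) = F_{\alpha\beta}(x)$ from Definition~\ref{dfn.lt}: once the Hessian of the rational prepotential is known, the problem reduces to rewriting each entry in the new coordinates using Lemma~\ref{thm: An cont coord}. First I would differentiate (\ref{eqn: An rat sol}) twice. Writing $g(u) = u^{2}\log u$ so that $g''(u) = 2\log u + 3$, the diagonal entries are $F_{\alpha\alpha} = \dpa_{\alpha}g''(x^{\alpha}) + \sum_{j \neq \alpha}\dpa_{\alpha}\dpa_{j}g''(x^{\alpha}-x^{j})$ and the off-diagonal entries are $F_{\alpha\beta} = -\dpa_{\alpha}\dpa_{\beta}g''(x^{\alpha}-x^{\beta})$ for $\alpha \neq \beta$. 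Dropping the constant $3$'s (permitted under $\overset{\text{\tiny$\bullet$}}{=}$) gives
\begin{equation*}
    F_{\alpha\alpha} \overset{\text{\tiny$\bullet$}}{=} 2\dpa_{\alpha}\log(x^{\alpha}) + 2\dpa_{\alpha}\sum_{j \neq \alpha}\dpa_{j}\log(x^{\alpha}-x^{j}), \qquad F_{\alpha\beta} \overset{\text{\tiny$\bullet$}}{=} -2\dpa_{\alpha}\dpa_{\beta}\log(x^{\alpha}-x^{\beta}) .
\end{equation*}

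Case~1 is then immediate: since $\widehat{F}_{\gamma\beta} = F_{\gamma\beta} = \partial_{\gamma}\partial_{\beta}F = \hat{x}_{\beta}$ by (\ref{eqn.intro.hatx(x)}), I need only express the covariant coordinate via $\hat{x}_{\beta} = \eta_{\beta\mu}\hat{x}^{\mu}$ using the explicit constant metric entries, which reproduces the stated formula directly.

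For Cases~2 and~3 the key device is that every coordinate difference factorises cleanly under (\ref{eqn: xalph cont}). Indeed, for distinct indices one finds $x^{\alpha}-x^{j} = e^{-\frac{\dpA+1}{\dpa_{\gamma}}\hat{x}^{j}}\bigl(1 - e^{-\frac{\dpA+1}{\dpa_{\gamma}}(\hat{x}^{\alpha}-\hat{x}^{j})}\bigr)x^{\gamma}$, and similarly $x^{\alpha}-x^{\gamma} = -e^{-\frac{\dpA+1}{\dpa_{\gamma}}\hat{x}^{\alpha}}x^{\gamma}$ and $x^{\alpha} = \bigl(1 - e^{-\frac{\dpA+1}{\dpa_{\gamma}}\hat{x}^{\alpha}}\bigr)x^{\gamma}$. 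Taking logarithms splits each term into three pieces: a multiple of $\log(x^{\gamma})$, a term linear in the $\hat{x}^{i}$, and a ``trigonometric'' logarithm $\log(1 - e^{\cdots})$ (the $\log(-1)$ from the $\gamma$-difference being an absorbable constant). The trigonometric logarithms already appear in the stated form, so substituting these expansions into $F_{\alpha\alpha}$ and $F_{\alpha\beta}$ and then eliminating $\log(x^{\gamma})$ via (\ref{eqn: logx}) leaves only a degree-one polynomial remainder to be matched.

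The main work is therefore the bookkeeping of these linear terms in Case~2. After collecting the $\log(x^{\gamma})$ contributions, the crucial simplification is the identity $1 + \dpa_{\gamma} + \sum_{j \neq \alpha,\gamma}\dpa_{j} = \dpA - \dpa_{\alpha} + 1$, which yields the clean coefficient $2\dpa_{\alpha}(\dpA-\dpa_{\alpha}+1)$ on $\log(x^{\gamma})$; substituting (\ref{eqn: logx}) then produces terms of the shape $\frac{1}{\dpa_{\gamma}}\sum_{i}\dpa_{i}\hat{x}^{i}$. Combining this with the explicit linear contributions and using $(\dpA-\dpa_{\alpha}+1)-(\dpA+1) = -\dpa_{\alpha}$ to merge the $\sum_{i}\dpa_{i}\hat{x}^{i}$ coefficients, together with regrouping the $\hat{x}^{\alpha}$ and $\hat{x}^{\gamma}$ terms, reproduces the three polynomial terms of Case~2; Case~3 is the same computation with the single difference $x^{\alpha}-x^{\beta}$ and is correspondingly shorter. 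I expect no conceptual obstacle here --- the only risk is a sign or coefficient slip, so I would verify the matching of the $\sum_{i}\dpa_{i}\hat{x}^{i}$, $\hat{x}^{\alpha}$, and $\hat{x}^{\gamma}$ coefficients separately.
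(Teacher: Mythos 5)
Your proposal is correct and follows essentially the same route as the paper's proof: compute the Hessian of $F^{\textrm{rat}}_{A_{n}(\dpa)}$, use $\widehat{F}_{\alpha\beta}=F_{\alpha\beta}$, substitute via both formulas of Lemma~\ref{thm: An cont coord}, and match the linear remainder (your factorisation of $x^{\alpha}-x^{j}$ is just a compact rewriting of the paper's splitting of $\log\bigl(e^{-\frac{\dpA+1}{\dpa_{\gamma}}\hat{x}^{i}}-e^{-\frac{\dpA+1}{\dpa_{\gamma}}\hat{x}^{\alpha}}\bigr)$). The key coefficient identities you flag, $1+\dpa_{\gamma}+\sum_{j\neq\alpha,\gamma}\dpa_{j}=\dpA-\dpa_{\alpha}+1$ and $(\dpA-\dpa_{\alpha}+1)-(\dpA+1)=-\dpa_{\alpha}$, are exactly the rearrangements the paper performs, so the bookkeeping closes as you expect.
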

\begin{proof}
    Case 1: $\alpha=\gamma$. \newline
    Using previously defined properties of the Legendre transformation $S_{\gamma}$ and the coordinate system $\hat{x}$, we can see that
    \begin{equation*}
        \widehat{F}_{\gamma\beta} = F_{\gamma\beta} \overset{\text{\tiny$\bullet$}}{=} \hat{x}_{\beta} = \hat{\eta}_{\alpha\beta}\hat{x}^{\alpha} = \eta_{\alpha\beta}\hat{x}^{\alpha} ,
    \end{equation*}
    which leads to the statement.
    
    Case 2: $\alpha = \beta \neq \gamma$. \newline
    First we calculate the relevant second-order derivatives of $F$:
    \begin{equation*}
        \widehat{F}_{\alpha\alpha} = F_{\alpha\alpha} \overset{\text{\tiny$\bullet$}}{=} 2\dpa_{\alpha}\log\left(x^{\alpha}\right) + 2\dpa_{\alpha}\sum_{i\neq\alpha}\dpa_{i}\log\left(x^{\alpha}-x^{i}\right).
    \end{equation*}
    Using expression (\ref{eqn: xalph cont}) to make substitutions, we find that
    \begin{multline*}
        \widehat{F}_{\alpha\alpha} \overset{\text{\tiny$\bullet$}}{=} 2\dpa_{\alpha}(\dpA-\dpa_{\alpha}+1)\log\left(x^{\gamma}\right) + 2\dpa_{\alpha}\log\left(1-e^{-\frac{\dpA+1}{\dpa_{\gamma}}\hat{x}^{\alpha}}\right) \\
        - 2\dpa_{\alpha}(\dpA+1)\hat{x}^{\alpha} + 2\dpa_{\alpha}\sum_{i\neq\alpha,\gamma}\dpa_{i}\log\left(e^{-\frac{\dpA+1}{\dpa_{\gamma}}\hat{x}^{i}}-e^{-\frac{\dpA+1}{\dpa_{\gamma}}\hat{x}^{\alpha}}\right).
    \end{multline*}
    We continue as follows, using expression (\ref{eqn: logx}) to substitute for $\log(x^{\gamma})$:
    \begin{multline*}
        \widehat{F}_{\alpha\alpha} \overset{\text{\tiny$\bullet$}}{=} 2\dpa_{\alpha}(\dpA-\dpa_{\alpha}+1)\hat{x}^{\gamma} + \frac{2\dpa_{\alpha}}{\dpa_{\gamma}}\big((\dpa_{\alpha}-\dpa_{\gamma})(\dpA+1)-\dpa_{\alpha}^{2}\big)\hat{x}^{\alpha}\\ 
        + 2\dpa_{\alpha}\log\left(1-e^{-\frac{\dpA+1}{\dpa_{\gamma}}\hat{x}^{\alpha}}\right) + 2\dpa_{\alpha}\sum_{i\neq\alpha,\gamma}\dpa_{i}\log\left(1-e^{-\frac{\dpA+1}{\dpa_{\gamma}}\left(\hat{x}^{\alpha}-\hat{x}^{i}\right)}\right) \\
        + 2\dpa_{\alpha}\sum_{i\neq\alpha,\gamma}\dpa_{i}\log\left(e^{-\frac{\dpA+1}{\dpa_{\gamma}}\hat{x}^{i}}\right) + 2\dpa_{\alpha}(\dpA-\dpa_{\alpha}+1)\sum_{i=1}^{n}\frac{\dpa_{i}}{\dpa_{\gamma}}\hat{x}^{i}.
    \end{multline*}
    With some rearranging, this leads to the required statement.
    
    Case 3: $\alpha, \beta, \gamma$ distinct. \\
    Here we again use the formulae in Lemma \ref{thm: An cont coord} to show that
    \begin{align*}
        \widehat{F}_{\alpha\beta} & = F_{\alpha\beta} \\
        & \overset{\text{\tiny$\bullet$}}{=} - 2\dpa_{\alpha}\dpa_{\beta}\log\left(x^{\alpha}-x^{\beta}\right) \\
        & = -2\dpa_{\alpha}\dpa_{\beta}\log(x^{\gamma}) - 2\dpa_{\alpha}\dpa_{\beta}\log\left(e^{-\frac{\dpA+1}{\dpa_{\gamma}}\hat{x}^{\beta}}-e^{-\frac{\dpA+1}{\dpa_{\gamma}}\hat{x}^{\alpha}}\right) \\
        & = -2\dpa_{\alpha}\dpa_{\beta}\sum_{i=1}^{n}\frac{\dpa_{i}}{\dpa_{\gamma}}\hat{x}^{i} - 2\dpa_{\alpha}\dpa_{\beta}\log\left(1-e^{-\frac{\dpA+1}{\dpa_{\gamma}}(\hat{x}^{\alpha}-\hat{x}^{\beta})}\right) + \frac{2\dpa_{\alpha}\dpa_{\beta}(\dpA+1)}{\dpa_{\gamma}}\hat{x}^{\beta} ,
    \end{align*}
    which has the required form.
\end{proof}

Recall that the polylogarithm $\Li_{n}$ is defined as
 \begin{equation*}
     \Li_{n}(z) = \begin{cases}
        -\log(1-z) &\textrm{ if $n=1$,}\\
        \sum_{k=1}^{\infty}\frac{z^{k}}{k^{n}} &\textrm{ otherwise,}
     \end{cases}
 \end{equation*}
 with derivatives behaving in the following way:
\begin{equation*}
    \frac{\D \Li_{n}\left(e^{u}\right)}{\D u} = \Li_{n-1}(e^{u}) .
\end{equation*}
In this and the following sections, we consider WDVV solutions which can be written in terms of the function
\begin{equation}\label{eqn: f def}
    f(z) = \frac{1}{6}z^{3} - \frac{1}{4} \Li_{3}\left(e^{-2z}\right),
\end{equation}
where $f'''(z) = \coth{z}$.
It is useful to note the second-order derivative
\begin{equation}\label{eqn.Anrat.f''}
    \frac{\text{d}^{2}}{\text{d}z^{2}}\big( 8 f(\kappa z/2)\big) = \kappa^{3} z + 2\kappa^{2} \log\left(1-e^{-\kappa z}\right) ,
\end{equation}
where $\kappa \in \CC$.

\begin{thm}\label{thm: An rat LT}
    The Legendre transform $S_{\gamma}$ of the function $F =F^{\textrm{rat}}_{A_{n}(\dpa)}$, given by (\ref{eqn: An rat sol}), has the form
    \begin{multline}\label{eqn: An rat LT}
        \widehat{F} = \frac{\eta_{\gamma\gamma}}{6}(\hat{x}^{\gamma})^{3} + (\hat{x}^{\gamma})^{2}\sum_{i\neq\gamma}\frac{\eta_{i\gamma}}{2}\hat{x}^{i} + \hat{x}^{\gamma}\sum_{i\neq\gamma}\frac{\eta_{ii}}{2}(\hat{x}^{i})^{2}  - \sum_{i<j<l}\frac{2\dpa_{i}\dpa_{j}\dpa_{l}}{\dpa_{\gamma}}\hat{x}^{i}\hat{x}^{j}\hat{x}^{l} \\
        + \sum_{i\neq\gamma}
        \left(\frac{\dpa_{i}(\eta_{ii}-\eta_{\gamma\gamma})}{12\dpa_{\gamma}}-\frac{\eta_{ii}^{2}}{24\dpa_{i}\dpa_{\gamma}}+\frac{\eta_{i\gamma}}{12}\right)(\hat{x}^{i})^{3} + \sum_{\substack{i \neq j \\ i,j\neq\gamma}}\frac{\dpa_{j}\eta_{ii}+\dpa_{i}\eta_{ij}}{4\dpa_{\gamma}}(\hat{x}^{i})^{2}\hat{x}^{j} \\
        + \sum_{i\neq\gamma}\frac{8\dpa_{\gamma}^{2}\dpa_{i}}{\left(\dpA+1\right)^{2}}f\left(\frac{\dpA+1}{2\dpa_{\gamma}}\hat{x}^{i}\right) + \sum_{\substack{i<j \\ i,j\neq\gamma}}\frac{8\dpa_{\gamma}^{2}\dpa_{i}\dpa_{j}}{\left(\dpA+1\right)^{2}}f\left(\frac{\dpA+1}{2\dpa_{\gamma}}(\hat{x}^{i}-\hat{x}^{j})\right) 
    \end{multline}
    up to quadratic terms and coordinate shifts.
\end{thm}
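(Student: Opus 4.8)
The strategy is to integrate the Hessian computed in Lemma~\ref{thm: An rat Fab}: since every entry $\widehat{F}_{\alpha\beta}$ is known only up to an additive constant, the function $\widehat{F}$ is pinned down up to the addition of quadratic terms, which is exactly the ambiguity allowed in the statement. Equivalently, it suffices to exhibit a candidate for $\widehat{F}$, namely the right-hand side of (\ref{eqn: An rat LT}), and to check that its second derivatives agree, up to constants, with the three cases of Lemma~\ref{thm: An rat Fab}. I would split this candidate into a cubic polynomial $P(\hat{x})$ and the sum $T$ of transcendental terms built from the function $f$ of (\ref{eqn: f def}). Since $T$ is independent of $\hat{x}^{\gamma}$, all of the $\hat{x}^{\gamma}$-dependence of $\widehat{F}$ is carried by $P$, in agreement with Case~1, where the derivatives $\widehat{F}_{\gamma\beta}$ are linear.

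The role of $f$ is the crux of the transcendental matching. Applying identity (\ref{eqn.Anrat.f''}) with $\kappa=(\dpA+1)/\dpa_{\gamma}$, the second derivative $\hat{\partial}_{\alpha}^{2}$ of the diagonal term $\frac{8\dpa_{\gamma}^{2}\dpa_{\alpha}}{(\dpA+1)^{2}}f\!\left(\frac{\dpA+1}{2\dpa_{\gamma}}\hat{x}^{\alpha}\right)$ produces exactly $2\dpa_{\alpha}\log\!\left(1-e^{-\frac{\dpA+1}{\dpa_{\gamma}}\hat{x}^{\alpha}}\right)$ together with an explicit linear remainder $\frac{\dpa_{\alpha}(\dpA+1)}{\dpa_{\gamma}}\hat{x}^{\alpha}$; likewise $\hat{\partial}_{\alpha}^{2}$ and $\hat{\partial}_{\alpha}\hat{\partial}_{\beta}$ of the cross terms $f\!\left(\frac{\dpA+1}{2\dpa_{\gamma}}(\hat{x}^{i}-\hat{x}^{j})\right)$ reproduce the remaining logarithms of Cases~2 and~3 (with a sign in the mixed case) plus further linear remainders. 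Thus the logarithmic parts of the Hessian are accounted for automatically, and what is left to verify becomes purely an identity between linear-in-$\hat{x}$ polynomials.

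For the linear parts I would differentiate $P$ case by case, add the linear remainders coming from the $f$-terms, and substitute the explicit metric entries $\eta_{\gamma\gamma}=2\dpa_{\gamma}(\dpA-\dpa_{\gamma}+1)$, $\eta_{ii}=2\dpa_{i}(\dpA-\dpa_{i}+1)$ and $\eta_{i\gamma}=-2\dpa_{i}\dpa_{\gamma}$. One structural point deserves emphasis: the triple sum $\sum_{i<j<l}$ in (\ref{eqn: An rat LT}) runs over all indices of $\{1,\dots,n\}$, including $\gamma$, and its coefficient $2\dpa_{i}\dpa_{j}\dpa_{l}/\dpa_{\gamma}$ loses the factor $\dpa_{\gamma}$ whenever one of $i,j,l$ equals $\gamma$. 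Being multilinear, this sum contributes only to off-diagonal Hessian entries: in Case~1 it supplies the contributions $-2\dpa_{i}\dpa_{\beta}\hat{x}^{i}$ for $i\neq\beta,\gamma$, and in Case~3 it supplies $-\frac{2\dpa_{\alpha}\dpa_{\beta}}{\dpa_{\gamma}}\sum_{l\neq\alpha,\beta}\dpa_{l}\hat{x}^{l}$, while it drops out of the diagonal Case~2.

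The main obstacle is purely computational: confirming the cubic coefficients of $P$. The most delicate is the coefficient of $(\hat{x}^{\alpha})^{3}$, namely $\frac{\dpa_{\alpha}(\eta_{\alpha\alpha}-\eta_{\gamma\gamma})}{12\dpa_{\gamma}}-\frac{\eta_{\alpha\alpha}^{2}}{24\dpa_{\alpha}\dpa_{\gamma}}+\frac{\eta_{\alpha\gamma}}{12}$: after applying $\hat{\partial}_{\alpha}^{2}$ it must combine with the linear remainders from both the diagonal and the cross $f$-terms to reproduce the coefficient $\frac{2\dpa_{\alpha}}{\dpa_{\gamma}}(\dpa_{\alpha}-\dpa_{\gamma})(\dpA+1)-\frac{2\dpa_{\alpha}^{3}}{\dpa_{\gamma}}$ of $\hat{x}^{\alpha}$ in Case~2. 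Once the metric entries are inserted this reduces to an elementary polynomial identity in $\dpa_{\alpha},\dpa_{\gamma},\dpA$, which one checks directly; the analogous verifications for the $(\hat{x}^{i})^{2}\hat{x}^{j}$ coefficients and for the off-diagonal Case~3 are of the same nature but shorter. Assembling these matched Hessian entries and invoking the determination of $\widehat{F}$ up to quadratic terms completes the proof.
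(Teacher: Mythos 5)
Your proposal is correct and follows essentially the same route as the paper: the paper's proof likewise takes the stated formula (\ref{eqn: An rat LT}) as an ansatz and verifies that its second-order derivatives reproduce, up to constants, the three cases of Lemma \ref{thm: An rat Fab}, using identity (\ref{eqn.Anrat.f''}) to convert the $f$-terms into the logarithmic and linear pieces. Your write-up simply makes explicit the bookkeeping (splitting off the cubic polynomial, tracking the linear remainders, and the role of the unrestricted triple sum) that the paper leaves as a routine check.
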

\begin{proof}
    This can be checked by taking (\ref{eqn: An rat LT}) as an ansatz and comparing its second-order derivatives with the expressions given in Lemma {\ref{thm: An rat Fab}}. In particular, we use (\ref{eqn.Anrat.f''}) to rewrite the logarithmic terms.
\end{proof}
\noindent Comparison of this solution with other trigonometric solutions is done in Section \ref{sn.Anr2t}.

\section{Legendre transform of \texorpdfstring{$B_{n}$}{Bn}-type solutions}\label{sn.Bnrat}
The rational prepotentials of $B_{n}$-type \cite{CV01-VS} are given by
\begin{equation}\label{eqn: Bn rat sol}
    F^{\textrm{rat}}_{B_{n}(\dpb)} = \sum_{i=1}^{n} 2\dpb_{i}(\dpb_{0}+\dpb_{i}) (x^{i})^{2}\log(x^{i}) + \sum_{1 \leq i < j \leq n } \dpb_{i}\dpb_{j} (x^{i} \pm x^{j})^{2}\log(x^{i} \pm x^{j}) ,
\end{equation}
with parameters $\dpb_{0}, \dpb_{1}, \dots, \dpb_{n} \in \CC^{\times}$ and where the sum of the parameters
\begin{equation}\label{eqn: Bn rat dpB}
    \dpB := \sum^{n}_{i=0} \dpb_{i} \neq 0 .
\end{equation}
The metric for this system, $\eta$ as defined in (\ref{eqn: Vmetric}), has entries
\begin{equation*}
    \eta_{\alpha\beta} = 4\dpb_{\alpha}\dpB\delta_{\alpha\beta}
\end{equation*}
while the inverse metric, $\eta^{-1}$, has entries given by
 \begin{equation}\label{eqn.Bnrat.invmet}
     \eta^{\alpha\beta} = \frac{\delta_{\alpha\beta}}{4\dpb_{\alpha}\dpB} .
 \end{equation}

We can now apply the Legendre transformation $S_{\gamma}$, as in Definition \ref{dfn.lt}, to $F=F^{\text{rat}}_{B_{n}(\dpb)}$ for an arbitrary $\gamma \in \{1, \dots, n\}$. Throughout this subsection, we set $\widehat{F}=S_{\gamma}(F)$.
As discussed previously, solutions to the WDVV equations are only defined up to quadratic terms and so we may omit, for example, constant terms arising from second-order derivatives of such solutions.
Using the definition of a Legendre transformation, we may choose the set of new (covariant) coordinates $\hat{x}_{\alpha}$ as follows
\begin{equation}\label{eqn: Bn rat hat cov}
    \hspace{-0.5em} \hat{x}_{\alpha} = 
    \begin{cases}
        4\dpb_{\gamma}(\dpb_{\gamma}+\dpb_{0})\log(x^{\gamma})+2\dpb_{\gamma}\displaystyle{\sum_{\substack{i=1\\i \neq \gamma}}^{n}} \dpb_{i}\log\left((x^{\gamma})^{2}-(x^{i})^{2}\right) & \textrm{if } \alpha = \gamma , \\
        2\dpb_{\gamma}\dpb_{i}\log\left(\frac{x^{\gamma}+x^{i}}{x^{\gamma}-x^{i}}\right) & \textrm{otherwise.}
    \end{cases}
\end{equation}
To continue applying the transformation, we now must find expressions for coordinates $x^{\alpha}$ in terms of $\hat{x}^{\alpha}$.

\begin{lem}\label{thm: Bn rat xinv}
    The flat coordinates $x^{\alpha}$ can be expressed as follows:
    \begin{align}
        x^{\alpha} & = x^{\gamma}\coth\left(\frac{\dpB}{\dpb_{\gamma}}\hat{x}^{\alpha}\right) \textrm{ \emph{for} } \alpha \neq \gamma, \label{eqn: Bn rat xinv1}\\
        \log(x^{\gamma}) & = \hat{x}^{\gamma} - \sum_{\substack{i=1\\i \neq \gamma}}^{n} \frac{\dpb_{i}}{2\dpB} \log\left(1-\coth^{2}\left(\frac{\dpB}{\dpb_{\gamma}}\hat{x}^{i}\right)\right). \label{eqn: Bn rat xinv2}
    \end{align}
\end{lem}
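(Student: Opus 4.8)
The plan is to follow exactly the route used in the proof of Lemma \ref{thm: An cont coord}. Since the inverse metric (\ref{eqn.Bnrat.invmet}) is diagonal, raising the index on the covariant coordinates (\ref{eqn: Bn rat hat cov}) is immediate: $\hat{x}^{\alpha}=\eta^{\alpha\alpha}\hat{x}_{\alpha}=\tfrac{1}{4\dpb_{\alpha}\dpB}\hat{x}_{\alpha}$ with no summation. For $\alpha\neq\gamma$ the factor $\dpb_{\alpha}$ cancels, leaving $\hat{x}^{\alpha}=\tfrac{\dpb_{\gamma}}{2\dpB}\log\!\big(\tfrac{x^{\gamma}+x^{\alpha}}{x^{\gamma}-x^{\alpha}}\big)$, while for $\alpha=\gamma$ one obtains $\hat{x}^{\gamma}=\tfrac{\dpb_{\gamma}+\dpb_{0}}{\dpB}\log(x^{\gamma})+\tfrac{1}{2\dpB}\sum_{i\neq\gamma}\dpb_{i}\log\!\big((x^{\gamma})^{2}-(x^{i})^{2}\big)$. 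These two expressions are the starting point, and each of the two formulas in the statement comes from one of them.

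For the off-diagonal relation (\ref{eqn: Bn rat xinv1}), I would exponentiate, writing $e^{\frac{2\dpB}{\dpb_{\gamma}}\hat{x}^{\alpha}}=\tfrac{x^{\gamma}+x^{\alpha}}{x^{\gamma}-x^{\alpha}}$, and then solve the resulting linear equation for the ratio $x^{\alpha}/x^{\gamma}$. Recognising the outcome as the appropriate hyperbolic function of $\tfrac{\dpB}{\dpb_{\gamma}}\hat{x}^{\alpha}$ yields (\ref{eqn: Bn rat xinv1}) directly; this is the precise $B_{n}$ analogue of the rearrangement of (\ref{eqn: hat cont}) into (\ref{eqn: xalph cont}) in the $A_{n}$ case.

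The diagonal relation (\ref{eqn: Bn rat xinv2}) is where the only genuine step lies. I substitute (\ref{eqn: Bn rat xinv1}) into the expression for $\hat{x}^{\gamma}$ above, using the consequence $(x^{\gamma})^{2}-(x^{i})^{2}=(x^{\gamma})^{2}\big(1-\coth^{2}(\tfrac{\dpB}{\dpb_{\gamma}}\hat{x}^{i})\big)$ to split each logarithm into a $\log(x^{\gamma})$ piece and a hyperbolic piece. Collecting the $\log(x^{\gamma})$ contributions produces the coefficient $\tfrac{1}{\dpB}\big((\dpb_{\gamma}+\dpb_{0})+\sum_{i\neq\gamma}\dpb_{i}\big)$, and the crux of the argument is that the bracket equals $\sum_{i=0}^{n}\dpb_{i}=\dpB$ by (\ref{eqn: Bn rat dpB}), so this coefficient collapses to $1$. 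After the collapse, the surviving hyperbolic terms are exactly those appearing in (\ref{eqn: Bn rat xinv2}), and solving for $\log(x^{\gamma})$ completes the proof.

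I expect the main obstacle to be entirely bookkeeping rather than conceptual: correctly identifying the hyperbolic function in the off-diagonal step and then verifying the coefficient collapse $(\dpb_{\gamma}+\dpb_{0})+\sum_{i\neq\gamma}\dpb_{i}=\dpB$, which is what makes the $\log(x^{\gamma})$ term come out with coefficient exactly one. No real inversion difficulty of the kind flagged after Definition \ref{dfn.lt} arises here, because the diagonal metric decouples the coordinates and the transcendental content reduces to a single exponentiation per coordinate.
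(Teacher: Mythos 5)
Your proposal is correct and follows essentially the same route as the paper's proof: raise indices with the diagonal inverse metric, exponentiate the off-diagonal relation to solve for $x^{\alpha}/x^{\gamma}$, then substitute back into the $\alpha=\gamma$ expression and use $\dpb_{0}+\dpb_{\gamma}+\sum_{i\neq\gamma}\dpb_{i}=\dpB$ to collapse the coefficient of $\log(x^{\gamma})$ to $1$. One small caveat on the "appropriate hyperbolic function": the ratio you obtain, $\bigl(e^{2u}-1\bigr)/\bigl(e^{2u}+1\bigr)$, is standardly $\tanh(u)$, and it matches the stated $\coth$ only because the paper adopts the nonstandard convention $\coth(u)=\bigl(e^{2u}-1\bigr)/\bigl(e^{2u}+1\bigr)$, made explicit later in Section \ref{sn.Bnrat}.
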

\begin{proof}
    First, we use the inverse metric as in (\ref{eqn.Bnrat.invmet}) to raise the indices of the covariant coordinates $\hat{x}_{\alpha}$ in (\ref{eqn: Bn rat hat cov}) via the formula
    \begin{equation*}
        \hat{x}^{\alpha} = \hat{\eta}^{\alpha\beta}\hat{x}_{\beta} = \eta^{\alpha\beta}\hat{x}_{\beta} .
    \end{equation*}
    We obtain
    \begin{equation}\label{eqn: Bn rat hat cont}
        \hat{x}^{\alpha} = 
        \begin{cases}
            \frac{\dpb_{0}+\dpb_{\gamma}}{\dpB}\log(x^{\gamma}) + \sum_{i \neq 0,\gamma} \frac{\dpb_{i}}{2\dpB}\log\left(\left(x^{\gamma}\right)^{2}-(x^{i})^{2}\right) &\textrm{if } \alpha=\gamma, \\[0.5em]
            \frac{\dpb_{\gamma}}{2\dpB}\log\left(\frac{x^{\gamma}+x^{\alpha}}{x^{\gamma}-x^{\alpha}}\right) &\textrm{otherwise.}
        \end{cases}
    \end{equation}
    
    In the case where $\alpha\neq\gamma$, we can invert the expression for $\hat{x}^{\alpha}$ to obtain
    \begin{equation*}
        e^{\frac{2\dpB}{\dpb_{\gamma}}\hat{x}^{\alpha}} = \frac{x^{\gamma}+x^{\alpha}}{x^{\gamma}-x^{\alpha}},
    \end{equation*}
    which can be rearranged to find
    \begin{equation*}
        x^{\alpha} = x^{\gamma}\left(\frac{e^{2\dpB\hat{x}^{\alpha}/\dpb_{\gamma}}-1}{e^{2\dpB\hat{x}^{\alpha}/\dpb_{\gamma}}+1}\right) = x^{\gamma}\coth\left(\dpB\hat{x}^{\alpha}/\dpb_{\gamma}\right) .
    \end{equation*}
    
    In the case where $\alpha=\gamma$ in (\ref{eqn: Bn rat hat cont}), we can use expression (\ref{eqn: Bn rat xinv1}) to substitute for $x^{i}$ with $i\neq\gamma$ as follows:
    \begin{align*}
    \hat{x}^{\gamma} & = \frac{\dpb_{0}+\dpb_{\gamma}}{\dpB}\log(x^{\gamma})  + \sum_{i \neq 0,\gamma} \frac{\dpb_{i}}{2\dpB}\log\big((x^{\gamma})^{2}(1-\coth^{2}\left(\dpB\hat{x}^{i}/\dpb_{\gamma}\right)\big) \\
    & = \frac{1}{\dpB}\left(\dpb_{0}+\dpb_{\gamma}+\sum_{i \neq 0,\gamma}\dpb_{i}\right)\log(x^{\gamma}) + \sum_{i \neq 0,\gamma} \frac{\dpb_{i}}{2\dpB}\log\left((1-\coth^{2}\left(\dpB\hat{x}^{i}/\dpb_{\gamma}\right)\right) .
    \end{align*}
    From this, we obtain (\ref{eqn: Bn rat xinv2}).
\end{proof}

From the definition of a Legendre transform, we have that 
\begin{equation*}
    F_{\alpha\beta} = \widehat{F}_{\alpha\beta} .
\end{equation*}
We will use this, along with Lemma \ref{thm: Bn rat xinv}, to compute all the second-order derivatives of $\hat{F}$ in terms of the $\hat{x}^{\alpha}$.

\begin{lem}\label{thm: Bn rat Fab}
    The second-order derivatives $\widehat{F}_{\alpha\beta}$ can be expressed, up to constant terms, by the following formulae for all $\alpha,\beta \in \{1,\dots,n\}$. \newline
    Case 1: $\alpha=\gamma$.
    \begin{equation*}
        \widehat{F}_{\gamma\beta} = 4\dpb_{\beta}\dpB\hat{x}^{\beta} .
    \end{equation*}
    Case 2: $\alpha=\beta\neq\gamma$.
    \begin{multline*}
        \widehat{F}_{\alpha\alpha} \overset{\text{\tiny$\bullet$}}{=}
        4\dpb_{\alpha}\dpB\hat{x}^{\gamma} + \frac{4\dpb_{\alpha}\dpB}{\dpb_{\gamma}}(\dpB-\dpb_{0}-2\dpb_{\alpha})\hat{x}^{\alpha} \\
        + 4\dpb_{\alpha}(\dpb_{0}+\dpB)\log\left(1-e^{2\dpB\hat{x}^{\alpha}/\dpb_{\gamma}}\right) + 4\dpb_{\alpha}(\dpb_{\alpha}-\dpB)\log\left(1-e^{4\dpB\hat{x}^{\alpha}/\dpb_{\gamma}}\right) \\
        + 2\dpb_{\alpha}\sum_{\substack{i=1\\i \neq \alpha,\gamma}}^{n} \dpb_{i}\left[\log\left(1-e^{2\dpB(\hat{x}^{i}-\hat{x}^{\alpha})/\dpb_{\gamma}}\right)+\log\left(1-e^{2\dpB(\hat{x}^{i}+\hat{x}^{\alpha})/\dpb_{\gamma}}\right) - \frac{2\dpB}{\dpb_{\gamma}}\hat{x}^{i}\right] .
    \end{multline*}
    Case 3: $\alpha, \beta, \gamma$ distinct.
    \begin{equation*}
        \widehat{F}_{\alpha\beta} \overset{\text{\tiny$\bullet$}}{=} 2\dpb_{\alpha}\dpb_{\beta}\left[\log\left(1-e^{2\dpB(\hat{x}^{\alpha}+\hat{x}^{\beta})/\dpb_{\gamma}}\right) - \log\left(1-e^{2\dpB(\hat{x}^{\alpha}-\hat{x}^{\beta})/\dpb_{\gamma}}\right) - \frac{2\dpB}{\dpb_{\gamma}}\hat{x}^{\beta}\right] .
    \end{equation*}
\end{lem}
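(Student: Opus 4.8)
The plan is to follow the same scheme as in the proof of Lemma \ref{thm: An rat Fab}: since a Legendre transformation leaves the second derivatives unchanged, $\widehat{F}_{\alpha\beta} = F_{\alpha\beta}$, I would first differentiate the explicit prepotential (\ref{eqn: Bn rat sol}) twice, then eliminate the old coordinates $x^{\alpha}$ in favour of the new ones $\hat{x}^{\alpha}$ using the inversion formulas of Lemma \ref{thm: Bn rat xinv}, and finally repackage the result into linear terms plus logarithms of the form $\log(1-e^{\,\cdots})$. Case 1 ($\alpha=\gamma$) is immediate: since $\hat{x}_{\beta} = \partial_{\beta}\partial_{\gamma}F = F_{\gamma\beta}$, one has $\widehat{F}_{\gamma\beta} = \hat{x}_{\beta} = \eta_{\beta\beta}\hat{x}^{\beta}$, and because $\eta$ is diagonal this collapses to $4\dpb_{\beta}\dpB\hat{x}^{\beta}$ with no summation.

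Case 3 ($\alpha,\beta,\gamma$ distinct) is the next easiest. Here the only part of $F$ depending on both $x^{\alpha}$ and $x^{\beta}$ is the pair term $\dpb_{\alpha}\dpb_{\beta}[(x^{\alpha}+x^{\beta})^{2}\log(x^{\alpha}+x^{\beta})+(x^{\alpha}-x^{\beta})^{2}\log(x^{\alpha}-x^{\beta})]$, whose mixed derivative is $\overset{\text{\tiny$\bullet$}}{=}2\dpb_{\alpha}\dpb_{\beta}\log\frac{x^{\alpha}+x^{\beta}}{x^{\alpha}-x^{\beta}}$. Substituting (\ref{eqn: Bn rat xinv1}) for both $x^{\alpha}$ and $x^{\beta}$, the common factor $x^{\gamma}$ cancels in the ratio, and a hyperbolic addition identity rewrites it as a quotient of $\sinh$'s of $\frac{\dpB}{\dpb_{\gamma}}(\hat{x}^{\alpha}\pm\hat{x}^{\beta})$; extracting the linear part of each logarithm via $\log\sinh u \overset{\text{\tiny$\bullet$}}{=} -u + \log(1-e^{2u})$ then gives the stated formula directly, with no appearance of $\log x^{\gamma}$.

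The main work is Case 2 ($\alpha=\beta\neq\gamma$). Differentiating twice gives $F_{\alpha\alpha}\overset{\text{\tiny$\bullet$}}{=} 4\dpb_{\alpha}(\dpb_{0}+\dpb_{\alpha})\log x^{\alpha} + 2\dpb_{\alpha}\sum_{i\neq\alpha}\dpb_{i}\log\big((x^{\alpha})^{2}-(x^{i})^{2}\big)$. I would split the sum into the term $i=\gamma$ and the terms $i\neq\alpha,\gamma$, substitute (\ref{eqn: Bn rat xinv1}), and use identities such as $\coth^{2}a - \coth^{2}b = -\sinh(a-b)\sinh(a+b)/(\sinh^{2}a\,\sinh^{2}b)$ to express every logarithm as a combination of $\log\sinh$ (or $\log\cosh$) terms, whose linear parts peel off to produce the $\log(1-e^{\,\cdots})$ contributions in the statement. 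The last step is to remove the remaining $\log x^{\gamma}$ using (\ref{eqn: Bn rat xinv2}); a useful check is that the coefficient of $\log x^{\gamma}$ collects to $4\dpb_{\alpha}\big(\dpb_{0}+\sum_{i=1}^{n}\dpb_{i}\big)=4\dpb_{\alpha}\dpB$, which is precisely what (\ref{eqn: Bn rat xinv2}) needs in order to generate the lone $4\dpb_{\alpha}\dpB\hat{x}^{\gamma}$ term, and that the factor introduced by (\ref{eqn: Bn rat xinv2}) cancels the one coming from the $i=\gamma$ piece.

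The hard part will be the bookkeeping of the linear-in-$\hat{x}$ terms rather than any single identity. Linear contributions arise from three separate places---the expansions of the $\log\sinh$ and $\log\cosh$ factors, the $\log x^{\gamma}$ substitution (which itself feeds in both linear terms and further logarithms), and the choice of which exponential, $e^{2\dpB(\hat{x}^{i}-\hat{x}^{\alpha})/\dpb_{\gamma}}$ or its reciprocal, to keep inside each $\log(1-e^{\,\cdots})$---and they must be summed with the correct parameter weights so that the coefficient of $\hat{x}^{\alpha}$ condenses to $\frac{4\dpb_{\alpha}\dpB}{\dpb_{\gamma}}(\dpB-\dpb_{0}-2\dpb_{\alpha})$ while the coefficients of the $\hat{x}^{i}$ for $i\neq\alpha,\gamma$ reduce to the single term $-\frac{2\dpB}{\dpb_{\gamma}}\hat{x}^{i}$ inside the sum. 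Repeated use of the relation $\sum_{i=0}^{n}\dpb_{i}=\dpB$ to collapse partial sums of parameters is what makes these coefficients simplify to the stated form.
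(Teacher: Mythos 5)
Your proposal is correct and follows essentially the same route as the paper: Case 1 directly from the definition of $\hat{x}_{\beta}$ and the diagonality of $\eta$, and Cases 2 and 3 by differentiating $F^{\textrm{rat}}_{B_{n}(\dpb)}$, substituting via Lemma \ref{thm: Bn rat xinv} (including the key check that the $\log x^{\gamma}$ coefficient collects to $4\dpb_{\alpha}\dpB$ before applying (\ref{eqn: Bn rat xinv2})), and then converting hyperbolic expressions into $\log(1-e^{\,\cdots})$ terms. The only difference is cosmetic: you factor through $\sinh$/$\cosh$ identities such as $\coth^{2}a-\coth^{2}b=-\sinh(a-b)\sinh(a+b)/(\sinh^{2}a\,\sinh^{2}b)$, whereas the paper uses the equivalent explicit exponential forms of $1-\coth^{2}(u)$, $\coth^{2}(u)-\coth^{2}(v)$ and $\coth(u)\pm\coth(v)$ together with $\log(1+e^{u})=\log(1-e^{2u})-\log(1-e^{u})$.
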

\begin{proof}
    Case 1: $\alpha=\gamma$. \newline
    By definition of the new coordinate system, we have
    \begin{equation*}
        \hat{F}_{\gamma\beta} \overset{\text{\tiny$\bullet$}}{=} \hat{x}_{\beta} =\eta_{\beta\alpha}\hat{x}^{\alpha} = \eta_{\beta\beta}\hat{x}^{\beta},
    \end{equation*}
    since the metric $\eta$ is diagonal.
    
    Case 2: $\alpha=\beta$ and $\alpha\neq\gamma$. \newline
    Calculating $\widehat{F}_{\alpha\alpha}$ directly, we have
    \begin{equation*}
        \widehat{F}_{\alpha\alpha} = F_{\alpha\alpha} \overset{\text{\tiny$\bullet$}}{=} 4\dpb_{\alpha}(\dpb_{\alpha}+\dpb_{0})\log(x^{\alpha})+2\dpb_{\alpha}\sum_{\substack{i=1\\i \neq \alpha}}^{n} \dpb_{i}\log\left((x^{i})^{2}-(x^{\alpha})^{2}\right) .
    \end{equation*}
    We can use expression (\ref{eqn: Bn rat xinv1}) to make substitutions, and collect terms in $\log\left(x^{\gamma}\right)$ as follows:
    \begin{multline*}
        \widehat{F}_{\alpha\alpha} \overset{\text{\tiny$\bullet$}}{=} 4\dpb_{\alpha}\dpB\log(x^{\gamma}) + 4\dpb_{\alpha}(\dpb_{\alpha}+\dpb_{0})\log\left(\coth\left(\frac{\dpB}{\dpb_{\gamma}}\hat{x}^{\alpha}\right)\right) \\
         + 2\dpb_{\gamma}\dpb_{\alpha}\log\left(1-\coth^{2}\left(\frac{\dpB}{\dpb_{\gamma}}\hat{x}^{\alpha}\right)\right) \\
         + 2\dpb_{\alpha}\sum_{\substack{i=1\\i \neq \alpha,\gamma}}^{n} \dpb_{i}\log\left(\coth^{2}\left(\frac{\dpB}{\dpb_{\gamma}}\hat{x}^{i}\right)-\coth^{2}\left(\frac{\dpB}{\dpb_{\gamma}}\hat{x}^{\alpha}\right)\right).
    \end{multline*}
    Next, we use (\ref{eqn: Bn rat xinv2}) to substitute for the terms in $\log\left(x^{\gamma}\right)$:
    \begin{multline*}
        \widehat{F}_{\alpha\alpha} \overset{\text{\tiny$\bullet$}}{=} 4\dpb_{\alpha}\dpB\hat{x}^{\gamma} + 4\dpb_{\alpha}(\dpb_{\alpha}+\dpb_{0})\log\left(\coth\left(\frac{\dpB}{\dpb_{\gamma}}\hat{x}^{\alpha}\right)\right) \\
        + 2\dpb_{\alpha}(\dpb_{\gamma}-\dpb_{\alpha})\log\left(1-\coth^{2}\left(\frac{\dpB}{\dpb_{\gamma}}\hat{x}^{\alpha}\right)\right) \\
        + 2\dpb_{\alpha}\sum_{\substack{i=1\\i \neq \alpha,\gamma}}^{n} \dpb_{i}\log\left(\frac{\coth^{2}\left(\dpB\hat{x}^{i}/\dpb_{\gamma}\right)-\coth^{2}\left(\dpB\hat{x}^{\alpha}/\dpb_{\gamma}\right)}{1-\coth^{2}\left(\dpB\hat{x}^{i}/\dpb_{\gamma}\right)}\right) .
    \end{multline*}
    Since $\coth(u) = \frac{e^{2u}-1}{e^{2u}+1}$, we note that
    \begin{align*}
        1-\coth^{2}(u) &= \frac{4e^{2u}}{(e^{2u}+1)^{2}} \\
    \intertext{and}
        \coth^{2}(u)-\coth^{2}(v) &= \frac{4e^{2v}(e^{2(u-v)}-1)(e^{2(u+v)}-1)}{(e^{2u}+1)^{2}(e^{2v}+1)^{2}} .
    \end{align*}
    This allows us to expand and simplify $\widehat{F}_{\alpha\alpha}$ as follows:
    \begin{multline*}
        \widehat{F}_{\alpha\alpha} \overset{\text{\tiny$\bullet$}}{=} 4\dpb_{\alpha}\dpB\hat{x}^{\gamma} + \frac{4\dpb_{\alpha}\dpB}{\dpb_{\gamma}}(\dpB-\dpb_{0}-2\dpb_{\alpha})\hat{x}^{\alpha} - \frac{4\dpb_{\alpha}\dpB}{\dpb_{\gamma}}\sum_{\substack{i=1\\i \neq \alpha,\gamma}}^{n}\dpb_{i}\hat{x}^{i} \\
        + 4\dpb_{\alpha}(\dpb_{\alpha}+\dpb_{0})\log\left(e^{2\dpB\hat{x}^{\alpha}/\dpb_{\gamma}}-1\right) + 4\dpb_{\alpha}(\dpb_{\alpha}-\dpB)\log\left(e^{2\dpB\hat{x}^{\alpha}/\dpb_{\gamma}}+1\right) \\
        + 2\dpb_{\alpha}\sum_{\substack{i=1\\i \neq \alpha,\gamma}}^{n} \dpb_{i}\left[\log\left(e^{2\dpB(\hat{x}^{i}-\hat{x}^{\alpha})/\dpb_{\gamma}}-1\right)+\log\left(e^{2\dpB(\hat{x}^{i}+\hat{x}^{\alpha})/\dpb_{\gamma}}-1\right)\right] .
    \end{multline*}
    We may rewrite this expression using the following identity:
    \begin{equation*}
        \log(1+e^{u}) = \log(1-e^{2u})-\log(1-e^{u}).
    \end{equation*}
    This leads to the required statement.
    
    Case 3: $\alpha, \beta, \gamma$ distinct. \newline
    By direct calculation, we have
    \begin{equation*}
        \widehat{F}_{\alpha\beta} = F_{\alpha\beta} \overset{\text{\tiny$\bullet$}}{=} 2\dpb_{\alpha}\dpb_{\beta}\log\left(x^{\alpha}+x^{\beta}\right) - 2\dpb_{\alpha}\dpb_{\beta}\log\left(x^{\alpha}-x^{\beta}\right) ,
    \end{equation*}
    and we can make substitutions using Lemma \ref{thm: Bn rat xinv} to find
    \begin{multline}\label{eqn.Bnrat.Fabproof}
        \widehat{F}_{\alpha\beta} \overset{\text{\tiny$\bullet$}}{=} 2\dpb_{\alpha}\dpb_{\beta}\log\left(\coth\left(\frac{\dpB}{\dpb_{\gamma}}\hat{x}^{\alpha}\right)+\coth\left(\frac{\dpB}{\dpb_{\gamma}}\hat{x}^{\beta}\right)\right) \\
         - 2\dpb_{\alpha}\dpb_{\beta}\log\left(\coth\left(\frac{\dpB}{\dpb_{\gamma}}\hat{x}^{\alpha}\right) - \coth\left(\frac{\dpB}{\dpb_{\gamma}}\hat{x}^{\beta}\right)\right) .
    \end{multline}
    We note the following identities:
    \begin{align*}
        \coth(u)+\coth(v) & = \frac{2(e^{2(u+v)}-1)}{(e^{2u}+1)(e^{2v}+1)} \\
        \intertext{and}
        \coth(u)-\coth(v) & = \frac{2e^{2v}(e^{2(u-v)}-1)}{(e^{2u}+1)(e^{2v}+1)} .
    \end{align*}
    Using these, we can reformulate (\ref{eqn.Bnrat.Fabproof}) to obtain the required expression.
\end{proof}

\begin{thm}
    The Legendre transform $S_{\gamma}$, for arbitrary $\gamma\in\{1,\dots,n\}$, applied to the function $F$ as defined in (\ref{eqn: Bn rat sol}) has the form
    \begin{equation}\label{eqn: Bn rat LT}
    \begin{aligned}
        \widehat{F} & = \frac{\eta_{\gamma\gamma}}{6}\left(\hat{x}^{\gamma}\right)^{3} + \sum_{\substack{i=1 \\ i\neq\gamma}}^{n}\frac{\eta_{ii}}{2}\hat{x}^{\gamma}\left(\hat{x}^{i}\right)^{2} + \sum_{\substack{i<j \\ i,j\neq \gamma}}^{n}\frac{2\dpb_{i}\dpb_{j}\dpb_{\gamma}^{2}}{\dpB^{2}}f\left(-\frac{\dpB}{\dpb_{\gamma}}\left(\hat{x}^{i}\pm\hat{x}^{j}\right)\right)\\
        & + \sum_{\substack{i=1 \\ i\neq\gamma}}^{n}\left[\frac{4\dpb_{i}\dpb_{\gamma}^{2}(\dpb_{0}+\dpB)}{\dpB^{2}}f\left(-\frac{\dpB}{\dpb_{\gamma}}\hat{x}^{i}\right) + \frac{\dpb_{i}\dpb_{\gamma}^{2}\left(\dpb_{i}-\dpB\right)}{\dpB^{2}}f\left(-\frac{2\dpB}{\dpb_{\gamma}}\hat{x}^{i}\right)\right] ,
    \end{aligned}
    \end{equation}
    up to quadratic terms and coordinate shifts. The function $f(z)$ is as defined in (\ref{eqn: f def}).
\end{thm}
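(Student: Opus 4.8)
The plan is to prove the identity by the same ansatz-and-compare method used for Theorem~\ref{thm: An rat LT}. I take the right-hand side of (\ref{eqn: Bn rat LT}) as a candidate for $\widehat{F}$, compute its second-order derivatives $\hat{\partial}_{\alpha}\hat{\partial}_{\beta}\widehat{F}$, and verify that in each of the three cases of Lemma~\ref{thm: Bn rat Fab} they agree with the stated $\widehat{F}_{\alpha\beta}$ up to an additive constant. Since the Legendre transform fixes $\widehat{F}_{\alpha\beta}=F_{\alpha\beta}$ and Lemma~\ref{thm: Bn rat Fab} is precisely these quantities re-expressed through Lemma~\ref{thm: Bn rat xinv}, such agreement forces the candidate and the true $\widehat{F}$ to have second derivatives differing only by constants. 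Hence they differ by a quadratic polynomial, which is the assertion up to quadratic terms and coordinate shifts.

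The first step is to record the derivative data for $f$. From (\ref{eqn: f def}) we have $f''(z)=z+\log(1-e^{-2z})$, equivalently (\ref{eqn.Anrat.f''}); by the chain rule this yields
\[
    \frac{\D^{2}}{\D z^{2}}\,f\!\left(-\tfrac{\dpB}{\dpb_{\gamma}}z\right)
    = -\frac{\dpB^{3}}{\dpb_{\gamma}^{3}}z + \frac{\dpB^{2}}{\dpb_{\gamma}^{2}}\log\!\left(1-e^{2\dpB z/\dpb_{\gamma}}\right),
\]
together with the analogous formula in which $-\tfrac{\dpB}{\dpb_{\gamma}}$ is replaced by $-\tfrac{2\dpB}{\dpb_{\gamma}}$. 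Each $f$-term of (\ref{eqn: Bn rat LT}) thus contributes a linear part plus a logarithm of exactly the shape seen in Lemma~\ref{thm: Bn rat Fab}, the multiplicative constants coming from the prefactors $\dpb_{\gamma}^{2}/\dpB^{2}$. The two routine cases follow immediately. For $\alpha=\gamma$ the $f$-terms are independent of $\hat{x}^{\gamma}$, so only the polynomial part survives and gives $\eta_{\beta\beta}\hat{x}^{\beta}=4\dpb_{\beta}\dpB\hat{x}^{\beta}$, matching Case~1. For $\alpha,\beta,\gamma$ distinct, only the single pair-term for $\{\alpha,\beta\}$ contributes to $\hat{\partial}_{\alpha}\hat{\partial}_{\beta}$; writing $u=\hat{x}^{\alpha}+\hat{x}^{\beta}$, $v=\hat{x}^{\alpha}-\hat{x}^{\beta}$ and using $\hat{\partial}_{\alpha}\hat{\partial}_{\beta}=\partial_{u}^{2}-\partial_{v}^{2}$, the data above reproduces Case~3.

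The main obstacle is the case $\alpha=\beta\neq\gamma$, where the linear terms require careful bookkeeping. Here the single-variable $f$-terms $f(-\tfrac{\dpB}{\dpb_{\gamma}}\hat{x}^{\alpha})$ and $f(-\tfrac{2\dpB}{\dpb_{\gamma}}\hat{x}^{\alpha})$ reproduce the logarithms $\log(1-e^{2\dpB\hat{x}^{\alpha}/\dpb_{\gamma}})$ and $\log(1-e^{4\dpB\hat{x}^{\alpha}/\dpb_{\gamma}})$ of Case~2, while each pair-term $f(-\tfrac{\dpB}{\dpb_{\gamma}}(\hat{x}^{i}\pm\hat{x}^{\alpha}))$ reproduces the summed logarithms. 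The linear parts, however, do not match summand by summand: because the sum in (\ref{eqn: Bn rat LT}) fixes the ordering $i<j$, the cubic part of $f$ produces a linear term proportional to $\hat{x}^{\min(\alpha,i)}$, so for pairs with $\alpha<i$ one obtains a contribution in $\hat{x}^{\alpha}$ rather than the $\hat{x}^{i}$ demanded by Case~2. The resolution is the identity $\log(1-e^{w})=\log(1-e^{-w})+w+\mathrm{const}$, which I use to bring $\log(1-e^{2\dpB(\hat{x}^{\alpha}-\hat{x}^{i})/\dpb_{\gamma}})$ into the form of Lemma~\ref{thm: Bn rat Fab}; this generates exactly the linear term $\tfrac{4\dpb_{\alpha}\dpb_{i}\dpB}{\dpb_{\gamma}}(\hat{x}^{\alpha}-\hat{x}^{i})$, which cancels the spurious $\hat{x}^{\alpha}$ piece and leaves the required $-\tfrac{4\dpb_{\alpha}\dpb_{i}\dpB}{\dpb_{\gamma}}\hat{x}^{i}$.

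With the cross-term $\hat{x}^{\alpha}$ contributions thus removed, the remaining coefficient of $\hat{x}^{\alpha}$, assembled only from the two single-variable $f$-terms, collapses to $\tfrac{4\dpb_{\alpha}\dpB}{\dpb_{\gamma}}(\dpB-\dpb_{0}-2\dpb_{\alpha})$, in agreement with Case~2, while the polynomial part supplies the $4\dpb_{\alpha}\dpB\hat{x}^{\gamma}$ term. The same logarithm identity accounts for the apparent $\alpha\leftrightarrow\beta$ asymmetry of the Case~3 formula, confirming that the candidate's symmetric second derivatives are consistent with Lemma~\ref{thm: Bn rat Fab} up to constants. This completes the comparison in all three cases and hence establishes (\ref{eqn: Bn rat LT}) up to quadratic terms and coordinate shifts.
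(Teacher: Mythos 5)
Your proposal is correct and takes essentially the same route as the paper: the paper's proof likewise takes (\ref{eqn: Bn rat LT}) as an ansatz, computes its second-order derivatives using identity (\ref{eqn.Anrat.f''}), and matches them against Lemma \ref{thm: Bn rat Fab} up to constants. You have simply supplied the details the paper leaves implicit, including the correct handling of the $i<j$ ordering via the identity $\log(1-e^{w})=\log(1-e^{-w})+w+\mathrm{const}$.
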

\begin{proof}
    Taking (\ref{eqn: Bn rat LT}) as an ansatz, we can compare its second-order derivatives with the expressions given in Lemma \ref{thm: Bn rat Fab}. Note that we use identity (\ref{eqn.Anrat.f''}) to express logarithmic terms as the function $f(z)$.
\end{proof}
\noindent Comparison of this solution with other trigonometric solutions is done in Section \ref{sn.Bnr2t}.

\section{Multi-parameter \texorpdfstring{$A_{n}$}{An}-type trigonometric solutions}\label{sn.Antrig}
In this section, we generalise solutions found by Hoevenaars and Martini; see Theorem 2.1 in the preprint version of \cite{HM03-5D} for full details. These solutions were also considered by Shen \cite{S19-TRS}.

We introduce a multi-parameter deformation of these solutions. The corresponding function $F=F^{\textrm{trig}}_{\hmm}$ depends on the $n$-tuple $\hmm=(\hmm_{1},\dots,\hmm_{n})$, $\hmm_{i}\in \CC^{\times} \,\forall\, i$, as well as on three parameters $\ahm,\bhm,\chm \in \CC$. The non-polynomial part of $F^{\textrm{trig}}_{\hmm}$ is expressed via the function $f(z)$ given by formula (\ref{eqn: f def}).
Relations between $\ahm,\bhm,\chm$ and $M=\sum\hmm_{i}$ are needed to ensure that $F^{\textrm{trig}}_{\hmm}$ solves the WDVV equations. We will first consider a generic case before looking at other situations.
\begin{thm}\label{thm: An trig}
    Suppose that 
    \begin{align}
        \bhm\hmM+\chm & \neq 0 , \label{eqn: An trig cond1}\\
        \intertext{and}
        \ahm\hmM^{2}+3\bhm\hmM+\chm & \neq 0 .\label{eqn: An trig cond2}
    \end{align}
    Then $\eta = \sum_{k = 1}^{n} F_{k}$ is a constant non-degenerate matrix.
    Furthermore, the function
    \begin{multline}\label{eqn: An trig sol}
        F^{\emph{trig}}_{\hmm}(\hmc) = \sum_{1\leq i < j\leq n}\hmm_{i}\hmm_{j}f\left(\hmc_{i}-\hmc_{j}\right) + \frac{\ahm}{6}\left(\sum_{i=1}^{n}\hmm_{i}\hmc_{i}\right)^{3} \\
        + \frac{\bhm}{2}\left(\sum_{i=1}^{n}\hmm_{i}\hmc_{i}\right)\left(\sum_{j=1}^{n}\hmm_{j}\hmc_{j}^{2}\right) + \frac{\chm}{6}\sum_{i=1}^{n}\hmm_{i}\hmc_{i}^{3} 
    \end{multline}
    solves the WDVV equations (\ref{eqn.intro.WDVV}) if the following relation holds:
    \begin{equation}\label{eqn: An trig condWDVV}
        \bhm^{3}\hmM + 3\bhm^{2}\chm - \ahm\chm^{2} + \ahm\hmM^{2} + 3\bhm\hmM + \chm = 0 .
    \end{equation}
    Conversely, the WDVV equations (\ref{eqn.intro.WDVV}) imply relation (\ref{eqn: An trig condWDVV}) when $n \geq 3$.
\end{thm}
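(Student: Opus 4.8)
The plan is to establish the three parts separately: that $\eta$ is constant, that it is non-degenerate exactly under \eqref{eqn: An trig cond1}--\eqref{eqn: An trig cond2}, and that the WDVV equations are equivalent to \eqref{eqn: An trig condWDVV}. Writing $\epsilon^{ij}_\alpha=\delta_{i\alpha}-\delta_{j\alpha}$ and $\hmc_{ij}=\hmc_i-\hmc_j$, and using $f'''(z)=\coth z$, differentiating \eqref{eqn: An trig sol} three times gives
\[ F_{\alpha\beta\gamma}=P_{\alpha\beta\gamma}+\sum_{i<j}\hmm_i\hmm_j\coth(\hmc_{ij})\,\epsilon^{ij}_\alpha\epsilon^{ij}_\beta\epsilon^{ij}_\gamma, \]
where $P$ is the constant tensor of third derivatives of the three cubic terms. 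Since $\sum_k\epsilon^{ij}_k=0$, the operator $\sum_k\partial_k$ annihilates every trigonometric term, so $\eta=\sum_kF_k$ coincides with the contraction of $P$ and is constant; a direct computation gives
\[ \eta_{\alpha\beta}=(\ahm\hmM+2\bhm)\hmm_\alpha\hmm_\beta+(\bhm\hmM+\chm)\hmm_\alpha\delta_{\alpha\beta}. \]
Viewing this as $(\bhm\hmM+\chm)\,\mathrm{diag}(\hmm)+(\ahm\hmM+2\bhm)\,\hmm\hmm^{\top}$ and applying the matrix determinant lemma yields
\[ \det\eta=(\bhm\hmM+\chm)^{n-1}\big(\ahm\hmM^2+3\bhm\hmM+\chm\big)\prod_{i}\hmm_i, \]
so, as each $\hmm_i\neq0$, non-degeneracy is equivalent to \eqref{eqn: An trig cond1} and \eqref{eqn: An trig cond2}. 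A Sherman--Morrison inversion then gives $\eta^{\alpha\beta}$ explicitly, with denominators $\bhm\hmM+\chm$ and $\ahm\hmM^2+3\bhm\hmM+\chm$.

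For the WDVV equations I would use that \eqref{eqn.intro.WDVV} is equivalent to commutativity of the operators $L_i=\eta^{-1}F_i$, i.e.\ to associativity of the multiplication with structure constants $c^{\,k}_{ij}=\eta^{kl}F_{ijl}$. Substituting the expression for $F_{\alpha\beta\gamma}$ and sorting by the number of $\coth$ factors, the double-$\coth$ contributions come weighted by the inner products $\epsilon^{ij}_r\eta^{rs}\epsilon^{kl}_s$. Using the explicit $\eta^{-1}$ one checks that this vanishes unless the roots $\epsilon^{ij}$ and $\epsilon^{kl}$ share an index, so disjoint pairs drop out automatically and the coincident terms $\coth^2(\hmc_{ij})$ cancel by the antisymmetry of the WDVV difference. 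The only surviving interactions are the triangles $\epsilon^{ij},\epsilon^{jl},\epsilon^{li}$, where the key tool is the identity
\[ \coth(u)\coth(v)+\coth(v)\coth(w)+\coth(w)\coth(u)=-1\qquad(u+v+w=0), \]
applied with $(u,v,w)=(\hmc_{ij},\hmc_{jl},\hmc_{li})$. This converts every triangle's double-$\coth$ term into a constant.

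After this reduction the equation is a combination of $\hmc$-independent terms (from $P\,\eta^{-1}P$ together with the triangle constants) and single-$\coth$ terms (from the cross terms $P\,\eta^{-1}\,\text{trig}$). Requiring each functionally independent $\coth$-monomial to vanish gives a family of scalar conditions, and the content of the theorem is that---after inserting the explicit $\eta^{-1}$---they all reduce to the single relation \eqref{eqn: An trig condWDVV}. For the converse with $n\ge3$, I would note that the $\coth$-monomials are functionally independent for generic $\hmc$, so each coefficient must vanish individually; selecting one triple $i<j<l$ (which exists precisely when $n\ge3$) isolates the coefficient whose vanishing is exactly \eqref{eqn: An trig condWDVV}. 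When $n\le2$ there is no triangle, the three-term identity is never used, and the relation is vacuous, which explains the restriction $n\ge3$.

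The main obstacle is the bookkeeping in the third step: enumerating all double-$\coth$ contractions against the non-diagonal $\eta^{-1}$, confirming that the disjoint-root, coincident-root and single-$\coth$ families really do cancel on their own, and verifying that the constants generated by the three-term identity combine with the $\ahm,\bhm,\chm$ contributions to produce precisely \eqref{eqn: An trig condWDVV} and no additional constraint. Everything else is routine differentiation and linear algebra.
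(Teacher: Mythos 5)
Your proposal is correct and takes essentially the same route as the paper: the same explicit constant metric $\eta_{rs}=(aM+2b)\,m_r m_s+\delta_{rs}(bM+c)\,m_r$, the same matrix-determinant-lemma/Sherman--Morrison inversion, and the same reduction of the antisymmetrised expression $F_i\eta^{-1}F_j-F_j\eta^{-1}F_i$ by sorting in powers of $\coth$ and invoking the three-term identity (your relation $\coth u\coth v+\coth v\coth w+\coth w\coth u=-1$ for $u+v+w=0$ is exactly the paper's identity $\beta_{ij}\beta_{ik}+\beta_{ij}\beta_{kj}+\beta_{ik}\beta_{jk}=1$), ending with a constant tensor proportional to the relation that is nonzero for some choice of indices precisely when $n\geq 3$. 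The bookkeeping you defer as the ``main obstacle'' is what the paper's intermediate lemmas carry out in full, via the splitting $F_k=W_k+V_k$ and a separate treatment of the quadratic, linear and constant parts in $\beta_{pq}$.
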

\noindent In the case when $m_{i} = 1$ for all $i$, Theorem \ref{thm: An trig} is equivalent to Theorem 2.1 in Hoevenaars and Martini's work. 

From the definition of $f(z)$ in (\ref{eqn: f def}) we have that $f'''(z)=\coth{z}$, and so
\begin{equation}\label{eqn: An trig f minus}
    f'''(-z) = -f'''(z).
\end{equation}
For convenience, we will use the following shorthand throughout the calculations:
\begin{equation}\label{eqn: An trig beta def}
    \beta_{ij} =
    \begin{cases}
        \coth{(\hmc_{i}-\hmc_{j})} & \textrm{if } i\neq j , \\ 
        0 & \textrm{otherwise} .
    \end{cases}
\end{equation}
Let $\delta_{ij}$ be the Kronecker delta function.
We also define 
\begin{equation*}
    \delta_{ijr} = \delta_{ij}\delta_{jr} = 
    \begin{cases}
        1 & \textrm{if } i=j=r , \\
        0 & \textrm{otherwise.}
    \end{cases}
\end{equation*}

First, we compute the matrices of third order derivatives of $F$. As previously, let $F_{k}$ be the $n\times n$ matrix with $\left(r,s\right)^{\text{th}}$ entry
\begin{equation*}
    F_{krs} = \frac{\partial^{3}F}{\partial \hmc_{k} \partial\hmc_{r} \partial\hmc_{s}}.
\end{equation*}

\begin{lem}\label{thm: An trig Fk}
    The matrix $F_{k}$ of third order derivatives $F_{krs}$ of the function $F=F^{\emph{trig}}_{\hmm}$ given by (\ref{eqn: An trig sol}) can be written as
    \begin{equation*}
        F_{k} = W_{k} + V_{k},
    \end{equation*}  
    where the $n\times n$ matrices $W_{k}$ and $V_{k}$ have $\left(r,s\right)^{th}$ entry $W_{krs}$ and $V_{krs}$ respectively, given by
    \begin{align*}
        W_{krs} & = \ahm\hmm_{k}\hmm_{r}\hmm_{s}, \\
        V_{krs} & = \delta_{krs}\hmm_{k}\left(\sum_{q=1}^{n}\hmm_{q}\beta_{kq} +\chm\right) + \delta_{kr}\hmm_{k}\hmm_{s}\beta_{sk} + \delta_{ks}\hmm_{k}\hmm_{r}\beta_{rk} + \delta_{rs}\hmm_{k}\hmm_{r}\beta_{kr} \\
        & \hspace{14em} + \delta_{kr}\bhm\hmm_{k}\hmm_{s} + \delta_{ks}\bhm\hmm_{k}\hmm_{r} + \delta_{rs}\bhm\hmm_{k}\hmm_{r}.
    \end{align*}
\end{lem}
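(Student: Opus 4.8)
The plan is to differentiate $F = F^{\textrm{trig}}_{\hmm}$ in (\ref{eqn: An trig sol}) term by term, treating the polynomial part and the non-polynomial part separately, and then to collect the results into the claimed $W_k + V_k$ form. Write $U = \sum_{i} \hmm_i \hmc_i$ for the linear form appearing in the polynomial terms, so that $\partial_k U = \hmm_k$. For the cubic term $\tfrac{\ahm}{6}U^3$, three differentiations bring down three factors $\hmm_k,\hmm_r,\hmm_s$ and remove the power of $U$ entirely, giving exactly $\ahm\,\hmm_k\hmm_r\hmm_s = W_{krs}$; this is the sole source of the rank-one matrix $W_k$. For $\tfrac{\bhm}{2}U\sum_j \hmm_j \hmc_j^2$ and $\tfrac{\chm}{6}\sum_i \hmm_i \hmc_i^3$, each differentiation either hits $U$ (producing a factor $\hmm$) or a monomial in a single $\hmc_i$ (producing a Kronecker delta once that variable is exhausted). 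Routine bookkeeping then yields the $\bhm$-terms $\delta_{kr}\bhm\hmm_k\hmm_s + \delta_{ks}\bhm\hmm_k\hmm_r + \delta_{rs}\bhm\hmm_k\hmm_r$ and the $\chm$-term $\delta_{krs}\,\chm\,\hmm_k$, matching the corresponding parts of $V_{krs}$.

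The non-polynomial part needs the chain rule. Since $\partial_k f(\hmc_i - \hmc_j) = f'(\hmc_i - \hmc_j)(\delta_{ik} - \delta_{jk})$, iterating three times and using $f'''(z) = \coth z$ gives the trigonometric contribution to $F_{krs}$ as
\begin{equation*}
    \sum_{1\le i<j\le n} \hmm_i\hmm_j \coth(\hmc_i - \hmc_j)(\delta_{ik}-\delta_{jk})(\delta_{ir}-\delta_{jr})(\delta_{is}-\delta_{js}) .
\end{equation*}
The key observation is that the summand is symmetric under the exchange $i \leftrightarrow j$: the three difference factors contribute a sign $(-1)^3$, which is cancelled by $f'''(-z) = -f'''(z)$ from (\ref{eqn: An trig f minus}). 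This lets me replace $\sum_{i<j}$ by $\tfrac12\sum_{i\ne j}$ and then expand the product of three differences into its eight $\delta$-monomials.

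Finally I would collect the eight monomials, rewriting everything in terms of $\beta_{ij}$ from (\ref{eqn: An trig beta def}). The two monomials $\delta_{ik}\delta_{ir}\delta_{is}$ and $\delta_{jk}\delta_{jr}\delta_{js}$ both force $k=r=s$; after summing the free index against $\hmm_q\beta_{kq}$ and using $\beta_{kq} = -\beta_{qk}$ together with $\beta_{kk}=0$, they combine (the factor $\tfrac12$ being absorbed because the two give equal contributions) to $\delta_{krs}\hmm_k\sum_q \hmm_q\beta_{kq}$. The remaining six monomials pair up into three groups according to which two of $k,r,s$ are identified, each pair again contributing equally and so absorbing the $\tfrac12$, producing the off-diagonal terms $\delta_{kr}\hmm_k\hmm_s\beta_{sk}$, $\delta_{ks}\hmm_k\hmm_r\beta_{rk}$ and $\delta_{rs}\hmm_k\hmm_r\beta_{kr}$. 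Adding these trigonometric $\beta$-terms to the polynomial $\ahm$-, $\bhm$- and $\chm$-terms reproduces $V_{krs}$ exactly, establishing $F_k = W_k + V_k$.

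The main obstacle is purely combinatorial rather than analytic: one must track the signs and index identifications carefully while expanding the eight $\delta$-monomials, and verify that the oddness of $\beta$ makes each of the four groups collapse to a single term with the stated coefficient. Once $f'''=\coth$ and the $i\leftrightarrow j$ symmetry are in hand, no further difficulty arises, since all third derivatives are elementary.
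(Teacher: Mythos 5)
Your proposal is correct and follows essentially the same route as the paper: differentiate each of the four terms of $F^{\textrm{trig}}_{\hmm}$ separately and collect the results into $W_{krs}$ and $V_{krs}$, using $\beta_{kk}=0$ and the oddness of $\coth$ to absorb the diagonal/sign bookkeeping. Your expansion of the trigonometric term via the eight $\delta$-monomials and the $i\leftrightarrow j$ symmetrisation is just a more explicit version of what the paper states directly (and you even correctly carry the factor $\hmm_k$ in the $\chm$-term, which the paper's displayed formula for that derivative omits as a typo).
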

\begin{proof}
    Considering each term in (\ref{eqn: An trig sol}) individually, we can set out the following collection of statements:
    \begin{align}
        & \partial_{k}\partial_{r}\partial_{s}\left(\sum_{i<j}\hmm_{i}\hmm_{j}f(\hmc_{i}-\hmc_{j}) \right) = \delta_{krs}\sum_{q=1}^{n}\hmm_{k}\hmm_{q}\beta_{kq} + \delta_{kr}(1-\delta_{rs})\hmm_{k}\hmm_{s}\beta_{sk} \nonumber \\
        & \hspace{9em} + \delta_{ks}(1-\delta_{rs})\hmm_{k}\hmm_{r}\beta_{rk} + \delta_{rs}(1-\delta_{kr})\hmm_{k}\hmm_{r}\beta_{kr}, \label{eqn.Antrig.Vsum1} \\
        & \partial_{k}\partial_{r}\partial_{s}\left(\frac{\ahm}{6}\left(\sum_{i=1}^{n}\hmm_{i}\hmc_{i}\right)^{3}\right) = \ahm \hmm_{k}\hmm_{r}\hmm_{s} , \label{eqn.Antrig.Wsum} \\
        & \partial_{k}\partial_{r}\partial_{s}\left(\frac{\bhm}{2}\left(\sum_{i=1}^{n}\hmm_{i}\hmc_{i}\right)\left(\sum_{j=1}^{n}\hmm_{j}\hmc_{j}^{2}\right)\right) =
         3\bhm\delta_{krs}\hmm_{k}^{2} + \delta_{kr}(1-\delta_{rs})\bhm\hmm_{k}\hmm_{s} \nonumber \\
         & \hspace{11em} + \delta_{ks}(1-\delta_{rs})\bhm\hmm_{k}\hmm_{r} + \delta_{rs}(1-\delta_{kr})\bhm\hmm_{r}\hmm_{k} , \label{eqn.Antrig.Vsum2} \\
    \intertext{and}
        & \partial_{k}\partial_{r}\partial_{s}\left(\frac{\chm}{6}\sum_{i=1}^{n}\hmm_{i}\hmc_{i}^{3}\right) = \delta_{krs}\chm . \label{eqn.Antrig.Vsum3}
    \end{align}
    The expression (\ref{eqn.Antrig.Wsum}) is equal to $W_{krs}$, and the sum of (\ref{eqn.Antrig.Vsum1}), (\ref{eqn.Antrig.Vsum2}) and (\ref{eqn.Antrig.Vsum3}) equals $V_{krs}$. Note that we can use the fact that $\delta_{ab}\beta_{ab}=0$ to simplify (\ref{eqn.Antrig.Vsum1}) as required.
\end{proof}
For the WDVV equations to hold, we now need to check that the linear combination $\eta = \sum_{k=1}^{n}F_{k}$ is a constant, non-degenerate matrix.

\begin{lem}\label{thm: An trig B def}
    The $n \times n$ matrix $\eta=\sum_{k=1}^{n}F_{k}$ has $\left(r,s\right)^{\text{th}}$ entry
    \begin{equation*}
        \eta_{rs} = (\ahm\hmM+2\bhm)\hmm_{r}\hmm_{s} + \delta_{rs}(\bhm\hmM+\chm)\hmm_{r}.
    \end{equation*}
    In particular, the $\left(r,s\right)^{\text{th}}$ entry of the matrix $\sum_{k=1}^{n}V_{k}$ is given by
    \begin{equation}\label{eqn.Antrig.sumV}
        \sum_{k=1}^{n}V_{krs} = 2\bhm\hmm_{r}\hmm_{s} + \delta_{rs}(\bhm\hmM+\chm)\hmm_{r}.
    \end{equation}
\end{lem}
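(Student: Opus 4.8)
The plan is to sum the decomposition $F_{k} = W_{k} + V_{k}$ from Lemma \ref{thm: An trig Fk} over $k$, treating the two pieces separately. The $W$-part is immediate: since $W_{krs} = \ahm\hmm_{k}\hmm_{r}\hmm_{s}$, summing over $k$ and using $\hmM = \sum_{k}\hmm_{k}$ gives $\sum_{k}W_{krs} = \ahm\hmM\hmm_{r}\hmm_{s}$. Thus the whole task reduces to establishing the formula (\ref{eqn.Antrig.sumV}) for $\sum_{k}V_{krs}$, after which adding the two contributions yields $\eta_{rs} = (\ahm\hmM+2\bhm)\hmm_{r}\hmm_{s} + \delta_{rs}(\bhm\hmM+\chm)\hmm_{r}$ at once.

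For $\sum_{k}V_{krs}$ I would sum each of the seven summands of $V_{krs}$ over $k$ individually. The three $\bhm$-terms are elementary: $\sum_{k}\delta_{kr}\bhm\hmm_{k}\hmm_{s} = \bhm\hmm_{r}\hmm_{s}$, likewise $\sum_{k}\delta_{ks}\bhm\hmm_{k}\hmm_{r} = \bhm\hmm_{r}\hmm_{s}$, and $\sum_{k}\delta_{rs}\bhm\hmm_{k}\hmm_{r} = \delta_{rs}\bhm\hmM\hmm_{r}$, producing the $2\bhm\hmm_{r}\hmm_{s}$ and part of the $\delta_{rs}\bhm\hmM\hmm_{r}$ in (\ref{eqn.Antrig.sumV}). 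The constant $\chm$ sits inside the first summand, and $\sum_{k}\delta_{krs}\hmm_{k}\chm = \delta_{rs}\chm\hmm_{r}$ supplies the remaining $\delta_{rs}\chm\hmm_{r}$.

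The crux is showing that every $\beta$-dependent contribution cancels. Summing these gives $\hmm_{r}\hmm_{s}\beta_{sr}$ and $\hmm_{r}\hmm_{s}\beta_{rs}$ from the $\delta_{kr}$- and $\delta_{ks}$-terms (evaluated at $k=r$ and $k=s$), together with $\delta_{rs}\hmm_{r}\sum_{q}\hmm_{q}\beta_{rq}$ from the diagonal term (using $\delta_{krs}=\delta_{kr}\delta_{rs}$) and $\delta_{rs}\hmm_{r}\sum_{k}\hmm_{k}\beta_{kr}$ from the $\delta_{rs}$-term. Here I would invoke the antisymmetry $\beta_{ij} = -\beta_{ji}$, which follows from $\beta_{ij} = \coth(\hmc_{i}-\hmc_{j}) = f'''(\hmc_{i}-\hmc_{j})$ together with (\ref{eqn: An trig f minus}), and the vanishing $\beta_{ii}=0$ from (\ref{eqn: An trig beta def}). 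The off-diagonal pair cancels since $\beta_{sr}+\beta_{rs}=0$, while the two diagonal sums combine into $\hmm_{r}\sum_{q}\hmm_{q}(\beta_{rq}+\beta_{qr}) = 0$. Hence all $\beta$-terms disappear and exactly (\ref{eqn.Antrig.sumV}) remains.

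The main obstacle is purely bookkeeping: keeping the Kronecker deltas straight, in particular separating the diagonal case $r=s$ (where the first and fourth summands of $V_{krs}$ become active) from the off-diagonal case, and evaluating the $\delta_{kr}$- and $\delta_{ks}$-type sums at the correct index. No analytic input beyond the oddness of $\coth$ is needed, and I note that this pointwise cancellation of the $\beta$-terms is precisely what makes $\eta$ independent of $\hmc$, consistent with its role as a constant metric.
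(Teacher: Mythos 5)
Your proof is correct and follows essentially the same route as the paper: sum the decomposition $F_{k}=W_{k}+V_{k}$ over $k$, with the $W$-part giving $\ahm\hmM\hmm_{r}\hmm_{s}$ immediately, and the $\beta$-dependent terms in $\sum_{k}V_{krs}$ cancelling by the antisymmetry $\beta_{ij}=-\beta_{ji}$. The paper leaves that final cancellation implicit (``which can be simplified to the form''), so your explicit treatment of the $\beta$-terms simply spells out the same argument.
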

\begin{proof}
    By Lemma \ref{thm: An trig Fk}, we have $\eta=\sum_{k}W_{k}+\sum_{k}V_{k}$.
    It is clear that
    \begin{equation*}
        \sum_{k=1}^{n}W_{krs} = \ahm\hmM\hmm_{r}\hmm_{s},
    \end{equation*}
    Additionally, we have that
    \begin{multline*}
        \sum_{k=1}^{n}V_{krs} = \delta_{rs}\hmm_{r}\left(\sum_{q=1}^{n}\hmm_{q}\beta_{rq} +\chm\right) + \hmm_{r}\hmm_{s}\beta_{sr} + \hmm_{s}\hmm_{r}\beta_{rs} \\
        + \delta_{rs}\hmm_{r}\sum_{k=1}^{n}\hmm_{k}\beta_{kr} + 2\bhm\hmm_{r}\hmm_{s} + \delta_{rs}\bhm\hmM\hmm_{r},
    \end{multline*}
    which can be simplified to the form (\ref{eqn.Antrig.sumV}).
\end{proof}

Note that the matrix $\eta$ can be represented as 
\begin{equation*}
    \eta = A + uv^{\text{T}},
\end{equation*}
where the $n\times n$ matrix $A$ has the entries
\begin{equation}\label{eqn: An trig A def}
    A_{rs} = \delta_{rs}(\bhm\hmM+\chm)\hmm_{r}
\end{equation}
and $u=(u_{1},\dots,u_{n})^{T}$, $v=(v_{1},\dots,v_{n})^{T}$ are $n$-dimensional column vectors such that
\begin{equation*}
    u_{i} = \hmm_{i} \textrm{ and } v_{i} = (\ahm\hmM+2\bhm)\hmm_{i} .
\end{equation*}

\begin{lem}\label{thm: An trig B-1}
    Suppose the matrix $\eta$ from Lemma \ref{thm: An trig B def} is non-singular. Then 
    \begin{equation*}
        \det \eta = \left(\ahm\hmM^{2}+3\bhm\hmM+\chm\right)\left(\bhm\hmM+\chm\right)^{n-1}\prod_{i=1}^{n}\hmm_{i} ,
    \end{equation*}
    and the inverse matrix $\eta^{-1}$ has entries
    \begin{equation}\label{eqn: An trig B-1}
        \eta^{rs} = \frac{\delta_{rs}}{(\bhm\hmM+\chm)\hmm_{r}} - \frac{\ahm\hmM+2\bhm}{(\ahm\hmM^{2}+3\bhm\hmM+\chm)(\bhm\hmM+\chm)} .
    \end{equation}
\end{lem}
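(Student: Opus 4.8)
The plan is to exploit the rank-one structure that the text has already exhibited just before the statement, namely $\eta = A + uv^{\mathrm{T}}$ with $A$ the diagonal matrix $A_{rs}=\delta_{rs}(\bhm\hmM+\chm)\hmm_r$, and $u_i=\hmm_i$, $v_i=(\ahm\hmM+2\bhm)\hmm_i$. Since each $\hmm_i\in\CC^{\times}$, and for $n\geq 2$ non-singularity of $\eta$ forces $\bhm\hmM+\chm\neq 0$ (otherwise $A=0$ and $\eta=uv^{\mathrm T}$ would have rank at most one, hence be singular), the diagonal matrix $A$ is invertible with $(A^{-1})_{rr}=\left((\bhm\hmM+\chm)\hmm_r\right)^{-1}$. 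This reduces both claims to a textbook rank-one update.

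For the determinant I would invoke the matrix determinant lemma $\det(A+uv^{\mathrm T})=\det A\,(1+v^{\mathrm T}A^{-1}u)$. One computes $\det A=(\bhm\hmM+\chm)^{n}\prod_i\hmm_i$, and the scalar $v^{\mathrm T}A^{-1}u=\sum_i\frac{(\ahm\hmM+2\bhm)\hmm_i}{\bhm\hmM+\chm}=\frac{(\ahm\hmM+2\bhm)\hmM}{\bhm\hmM+\chm}$, so that $1+v^{\mathrm T}A^{-1}u=\frac{\ahm\hmM^{2}+3\bhm\hmM+\chm}{\bhm\hmM+\chm}$. Multiplying the two factors, the $(\bhm\hmM+\chm)$ in the denominator cancels one power from $\det A$ and yields exactly the claimed $\det\eta=(\ahm\hmM^{2}+3\bhm\hmM+\chm)(\bhm\hmM+\chm)^{n-1}\prod_i\hmm_i$.

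For the inverse I would apply the Sherman--Morrison formula $\eta^{-1}=A^{-1}-\frac{A^{-1}uv^{\mathrm T}A^{-1}}{1+v^{\mathrm T}A^{-1}u}$, which is legitimate precisely because the scalar $1+v^{\mathrm T}A^{-1}u$ is nonzero, equivalently $\ahm\hmM^{2}+3\bhm\hmM+\chm\neq 0$, again forced by $\det\eta\neq 0$. The key simplification is that $A^{-1}u$ and $(v^{\mathrm T}A^{-1})^{\mathrm T}$ are constant vectors: $(A^{-1}u)_r=(\bhm\hmM+\chm)^{-1}$ and $(v^{\mathrm T}A^{-1})_s=(\ahm\hmM+2\bhm)(\bhm\hmM+\chm)^{-1}$, both independent of the index. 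Hence every entry of the rank-one correction $A^{-1}uv^{\mathrm T}A^{-1}$ equals $\frac{\ahm\hmM+2\bhm}{(\bhm\hmM+\chm)^{2}}$, and dividing by $\frac{\ahm\hmM^{2}+3\bhm\hmM+\chm}{\bhm\hmM+\chm}$ produces the uniform off-diagonal term $\frac{\ahm\hmM+2\bhm}{(\ahm\hmM^{2}+3\bhm\hmM+\chm)(\bhm\hmM+\chm)}$ of \eqref{eqn: An trig B-1}, while $A^{-1}$ supplies the diagonal term $\frac{\delta_{rs}}{(\bhm\hmM+\chm)\hmm_r}$.

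I expect no genuine obstacle here; the entire argument is a rank-one perturbation computation, and the only point needing a word of care is the justification that the hypotheses make both denominators $\bhm\hmM+\chm$ and $\ahm\hmM^{2}+3\bhm\hmM+\chm$ nonzero, which is immediate from the determinant formula together with $\prod_i\hmm_i\neq 0$. As a robust alternative that sidesteps quoting Sherman--Morrison, one could instead multiply the claimed $\eta^{-1}$ against $\eta$ and verify the product equals the identity directly; this reduces to the same two scalar identities and also covers any low-dimensional edge cases.
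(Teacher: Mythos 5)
Your proposal is correct and follows essentially the same route as the paper: the same rank-one decomposition $\eta = A + uv^{\mathrm{T}}$, the Matrix Determinant Lemma for $\det\eta$, and the Sherman--Morrison formula for $\eta^{-1}$, with identical intermediate computations. The only difference is that you explicitly justify the invertibility of $A$ (i.e.\ $\bhm\hmM+\chm\neq 0$, for $n\geq 2$) from the non-singularity of $\eta$, a point the paper leaves implicit when it invokes $A^{-1}$.
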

\begin{proof}
    Since we can write $\eta = A+uv^{\text{T}}$, this implies that
    \begin{equation}\label{eqn: An trig detB}
        \det \eta  = \det (A+uv^{T}).
    \end{equation}
    By applying the Matrix Determinant Lemma (see Thm. 18.1.1 in \cite{H08-MA}), we have
    \begin{equation*}
        \det \eta = (1 + v^{T} A^{-1} u) \det A .
    \end{equation*}
    The matrix $A$, given in (\ref{eqn: An trig A def}), has determinant
    \begin{equation}\label{eqn: An trig detA}
        \det A = \left(\bhm\hmM+\chm\right)^{n}\prod_{i=1}^{n}\hmm_{i} .
    \end{equation}
    As $A$ is a diagonal matrix, it is straightforward to see that $A^{-1}$ exists and has the entries
    \begin{equation}\label{eqn: An trig A-1}
        (A^{-1})_{rs} = \frac{\delta_{rs}}{(\bhm\hmM+\chm)\hmm_{r}} .
    \end{equation}
    We can now compute
    \begin{equation}\label{eqn: An trig SM1}
        1 + v^{T} A^{-1} u = 1 + \sum_{i=1}^{n}\sum_{j=1}^{n} \frac{(\ahm\hmM+2\bhm)\hmm_{i}\delta_{ij}}{(\bhm\hmM+\chm)} = \frac{\ahm\hmM^{2}+3\bhm\hmM+\chm}{\bhm\hmM+\chm} .
    \end{equation}
    It follows from (\ref{eqn: An trig detB}), (\ref{eqn: An trig detA}) and (\ref{eqn: An trig SM1}) that
    \begin{equation*}
        \det \eta = \left(\ahm\hmM^{2}+3\bhm\hmM+\chm\right)\left(\bhm\hmM+\chm\right)^{n-1}\prod_{i=1}^{n}\hmm_{i} .
    \end{equation*}
    
    We have $1+v^{T}A^{-1}u\neq 0$, since this is equivalent to the condition that $\ahm\hmM^{2}+3\bhm\hmM+\chm \neq 0$, which is satisfied as $\eta$ is invertible. It is then easy to check (cf. the Sherman-Morrison formula, see e.g. \cite{H89-SM}) that
    \begin{equation}\label{eqn: An trig SM}
        \eta^{-1} = A^{-1} - \frac{A^{-1}uv^{T}A^{-1}}{1+v^{T}A^{-1}u} .
    \end{equation}
    We have
    \begin{equation}\label{eqn: An trig SM2}
    \begin{aligned}
        \left(A^{-1}uv^{T}A^{-1}\right)_{rs} & = \frac{\ahm\hmM+2\bhm}{(\bhm\hmM+\chm)^{2}}\sum_{i=1}^{n}\sum_{j=1}^{n}\frac{\delta_{ri}\delta_{js}}{\hmm_{r}\hmm_{s}}\hmm_{i}\hmm_{j} \\
        & = \frac{\ahm\hmM+2\bhm}{(\bhm\hmM+\chm)^{2}} ,
    \end{aligned}
    \end{equation}
    and then formulas (\ref{eqn: An trig A-1})--(\ref{eqn: An trig SM2}) imply (\ref{eqn: An trig B-1}).
\end{proof}

For the WDVV equations (\ref{eqn.intro.WDVV}) to hold we need
\begin{equation}\label{eqn: An trig WDVV eq}
    F_{i}\eta^{-1}F_{j} - F_{j}\eta^{-1}F_{i} = 0.
\end{equation}
From Lemma \ref{thm: An trig B-1}, we have for non-singular $\eta$
\begin{align}
    \left(F_{i}\eta^{-1}F_{j}\right)_{rs} & = \sum_{k=1}^{n}\sum_{l=1}^{n}\left[F_{irk}\left(\kappa + \frac{\delta_{kl}}{(\bhm\hmM+\chm)\hmm_{k}}\right)F_{jls}\right] \nonumber\\ 
    & = \kappa\sum_{k=1}^{n}F_{irk}\sum_{l=1}^{n}F_{jls} + \frac{1}{\bhm\hmM+\chm}\sum_{k=1}^{n}\frac{1}{\hmm_{k}}F_{irk}F_{jks} \nonumber\\
    & = \kappa \, \eta_{ri} \eta_{js} + \frac{1}{\bhm\hmM+\chm}\sum_{k=1}^{n}\frac{1}{\hmm_{k}}\Big[W_{irk}W_{jks} + W_{irk}V_{jks} \label{eqn: An trig FBF}\\
    & \hspace{17em}+ V_{irk}W_{jks} + V_{irk}V_{jks}\Big] , \nonumber
\end{align}
where
\begin{equation*}
    \kappa = - \frac{\ahm\hmM+2\bhm}{(\ahm\hmM^{2}+3\bhm\hmM+\chm)(\bhm\hmM+\chm)} .
\end{equation*}
We will consider contributions to (\ref{eqn: An trig WDVV eq}) from each of the five terms in (\ref{eqn: An trig FBF}) individually.

For a four-index expression $M_{ijkl}$, we denote by $M_{[ij]kl}$ the anti-symmetrisation \newline $M_{[ij]kl}=M_{ijkl}-M_{jikl}$.

\begin{lem}\label{thm: An trig FBF1}
We have
\begin{multline*}
    \left(F_{i}\eta^{-1}F_{j}-F_{j}\eta^{-1}F_{i}\right)_{rs} = \lambda_{[ij]rs} + \kappa(\bhm\hmM+\chm)^{2}\hmm_{r}\hmm_{s}(\delta_{ir}\delta_{js}-\delta_{jr}\delta_{is}) \\ + \big(\kappa(\ahm\hmM+2\bhm)(\bhm\hmM+\chm)+\ahm\big)\hmm_{r}\hmm_{s}\big(\hmm_{j}(\delta_{ir}-\delta_{is}) +\hmm_{i}(\delta_{js}-\delta_{jr})\big) ,
\end{multline*}
where
\begin{equation}\label{eqn.Antrig.VVdef}
    \lambda_{ijrs} = \frac{1}{\bhm\hmM+\chm}\sum_{k=1}^{n}\frac{1}{\hmm_{k}}\left(V_{i}\right)_{rk}\left(V_{j}\right)_{ks} .
\end{equation}
\end{lem}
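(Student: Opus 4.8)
The plan is to substitute the decomposition $F_k = W_k + V_k$ from Lemma \ref{thm: An trig Fk} into the bracketed sum in (\ref{eqn: An trig FBF}), producing four pieces $WW$, $WV$, $VW$, $VV$, together with the rank-one contribution $\kappa\,\eta_{ri}\eta_{js}$. Since $\kappa\,\eta_{ri}\eta_{js}$ is manifestly symmetric in the index pair $(i,j)\leftrightarrow(r,s)$ in the right way, its antisymmetrisation over $[ij]$ must be computed directly from $\eta_{ri}=\eta_{ir}$; I will keep it explicit because it produces one of the stated terms. The target is to collect everything into the closed form displayed in the lemma, where all the genuinely non-polynomial ($\beta_{ij}$-dependent) contributions are packaged into $\lambda_{[ij]rs}$, and the remaining terms are polynomial in the $\hmm$'s and Kronecker deltas.

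First I would evaluate the three ``mixed'' contractions. For the $WW$ term, using $W_{irk}=\ahm\hmm_i\hmm_r\hmm_k$ and $W_{jks}=\ahm\hmm_j\hmm_k\hmm_s$, the internal sum $\sum_k \hmm_k^{-1}W_{irk}W_{jks}=\ahm^2\hmm_i\hmm_j\hmm_r\hmm_s\sum_k\hmm_k=\ahm^2\hmM\,\hmm_i\hmm_j\hmm_r\hmm_s$, which is symmetric under $i\leftrightarrow j$ and so vanishes under $[ij]$-antisymmetrisation; this is the reason no pure-$W$ term survives. The cross terms $WV$ and $VW$ I would compute using the explicit $V_{krs}$ from Lemma \ref{thm: An trig Fk}: contracting $W_{irk}=\ahm\hmm_i\hmm_r\hmm_k$ against $V_{jks}$, the factor $\hmm_k$ cancels the $\hmm_k^{-1}$ and the $\hmm_k$ inside $V$ collapses the various Kronecker deltas and, crucially, the $\beta$-terms vanish because $\sum_k\hmm_k\beta_{kj}$ paired with the symmetric $W$ prefactor antisymmetrises away (one uses $\beta_{kj}=-\beta_{jk}$ and the index structure). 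I expect the $WV+VW$ contributions to reduce, after $[ij]$-antisymmetrisation, precisely to the term proportional to $\big(\kappa(\ahm\hmM+2\bhm)(\bhm\hmM+\chm)+\ahm\big)$, the coefficient arising from combining the $\ahm$ from the $W$ prefactor with the $\sum_k V_k$ evaluation (\ref{eqn.Antrig.sumV}) and the $\kappa$-weighted $\eta\eta$ piece.

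The $VV$ term I would simply \emph{name}: set $\lambda_{ijrs}=\frac{1}{\bhm\hmM+\chm}\sum_k\hmm_k^{-1}(V_i)_{rk}(V_j)_{ks}$ as in (\ref{eqn.Antrig.VVdef}), so that its antisymmetrisation appears verbatim as $\lambda_{[ij]rs}$ in the statement; the detailed expansion of this genuinely trigonometric object is deferred to the subsequent lemmas. The main obstacle is the bookkeeping in the cross terms: one must carefully separate the $\delta$-supported ``diagonal'' pieces of $V_{jks}$ (which survive to give the $\kappa(\bhm\hmM+\chm)^2\hmm_r\hmm_s(\delta_{ir}\delta_{js}-\delta_{jr}\delta_{is})$ contribution, traceable to the $\delta_{rs}(\bhm\hmM+\chm)\hmm_r$ diagonal part of $\eta$) from the rank-one ``$\hmm_r\hmm_s$'' pieces, and verify that every $\beta$-dependent cross contribution cancels under $i\leftrightarrow j$ so that no trigonometric terms leak outside $\lambda_{[ij]rs}$. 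Matching the resulting scalar coefficients against the stated closed form, using the explicit value of $\kappa$, is then a finite check.
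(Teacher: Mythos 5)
Your proposal follows essentially the same route as the paper's proof: decompose $F_k=W_k+V_k$, antisymmetrise the $\kappa\,\eta_{ri}\eta_{js}$ piece directly, kill the $WW$ term by its $i\leftrightarrow j$ symmetry, reduce the $WV$/$VW$ cross terms via the total symmetry of $V$ together with the evaluation of $\sum_k V_{krs}$ in (\ref{eqn.Antrig.sumV}) (where the antisymmetry $\beta_{kj}=-\beta_{jk}$ cancels all trigonometric parts), and package the $VV$ contraction verbatim as $\lambda_{[ij]rs}$. The only slip is cosmetic: the term $\kappa(\bhm\hmM+\chm)^{2}\hmm_{r}\hmm_{s}(\delta_{ir}\delta_{js}-\delta_{jr}\delta_{is})$ comes entirely from the $\eta\eta$ piece rather than from the cross terms, as your own parenthetical (``traceable to the diagonal part of $\eta$'') correctly indicates.
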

\begin{proof}
    We start with the first term in (\ref{eqn: An trig FBF}). From Lemma \ref{thm: An trig B def} we have
    \begin{multline*}
        \eta_{ri}\eta_{js}  = (\ahm\hmM+2\bhm)^{2}\hmm_{r}\hmm_{s}\hmm_{i}\hmm_{j} + \delta_{ir}(\bhm\hmM+\chm)(\ahm\hmM+2\bhm)\hmm_{r}\hmm_{j}\hmm_{s} \\
         + \delta_{js}(\bhm\hmM+\chm)(\ahm\hmM+2\bhm)\hmm_{s}\hmm_{r}\hmm_{i} + \delta_{ir}\delta_{js}(\bhm\hmM+\chm)^{2}\hmm_{r}\hmm_{s},
    \end{multline*}
    so the antisymmetrisation becomes
    \begin{multline}\label{eqn: An trig BB}
        \eta_{ri} \eta_{js} - \eta_{rj} \eta_{is} = (\bhm\hmM+\chm)^{2}\hmm_{r}\hmm_{s}(\delta_{ir}\delta_{js}-\delta_{jr}\delta_{is})\\
        + (\ahm\hmM+2\bhm)(\bhm\hmM+\chm)\hmm_{r}\hmm_{s}\big(\hmm_{j}(\delta_{ir}-\delta_{is}) +\hmm_{i}(\delta_{js}-\delta_{jr})\big) .
    \end{multline}
    Next, we use Lemma \ref{thm: An trig Fk} to consider the other terms in (\ref{eqn: An trig FBF}). We have
    \begin{equation*}
        \sum_{k=1}^{n}\frac{1}{\hmm_{k}}W_{irk}W_{jks} = \ahm^{2}\hmm_{i}\hmm_{r}\hmm_{j}\hmm_{s}\hmM , 
    \end{equation*}
    which leads to
    \begin{equation}\label{eqn: An trig WW}
        \sum_{k=1}^{n}\frac{1}{\hmm_{k}}\big[W_{irk}W_{jks} - W_{jrk}W_{iks}\big] = 0.
    \end{equation}
    We continue with
    \begin{equation*}
        \sum_{k=1}^{n}\frac{1}{\hmm_{k}}W_{irk}V_{jks} = \ahm\hmm_{i}\hmm_{r}\sum_{k=1}^{n}V_{kjs} ,
    \end{equation*}
    since it follows from Lemma \ref{thm: An trig Fk} that $V_{kjs}$ is symmetric in $k,j,$ and $s$. By Lemma \ref{thm: An trig B def},
    \begin{align*}
        \sum_{k=1}^{n}\frac{1}{\hmm_{k}}W_{irk}V_{jks} & = \ahm\hmm_{i}\hmm_{r}\big(2\bhm\hmm_{j}\hmm_{s} + \delta_{js}(\bhm\hmM+\chm)\hmm_{j}\big) \\
        & = 2\ahm\bhm\hmm_{i}\hmm_{j}\hmm_{r}\hmm_{s} + \delta_{js}\ahm(\bhm\hmM+\chm)\hmm_{i}\hmm_{j}\hmm_{r} ,
    \end{align*}
    and so we find
    \begin{equation}\label{eqn: An trig WV}
        \sum_{k=1}^{n}\frac{1}{\hmm_{k}}\big[W_{irk}V_{jks} - W_{jrk}V_{iks}\big] = \ahm(\bhm\hmM+\chm)\hmm_{r}\hmm_{s}\left(\delta_{js}\hmm_{i} - \delta_{is}\hmm_{j}\right) .
    \end{equation}
    By the symmetry of $V_{ijk}$ and $W_{ijk}$, we similarly obtain
    \begin{equation}\label{eqn: An trig VW}
        \sum_{k=1}^{n}\frac{1}{\hmm_{k}}\big[V_{irk}W_{jks} - V_{jrk}W_{iks}\big] = \ahm(\bhm\hmM+\chm)\hmm_{r}\hmm_{s}(\delta_{ir}\hmm_{j}-\delta_{jr}\hmm_{i}).
    \end{equation}
    The antisymmetrisation of (\ref{eqn: An trig FBF}) is then found by combining (\ref{eqn: An trig BB}), (\ref{eqn: An trig WW}), (\ref{eqn: An trig WV}), and (\ref{eqn: An trig VW}) with $\lambda_{[ij]rs}$ to represent the remaining terms.
\end{proof}

To find the antisymmetrisation of $\lambda_{ijrs}$, given in (\ref{eqn.Antrig.VVdef}), we now separately consider terms in the expansion of this expression which are (without the use of any identities) quadratic, linear and constant in $\beta_{pq}$. It will turn out by the use of various identities that all terms in $\lambda_{[ij]rs}$ are constant.

\begin{lem}\label{thm: An trig VV}
    We have
    \begin{multline*}
        \lambda_{[ij]rs} = \frac{\hmm_{r}\hmm_{s}}{\bhm\hmM+\chm}\bigg(\left(\hmM\left(\bhm^{2}+1\right)+2\bhm\chm\right)(\delta_{ir}\delta_{js}-\delta_{jr}\delta_{is}) \\
        + \left(1-\bhm^{2}\right)\hmm_{i}(\delta_{jr}-\delta_{js}) + \left(1-\bhm^{2}\right)\hmm_{j}(\delta_{is}-\delta_{ir})
        \bigg) .
    \end{multline*}
\end{lem}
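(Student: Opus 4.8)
The plan is to substitute the explicit entries $V_{krs}$ from Lemma~\ref{thm: An trig Fk} into the definition \eqref{eqn.Antrig.VVdef} of $\lambda_{ijrs}$, expand the product $V_{irk}V_{jks}$, carry out the sum over $k$, and antisymmetrise in $i,j$ at the end. Each factor $V_{irk}$ splits into a part linear in the $\beta$'s and a part free of $\beta$'s (the $c$-diagonal entry together with the three $b$-terms), so the product organises into contributions that are quadratic, linear and constant in the $\beta$'s; I would handle these three groups in turn, as anticipated in the paragraph before the statement. A useful preliminary observation is that any contribution which happens to be symmetric under $i\leftrightarrow j$ drops out of $\lambda_{[ij]rs}$, which disposes of a large number of terms immediately.

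For the constant group the Kronecker deltas collapse the $k$-sum directly, and after antisymmetrising these produce the $\delta_{ir}\delta_{js}-\delta_{jr}\delta_{is}$ term and part of the $m_i,m_j$ terms, carrying the $b^2$, $bc$ and $M$ factors that appear in the statement. The only genuinely nontrivial contributions come from the quadratic group in which the summed index $k$ sits between two $\beta$'s, typically of the form $\sum_{k} m_{k}\,\beta_{ki}\beta_{kj}$.

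The decisive tool here is the three-term cotangent identity
\begin{equation*}
    \beta_{pq}\beta_{qr} + \beta_{qr}\beta_{rp} + \beta_{rp}\beta_{pq} = -1 \qquad (p,q,r \text{ distinct}),
\end{equation*}
valid because $(y_p-y_q)+(y_q-y_r)+(y_r-y_p)=0$ and $f'''(z)=\coth z$. Together with $\beta_{pq}=-\beta_{qp}$ and $\beta_{pp}=0$ it converts $\beta_{ki}\beta_{kj}$ into a constant plus a term $\beta_{ij}(\beta_{ki}+\beta_{jk})$ that is linear in the summed $\beta$'s; summing against $m_k$ then turns the quadratic piece into a multiple of $M$ and leaves a single-$\beta$ remainder. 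I would then gather all linear-in-$\beta$ pieces, both those present from the start and those generated by the identity, and verify that they cancel on antisymmetrising in $i,j$.

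The main obstacle is exactly this final cancellation, since the lemma asserts that $\lambda_{[ij]rs}$ carries no residual $\beta$-dependence. The delicate bookkeeping is in the summed terms $\sum_k m_k\beta_{ki}\beta_{kj}$: the three-term identity must be applied inside the sum while accounting for the vanishing of $\beta_{ki}\beta_{kj}$ at $k=i$ and $k=j$, where the rewritten constant-plus-linear form is not literally valid, so the effective constant picks up corrections in $m_i$ and $m_j$. Once these corrections and the linear remainders are shown to combine to zero under antisymmetrisation, collecting the surviving constants and rewriting them with $M=\sum_i m_i$ yields the stated formula for $\lambda_{[ij]rs}$.
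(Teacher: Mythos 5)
Your proposal follows essentially the same route as the paper's proof: the same decomposition of $\lambda_{[ij]rs}$ into pieces quadratic, linear and constant in the $\beta_{pq}$, the same three-term hyperbolic identity (your cyclic form $\beta_{pq}\beta_{qr}+\beta_{qr}\beta_{rp}+\beta_{rp}\beta_{pq}=-1$ is equivalent to the paper's $\beta_{ij}\beta_{ik}+\beta_{ij}\beta_{kj}+\beta_{ik}\beta_{jk}=1$ after flipping signs with $\beta_{pq}=-\beta_{qp}$), and the same final cancellation of all residual $\beta$-dependence under antisymmetrisation, including the boundary corrections at $k=i,j$ that you correctly flag. The only difference is bookkeeping: the paper groups the quadratic terms into triples annihilated by the identity and cancels the leftover $\beta_{ij}^{2}$ terms across its three groups $\Tilde{A}^{(1)},\Tilde{A}^{(2)},\Tilde{A}^{(3)}$, whereas you rewrite each middle-index product individually and track the remainders, which amounts to the same computation organised slightly differently.
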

\begin{proof}
    Let $A_{[ij]rs}$ be the terms in $\lambda_{[ij]rs}$ which are quadratic in $\beta_{pq}$. That is,
    \begin{multline*}
        A_{ijrs} = \sum_{k=1}^{n}\hmm_{k}\left(\delta_{kir}\sum_{q=1}^{n}\hmm_{q}\beta_{kq} + \delta_{ki}\hmm_{r}\beta_{rk} + \delta_{kr}\hmm_{i}\beta_{ik} + \delta_{ir}\hmm_{i}\beta_{ki}\right) \\
        \times\left(\delta_{kjs}\sum_{q=1}^{n}\hmm_{q}\beta_{kq} + \delta_{kj}\hmm_{s}\beta_{sk} + \delta_{ks}\hmm_{j}\beta_{jk} + \delta_{js}\hmm_{j}\beta_{kj}\right),
    \end{multline*}
    where we have used the expression for $\lambda_{ijrs}$ given in (\ref{eqn.Antrig.VVdef}), and for $V_{q}$ given in Lemma \ref{thm: An trig Fk}.
    Terms in $A_{ijrs}$ with a factor of $\delta_{ij}$ will cancel after antisymmetrisation in $i,j$, so we find
   \begin{equation*}
       A_{[ij]rs} = \Tilde{A}_{[ij]rs},
   \end{equation*}
   where
   \begin{align*}
       \Tilde{A}_{ijrs} &= \delta_{irs}\hmm_{i}\hmm_{j}\beta_{ji}\left(\sum_{q=1}^{n}\hmm_{q}\beta_{iq}\right) + \delta_{ir}\delta_{js}\hmm_{i}\hmm_{j}\beta_{ij}\left(\sum_{q=1}^{n}\hmm_{q}\beta_{iq}\right) \\
       & \;\;\; + \delta_{is}\hmm_{i}\hmm_{j}\hmm_{r}\beta_{ri}\beta_{ji} + \delta_{js}\hmm_{i}\hmm_{j}\hmm_{r}\beta_{ri}\beta_{ij} + \delta_{jrs}\hmm_{i}\hmm_{j}\beta_{ij}\left(\sum_{q=1}^{n}\hmm_{q}\beta_{jq}\right) \\
       & \;\;\; + \delta_{jr}\hmm_{i}\hmm_{j}\hmm_{s}\beta_{ij}\beta_{sj} + \delta_{rs}\hmm_{i}\hmm_{j}\hmm_{r}\beta_{ir}\beta_{jr} + \delta_{js}\hmm_{i}\hmm_{j}\hmm_{r}\beta_{ir}\beta_{rj} \\
       & \;\;\; + \delta_{ir}\delta_{js}\hmm_{i}\hmm_{j}\beta_{ji}\left(\sum_{q=1}^{n}\hmm_{q}\beta_{jq}\right) + \delta_{ir}\hmm_{i}\hmm_{j}\hmm_{s}\beta_{ji}\beta_{sj} \\
       & \;\;\; + \delta_{ir}\hmm_{i}\hmm_{j}\hmm_{s}\beta_{si}\beta_{js} + \delta_{ir}\delta_{js}\hmm_{i}\hmm_{j}\left(\sum_{k=1}^{n}\hmm_{k}\beta_{ki}\beta_{kj}\right) .
   \end{align*}
   Note that the antisymmetrisation of
   \begin{equation*}
       \delta_{irs}\hmm_{i}\hmm_{j}\beta_{ji}\left(\sum_{q=1}^{n}\hmm_{q}\beta_{iq}\right) + \delta_{jrs}\hmm_{i}\hmm_{j}\beta_{ij}\left(\sum_{q=1}^{n}\hmm_{q}\beta_{jq}\right) + \delta_{rs}\hmm_{i}\hmm_{j}\hmm_{r}\beta_{ir}\beta_{jr}
   \end{equation*}
   is equal to zero. Hence,
    \begin{equation}\label{eqn: An trig VVq1}
    A_{[ij]rs}=\Tilde{A}_{[ij]rs}=\Tilde{A}_{[ij]rs}^{(1)}+\Tilde{A}_{[ij]rs}^{(2)}+\Tilde{A}_{[ij]rs}^{(3)} ,
    \end{equation}
    where
    \begin{align*}
        \Tilde{A}_{ijrs}^{(1)} & = \delta_{ir}\delta_{js}\hmm_{i}\hmm_{j}\sum_{q=1}^{n}\hmm_{q}\left(\beta_{ij}\beta_{iq}+\beta_{ij}\beta_{qj}+\beta_{iq}\beta_{jq}\right) , \\
        \Tilde{A}_{ijrs}^{(2)} & = \delta_{jr}\hmm_{i}\hmm_{j}\hmm_{s}\left(\beta_{ij}\beta_{sj}+\beta_{ij}\beta_{is}+\beta_{is}\beta_{js}\right) , \\ 
        \intertext{and} \Tilde{A}_{ijrs}^{(3)} &= \delta_{is}\hmm_{i}\hmm_{j}\hmm_{r}\left(\beta_{ir}\beta_{ij}+\beta_{rj}\beta_{ij}+\beta_{ir}\beta_{jr}\right) .
    \end{align*}
    We now apply the following identity:
    \begin{equation*}
        \beta_{ij}\beta_{ik}+\beta_{ij}\beta_{kj}+\beta_{ik}\beta_{jk}=1,
    \end{equation*}
    where $i,j,k$ are distinct from each other. We get
    \begin{multline*}
        \Tilde{A}_{[ij]rs}^{(1)} = (1-\delta_{ij})(\delta_{ir}\delta_{js}-\delta_{jr}\delta_{is})\hmm_{i}\hmm_{j}\left(\hmM-\hmm_{i}-\hmm_{j}\right) \\
         + (1-\delta_{ij})(\delta_{ir}\delta_{js}-\delta_{jr}\delta_{is})\hmm_{i}\hmm_{j}(\hmm_{i}+\hmm_{j})\beta_{ij}^{2} \\
         + \delta_{ij}(\delta_{ir}\delta_{js}-\delta_{jr}\delta_{is})\hmm_{i}^{2}\left(\sum_{q=1}^{n}\hmm_{q}\beta_{iq}^{2}\right) ,
    \end{multline*}
    and so
    \begin{equation}\label{eqn: An trig VVqA1}
        \Tilde{A}_{[ij]rs}^{(1)} = (\delta_{ir}\delta_{js}-\delta_{jr}\delta_{is})\hmm_{i}\hmm_{j}\left(\hmM-\hmm_{i}-\hmm_{j}+(\hmm_{i}+\hmm_{j})\beta_{ij}^{2}\right)
    \end{equation}
    since $\delta_{ij}(\delta_{ir}\delta_{js}-\delta_{jr}\delta_{is})=0$.
    Furthermore,
    \begin{multline*}
        \Tilde{A}_{ijrs}^{(2)} = \delta_{jr}\delta_{ij}(1-\delta_{is})\hmm_{i}^{2}\hmm_{s}\beta_{is}^{2} + \delta_{jr}\delta_{is}(1-\delta_{ij})\hmm_{i}^{2}\hmm_{j}\beta_{ij}^{2} \\
        + \delta_{jr}\delta_{js}(1-\delta_{ij})\hmm_{i}\hmm_{j}^{2}\beta_{ij}^{2} + \delta_{jr}(1-\delta_{ij})(1-\delta_{is})(1-\delta_{js})\hmm_{i}\hmm_{j}\hmm_{s} ,
    \end{multline*}
    which becomes
    \begin{multline}\label{eqn: An trig VVqA2}
        \Tilde{A}_{ijrs}^{(2)} = \delta_{ijr}\hmm_{i}^{2}\hmm_{s}\beta_{is}^{2} + \delta_{jr}\delta_{is}\hmm_{i}^{2}\hmm_{j}\beta_{ij}^{2} + \delta_{jrs}\hmm_{i}\hmm_{j}^{2}\beta_{ij}^{2} \\
        + \delta_{jr}(1-\delta_{ij})(1-\delta_{is})(1-\delta_{js})\hmm_{i}\hmm_{j}\hmm_{s}.
    \end{multline}
    Similarly,
    \begin{multline}\label{eqn: An trig VVqA3}
        \Tilde{A}_{ijrs}^{(3)} = \delta_{ijs}\hmm_{j}^{2}\hmm_{r}\beta_{jr}^{2} + \delta_{is}\delta_{jr}\hmm_{j}^{2}\hmm_{i}\beta_{ji}^{2} + \delta_{irs}\hmm_{j}\hmm_{i}^{2}\beta_{ji}^{2} \\
         + \delta_{is}(1-\delta_{ij})(1-\delta_{jr})(1-\delta_{ir})\hmm_{i}\hmm_{j}\hmm_{r}.
    \end{multline}
    Note that the first term in the right-hand side of each of (\ref{eqn: An trig VVqA2}) and (\ref{eqn: An trig VVqA3}) becomes zero after antisymmetrisation over $i$ and $j$, and that the second term in each of $\Tilde{A}_{ijrs}^{(2)}$, $-\Tilde{A}_{jirs}^{(2)}$, $\Tilde{A}_{ijrs}^{(3)}$, and $-\Tilde{A}_{jirs}^{(3)}$ cancels with the $\beta_{ij}^{2}$ terms in (\ref{eqn: An trig VVqA1}). Moreover, the third term in the right-hand side of (\ref{eqn: An trig VVqA2}) cancels with the third term in (\ref{eqn: An trig VVqA3}) after antisymmetrisation. Therefore
    \begin{equation}\label{eqn: An trig VVq2}
        \Tilde{A}_{[ij]rs}^{(1)}+\Tilde{A}_{[ij]rs}^{(2)}+\Tilde{A}_{[ij]rs}^{(3)} = (\delta_{ir}\delta_{js}-\delta_{jr}\delta_{is})\hmm_{i}\hmm_{j}\left(\hmM-\hmm_{i}-\hmm_{j}\right) + \alpha_{[ij]rs} ,
    \end{equation}
    where
    \begin{multline*}
        \alpha_{ijrs} = \delta_{jr}(1-\delta_{ir})(1-\delta_{is})(1-\delta_{rs})\hmm_{i}\hmm_{r}\hmm_{s} \\
        + \delta_{is}(1-\delta_{jr})(1-\delta_{js})(1-\delta_{rs})\hmm_{j}\hmm_{r}\hmm_{s}.
    \end{multline*}
    Note that
    \begin{multline}\label{eqn: An trig VVq3}
        \alpha_{[ij]rs} = \left(\delta_{jr}-\delta_{js}\right)(1-\delta_{ir})(1-\delta_{is})\hmm_{i}\hmm_{r}\hmm_{s} \\
        + \left(\delta_{is}-\delta_{ir}\right)(1-\delta_{jr})(1-\delta_{js})\hmm_{j}\hmm_{r}\hmm_{s} .
    \end{multline}
    It then follows from formulas (\ref{eqn: An trig VVq1}), (\ref{eqn: An trig VVq2}) and (\ref{eqn: An trig VVq3}) that
    \begin{equation}\label{eqn.Antrig.VVA}
        A_{[ij]rs} = \hmM\hmm_{r}\hmm_{s}(\delta_{ir}\delta_{js}-\delta_{jr}\delta_{is})
        + \hmm_{r}\hmm_{s}\big(\hmm_{i}(\delta_{jr}-\delta_{js}) + \hmm_{j}(\delta_{is}-\delta_{ir})\big).
    \end{equation}
    \par
    We now consider those terms in $\lambda_{[ij]rs}$ which are linear in $\beta_{pq}$. By Lemmas \ref{thm: An trig FBF1} and \ref{thm: An trig Fk} the sum of these terms is given by $B_{[ij]rs}$, where
    \begin{multline*}
        B_{ijrs} = \sum_{k=1}^{n}\hmm_{k}\left(\delta_{kir}\sum_{q=1}^{n}\hmm_{q}\beta_{kq} + \delta_{ki}\hmm_{r}\beta_{rk} + \delta_{kr}\hmm_{i}\beta_{ik} + \delta_{ir}\hmm_{i}\beta_{ki}\right) \\
        \times \left(\delta_{kjs}\chm + \delta_{kj}\bhm\hmm_{s}+\delta_{ks}\bhm\hmm_{j}+\delta_{js}\bhm\hmm_{j}\right) \\
        + \sum_{k=1}^{n}\hmm_{k}\left(\delta_{kjs}\sum_{q=1}^{n}\hmm_{q}\beta_{kq} + \delta_{kj}\hmm_{s}\beta_{sk} + \delta_{ks}\hmm_{j}\beta_{jk} + \delta_{js}\hmm_{j}\beta_{kj}\right) \\
        \times\left(\delta_{kir}\chm + \delta_{ki}\bhm\hmm_{r}+\delta_{kr}\bhm\hmm_{i}+\delta_{ir}\bhm\hmm_{i}\right) .
    \end{multline*}

    By expanding, and cancelling terms which are symmetric in $i$ and $j$ or in $r$ and $s$, we see that
    \begin{equation*}
        B_{[ij]rs} = \Tilde{B}_{[ij][rs]},
    \end{equation*}
    where
    \begin{equation*}
        \Tilde{B}_{ijrs} = \hmm_{i}\hmm_{j}\Big(\delta_{is}\bhm\hmm_{r}\beta_{ri} + \delta_{jr}\bhm\hmm_{s}\beta_{ir} +\delta_{ir}\delta_{js}\chm\beta_{ji}+\delta_{ir}\bhm\hmm_{s}\beta_{ji}+\delta_{ir}\bhm\hmm_{s}\beta_{si}\Big),
    \end{equation*}
    and
    \begin{equation*}
        \Tilde{B}_{[ij][rs]} = \Tilde{B}_{[ij]rs} - \Tilde{B}_{[ij]sr} .
    \end{equation*}
    Note that the term $\delta_{ir}\delta_{js}\chm\hmm_{i}\hmm_{j}\beta_{ji}$ in $\Tilde{B}_{ijrs}$ cancels with the corresponding term in $\Tilde{B}_{jisr}$, and hence terms proportional to $\chm$ vanish in $\Tilde{B}_{[ij][rs]}$. The terms
    \begin{equation*}
        \delta_{is}\bhm\hmm_{i}\hmm_{j}\hmm_{r}\beta_{ri}+\delta_{ir}\bhm\hmm_{i}\hmm_{j}\hmm_{s}\beta_{si} = \bhm\hmm_{i}\hmm_{j}\beta_{rs}\left(\delta_{is}\hmm_{r}-\delta_{ir}\hmm_{s}\right)
    \end{equation*}
    are together symmetric under the swap of $r$ and $s$, hence the corresponding terms cancel in $\Tilde{B}_{ij[rs]}$. Finally, the sum of the remaining two terms is symmetric under the swap of $i$ and $j$, hence the corresponding terms cancel in $\Tilde{B}_{[ij]rs}$. Overall, we get that 
    \begin{equation}\label{eqn.Antrig.VVB}
        B_{[ij]rs}=\Tilde{B}_{[ij][rs]}=0 .
    \end{equation}
    \par
    Finally, we consider the contribution to $\lambda_{[ij]rs}$ from constant terms, denoted $C_{[ij]rs}$ with
    \begin{multline*}
        C_{ijrs} = \sum_{k=1}^{n}\hmm_{k}\left(\delta_{kir}\chm + \delta_{ki}\bhm\hmm_{r}+\delta_{kr}\bhm\hmm_{i}+\delta_{ir}\bhm\hmm_{i}\right)  \\
        \times \left(\delta_{kjs}\chm + \delta_{kj}\bhm\hmm_{s}+\delta_{ks}\bhm\hmm_{j}+\delta_{js}\bhm\hmm_{j}\right) .
    \end{multline*}
    We can again omit terms symmetric in $i$ and $j$ to find $C_{[ij]rs} = \Tilde{C}_{[ij]rs}$, where
    \begin{multline*}
        \Tilde{C}_{ijrs} = \bhm\hmm_{i}\hmm_{j}\big(\delta_{irs}\chm + \delta_{ir}\delta_{js}\left(2\chm+\bhm\hmM\right) + \delta_{is}\bhm\hmm_{r} + 2\delta_{js}\bhm\hmm_{r} + \delta_{jrs}\chm + \delta_{jr}\bhm\hmm_{s} \\
        + \delta_{rs}\bhm\hmm_{r} + 2\delta_{ir}\bhm\hmm_{s}\big).
    \end{multline*}
    Note that the sum of the terms
    \begin{equation*}
        \bhm\hmm_{i}\hmm_{j}\left(\delta_{irs}\chm + \delta_{jrs}\chm + \delta_{is}\bhm\hmm_{r} + \delta_{js}\bhm\hmm_{r} + \delta_{jr}\bhm\hmm_{s} + \delta_{ir}\bhm\hmm_{s} + \delta_{rs}\bhm\hmm_{r}\right)
    \end{equation*}
    is symmetric in $i$ and $j$, hence vanishes after the antisymmetrisation. 
    Therefore
    \begin{multline}\label{eqn.Antrig.VVC}
        C_{[ij]rs} = \Tilde{C}_{[ij]rs} = \bhm\hmm_{r}\hmm_{s}(\bhm\hmM+2\chm)(\delta_{ir}\delta_{js}-\delta_{jr}\delta_{is}) \\
        + \bhm^{2}\hmm_{j}\hmm_{r}\hmm_{s}(\delta_{ir}-\delta_{is}) + \bhm^{2}\hmm_{i}\hmm_{r}\hmm_{s}(\delta_{js}-\delta_{jr}) .
    \end{multline}
    \par
    To finish the proof, we combine expressions (\ref{eqn.Antrig.VVA})--(\ref{eqn.Antrig.VVC}) to obtain the required statement for
    \begin{equation*}
        \lambda_{[ij]rs} = \frac{1}{\bhm\hmM+\chm}\left(A_{[ij]rs} + B_{[ij]rs} + C_{[ij]rs}\right) .
    \end{equation*}
\end{proof}

We are now ready to prove Theorem \ref{thm: An trig}.
\begin{proof}[Proof of Theorem \ref{thm: An trig}]
    By Lemmas \ref{thm: An trig FBF1} and \ref{thm: An trig VV}
    \begin{multline}\label{eqn.Antrig.FBF0}
        \left(F_{i}\eta^{-1}F_{j}\right)_{rs}-\left(F_{j}\eta^{-1}F_{i}\right)_{rs} = \\
        \hmm_{r}\hmm_{s}(\delta_{jr}\delta_{is}-\delta_{ir}\delta_{js})\bigg(\frac{(\ahm\hmM+2\bhm)(\bhm\hmM+\chm)}{\ahm\hmM^{2}+3\bhm\hmM+\chm} - \frac{\hmM+\bhm^{2}\hmM+2\bhm\chm}{\bhm\hmM+\chm}\bigg) \\
        + \hmm_{r}\hmm_{s}\big(\hmm_{j}\left(\delta_{is}-\delta_{ir}\right)+\hmm_{i}(\delta_{jr}-\delta_{js})\big)\bigg(\frac{(\ahm\hmM+2\bhm)^{2}}{\ahm\hmM^{2}+3\bhm\hmM+\chm} - \ahm - \frac{\bhm^{2}-1}{\bhm\hmM+\chm}\bigg) \\
        = \frac{\hmm_{r}\hmm_{s}\gamma\big(\hmM(\delta_{ir}\delta_{js}-\delta_{jr}\delta_{is})+\hmm_{j}\left(\delta_{is}-\delta_{ir}\right)+\hmm_{i}(\delta_{jr}-\delta_{js})\big)}{\left(\ahm\hmM^{2}+3\bhm\hmM+\chm\right)\left(\bhm\hmM+\chm\right)} ,
    \end{multline}
    where
    \begin{equation*}
        \gamma = \bhm^{3}\hmM+3\bhm^{2}\chm-\ahm\chm^{2}+\ahm\hmM^{2}+3\bhm\hmM+\chm.
    \end{equation*}
    Therefore the WDVV equations hold when $\gamma = 0$ as stated. It is also easy to see that $\gamma$ has to be zero when $n \geq 3$ if expression (\ref{eqn.Antrig.FBF0}) vanishes for all $i,j,r,s$.
\end{proof}

\begin{rmk}
    In the special case $\bhm=-\frac{1}{2}\ahm\hmM$, the metric $\eta$ is diagonal, the condition (\ref{eqn: An trig cond2}) reduces to (\ref{eqn: An trig cond1}), and the relation (\ref{eqn: An trig condWDVV}) can be rearranged to the form
    \begin{equation*}
        \ahm^{2}\hmM^{2}-4\ahm\chm+4=0 .
    \end{equation*}
\end{rmk}

Theorem \ref{thm: An trig} relies on the two conditions (\ref{eqn: An trig cond1}) and (\ref{eqn: An trig cond2}).
We now consider the case when condition (\ref{eqn: An trig cond1}) does not hold.
The first statement is as follows.
\begin{thm}\label{thm.Antrig.sc1}
    Suppose $\bhm\hmM+\chm=0.$
    Let $Q$ be the diagonal $n\times n$ matrix with entries $Q_{rs}=\delta_{rs}\hmm_{r}$.
    Then the equations
    \begin{equation}\label{eqn: An trig notWDVV}
        F_{i}Q^{-1}F_{j}=F_{j}Q^{-1}F_{i}
    \end{equation}
    hold for all $i,j\in\{1,\dots,n\}$ if $b=\pm 1$.
    Moreover, equations (\ref{eqn: An trig notWDVV}) imply $b = \pm 1$ provided that $n\geq 3$.
\end{thm}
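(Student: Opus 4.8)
The plan is to avoid recomputing anything and instead harvest the pieces already assembled for the generic case, tracking which of them survive when $\bhm\hmM+\chm=0$. The first point is that under this hypothesis the matrix $\eta$ of Lemma~\ref{thm: An trig B def} collapses to the rank-one matrix $(\ahm\hmM+2\bhm)\hmm_r\hmm_s$, hence is degenerate for $n\geq2$; this is exactly why $Q$ replaces $\eta$. Since $(Q^{-1})_{rs}=\delta_{rs}/\hmm_r$, I would observe that
\begin{equation*}
    \left(F_iQ^{-1}F_j\right)_{rs}=\sum_{k=1}^{n}\frac{1}{\hmm_k}F_{irk}F_{jks}
\end{equation*}
is precisely the summation term appearing in (\ref{eqn: An trig FBF}), i.e. the part of $F_i\eta^{-1}F_j$ left over after discarding the rank-one $\kappa\,\eta_{ri}\eta_{js}$ contribution.

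First I would write $F_k=W_k+V_k$ as in Lemma~\ref{thm: An trig Fk} and split the antisymmetrisation of $(F_iQ^{-1}F_j)_{rs}$ into the four pieces coming from $WW$, $WV$, $VW$ and $VV$. The $WW$-piece is zero by (\ref{eqn: An trig WW}). The two cross pieces are given by (\ref{eqn: An trig WV}) and (\ref{eqn: An trig VW}); the essential observation is that each carries an explicit factor of $(\bhm\hmM+\chm)$, so both are killed by the hypothesis, leaving only the $VV$-piece.

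For the $VV$-piece the one subtlety---and the step I expect to require the most care---is that I must \emph{not} use $\lambda_{[ij]rs}$ of Lemma~\ref{thm: An trig VV} directly, as it carries a $1/(\bhm\hmM+\chm)$ that is singular here. Instead I would use its numerator: the proof of that lemma shows that the antisymmetrisation of $\sum_k \hmm_k^{-1}V_{irk}V_{jks}$ equals $A_{[ij]rs}+B_{[ij]rs}+C_{[ij]rs}$, a polynomial in $\bhm,\chm,\hmM,\hmm_i$ that is well defined at $\bhm\hmM+\chm=0$. Assembling (\ref{eqn.Antrig.VVA}), (\ref{eqn.Antrig.VVB}), (\ref{eqn.Antrig.VVC}) and substituting $\chm=-\bhm\hmM$ collapses the coefficient $\hmM(\bhm^2+1)+2\bhm\chm$ of the $(\delta_{ir}\delta_{js}-\delta_{jr}\delta_{is})$ term to $\hmM(1-\bhm^2)$, while the terms linear in $\hmm_i,\hmm_j$ also acquire $(1-\bhm^2)$, giving
\begin{equation*}
    \left(F_iQ^{-1}F_j-F_jQ^{-1}F_i\right)_{rs}=(1-\bhm^2)\,\hmm_r\hmm_s\Big[\hmM(\delta_{ir}\delta_{js}-\delta_{jr}\delta_{is})+\hmm_i(\delta_{jr}-\delta_{js})+\hmm_j(\delta_{is}-\delta_{ir})\Big].
\end{equation*}

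Finally I would read off both directions. For sufficiency, $\bhm=\pm1$ annihilates the prefactor, so (\ref{eqn: An trig notWDVV}) holds for all $i,j$ and every $n$. For the converse with $n\geq3$, it suffices to exhibit one index choice making the bracket nonzero: taking $r=i$ with $j,s$ distinct from $i$ and from each other (possible precisely when $n\geq3$) reduces the bracket to $-\hmm_j\neq0$, forcing $1-\bhm^2=0$, i.e. $\bhm=\pm1$. The bulk of the argument is thus careful bookkeeping of which earlier identities vanish and which survive at $\bhm\hmM+\chm=0$, the only genuine pitfall being to take the $VV$-contribution in its polynomial numerator form rather than through the singular $\lambda_{[ij]rs}$.
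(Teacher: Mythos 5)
Your proof is correct and follows essentially the same route as the paper: the paper likewise reduces the antisymmetrised $\left(F_iQ^{-1}F_j\right)_{rs}$ to the $VV$-contribution via (\ref{eqn: An trig WW})--(\ref{eqn: An trig VW}), evaluates it using the computation behind Lemma \ref{thm: An trig VV}, and substitutes $\chm=-\bhm\hmM$ to obtain the overall factor $1-\bhm^{2}$. Your explicit handling of the $1/(\bhm\hmM+\chm)$ singularity (working with the polynomial numerator $A_{[ij]rs}+B_{[ij]rs}+C_{[ij]rs}$) and your concrete index choice $r=i$ with $i,j,s$ distinct for the converse are careful spellings-out of steps the paper leaves implicit.
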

\begin{proof}
    Note that
    \begin{equation*}
        \left(F_{i}Q^{-1}F_{j}\right)_{rs} = \sum_{k=1}^{n}\frac{1}{\hmm_{k}}F_{irk}F_{jks} ,
    \end{equation*}
    and $F_{i}=W_{i}+V_{i}$ as in Lemma \ref{thm: An trig Fk}. By formula (\ref{eqn: An trig FBF}) and Lemmas \ref{thm: An trig FBF1}, \ref{thm: An trig VV}, we find that
    \begin{align*}
            \left(F_{i}Q^{-1}F_{j}\right)_{rs}-\left(F_{j}Q^{-1}F_{i}\right)_{rs} &= \sum_{k=1}^{n}\frac{1}{\hmm_{k}}\Big[\left(V_{i}\right)_{rk}\left(V_{j}\right)_{ks} - \left(V_{j}\right)_{rk}\left(V_{i}\right)_{ks}\Big] \\
            &= \left(\hmM+\bhm^{2}\hmM+2\bhm\chm\right)\hmm_{r}\hmm_{s}\left(\delta_{ir}\delta_{js}-\delta_{is}\delta_{jr}\right) \\
            & \; + \left(\bhm^{2}-1\right)\hmm_{r}\hmm_{s}\big(\hmm_{i}(\delta_{js}-\delta_{jr})+\hmm_{j}(\delta_{ir}-\delta_{is})\big) .
    \end{align*}
    Since $\chm=-\bhm\hmM$, we get $\hmM+\bhm^{2}\hmM+2\bhm\chm = \hmM(-\bhm^{2}+1)$ and the statement follows.
\end{proof}

We also have the following statement on a solution of the WDVV equations.
\begin{thm}\label{thm.Antrig.sc2}
    Suppose that $\bhm\hmM+\chm=0$, $b=\pm 1$, and $\ahm\hmM+2\bhm\neq 0$.
    Then the matrix $Q$ from Theorem \ref{thm.Antrig.sc1} can be represented as
    \begin{equation*}
        Q=\kappa^{-1}\sum_{k=1}^{n}h_{k}F_{k},
    \end{equation*}
    where
    \begin{equation}\label{eqn: An trig h}
        h_{k} = -(\ahm\hmM+2\bhm)e^{2\bhm\hmc_{k}} + \ahm\sum_{q=1}^{n}\hmm_{q}e^{2\bhm\hmc_{q}} 
    \end{equation}
    and
    \begin{equation*}
        \kappa = -2\bhm(\ahm\hmM+2\bhm)\sum_{q=1}^{n}\hmm_{q}e^{2\bhm\hmc_{q}} .
    \end{equation*}
\end{thm}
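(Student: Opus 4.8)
The plan is to verify the stated identity directly, by proving that $\sum_{k=1}^{n} h_{k} F_{k} = \kappa Q$ as matrices and then dividing by $\kappa$, which is nonzero since $\bhm=\pm 1$ and $\ahm\hmM+2\bhm\neq 0$. I would use the decomposition $F_{k}=W_{k}+V_{k}$ from Lemma \ref{thm: An trig Fk} and compute the $(r,s)$ entry $\sum_{k} h_{k}F_{krs}$ term by term, abbreviating $S=\sum_{q=1}^{n}\hmm_{q}e^{2\bhm\hmc_{q}}$, so that $h_{k}=-(\ahm\hmM+2\bhm)e^{2\bhm\hmc_{k}}+\ahm S$ and $\kappa=-2\bhm(\ahm\hmM+2\bhm)S$. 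The preliminary identity I would record first is
\[
    \sum_{k=1}^{n} h_{k}\hmm_{k} = -(\ahm\hmM+2\bhm)S + \ahm\hmM S = -2\bhm S ,
\]
which immediately yields the $W$-contribution $\sum_{k}h_{k}W_{krs}=\ahm\hmm_{r}\hmm_{s}\sum_{k}h_{k}\hmm_{k}=-2\ahm\bhm S\,\hmm_{r}\hmm_{s}$ and handles those terms of $V_{krs}$ whose $k$-summation reduces to $\sum_{k}h_{k}\hmm_{k}$.

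For the remaining part I would expand $\sum_{k}h_{k}V_{krs}$ using the explicit $V_{krs}$ from Lemma \ref{thm: An trig Fk}, treating the off-diagonal case $r\neq s$ and the diagonal case $r=s$ separately. In both cases the $\beta$-carrying terms combine into expressions of the form $\hmm_{r}\hmm_{s}\beta_{sr}(h_{r}-h_{s})$ off the diagonal and $\hmm_{r}\sum_{q}\hmm_{q}\beta_{rq}(h_{r}-h_{q})$ on the diagonal, and the crucial simplification is that the difference $h_{r}-h_{q}=-(\ahm\hmM+2\bhm)\left(e^{2\bhm\hmc_{r}}-e^{2\bhm\hmc_{q}}\right)$ cancels the $\ahm S$ part of $h$, leaving only the exponential difference.

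The heart of the argument, and the step I expect to be the main obstacle, is the identity
\[
    \beta_{rq}\left(e^{2\bhm\hmc_{r}}-e^{2\bhm\hmc_{q}}\right) = \bhm\left(e^{2\bhm\hmc_{r}}+e^{2\bhm\hmc_{q}}\right) \qquad (r\neq q),
\]
which holds precisely because $\bhm=\pm 1$: writing $\beta_{rq}=\coth(\hmc_{r}-\hmc_{q})=\bhm\coth\!\big(\bhm(\hmc_{r}-\hmc_{q})\big)$ and using $e^{2\bhm(\hmc_{r}-\hmc_{q})}=e^{2\bhm\hmc_{r}}/e^{2\bhm\hmc_{q}}$, the cotangent telescopes against the exponential difference. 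This is exactly where the hypothesis $\bhm=\pm 1$ from Theorem \ref{thm.Antrig.sc1} enters, converting the transcendental $\coth$ terms into plain exponentials; without it no such collapse occurs. Applying this identity, the off-diagonal entries cancel identically, while on the diagonal the terms organise into four groups proportional to $\hmm_{r}^{2}e^{2\bhm\hmc_{r}}$, $\hmm_{r}e^{2\bhm\hmc_{r}}$, $\hmm_{r}^{2}S$ and $\hmm_{r}S$. After substituting $\chm=-\bhm\hmM$ (the hypothesis $\bhm\hmM+\chm=0$) and bookkeeping with $\hmM=\sum_{k}\hmm_{k}$ and $\bhm^{2}=1$, the first three groups vanish and the last collapses to $-2\bhm(\ahm\hmM+2\bhm)S\,\hmm_{r}=\kappa\hmm_{r}$. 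Hence $\sum_{k}h_{k}F_{k}=\kappa Q$, and the stated representation $Q=\kappa^{-1}\sum_{k}h_{k}F_{k}$ follows.
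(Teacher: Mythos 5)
Your proposal is correct and follows essentially the same route as the paper: both compute $\sum_{k}h_{k}F_{k}$ entrywise from the explicit third derivatives of Lemma \ref{thm: An trig Fk} and reduce everything via the identity $\beta_{ij}\left(e^{2\bhm\hmc_{i}}-e^{2\bhm\hmc_{j}}\right)=\bhm\left(e^{2\bhm\hmc_{i}}+e^{2\bhm\hmc_{j}}\right)$ for $i\neq j$, which is exactly the paper's key observation, arriving at $\sum_{k}h_{k}F_{k}=\kappa Q$. Your bookkeeping through the $W_{k}+V_{k}$ split, the preliminary identity $\sum_{k}h_{k}\hmm_{k}=-2\bhm S$, and the diagonal/off-diagonal case distinction is just a more organised version of the paper's ``expanding and cancelling terms,'' and all the claimed cancellations check out.
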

\begin{proof}
    By Lemma \ref{thm: An trig Fk} and substituting $\chm=-\bhm\hmM$, we have
    \begin{multline*}
        F_{krs} = \ahm\hmm_{k}\hmm_{r}\hmm_{s} + \delta_{krs}\hmm_{k}\left(\sum_{q=1}^{n}\hmm_{q}\beta_{kq} - \bhm\hmM\right) + \delta_{kr}\hmm_{k}\hmm_{s}\beta_{sk} \\
        + \delta_{ks}\hmm_{k}\hmm_{r}\beta_{rk} + \delta_{rs}\hmm_{k}\hmm_{r}\beta_{kr} + \bhm\hmm_{k}\left(\delta_{kr}\hmm_{s}+\delta_{ks}\hmm_{r}+\delta_{rs}\hmm_{r}\right).
    \end{multline*}
    We now compute the linear combination $B=\sum^{n}_{k=1}h_{k}F_{k}$, where $h_{k}$ is defined in (\ref{eqn: An trig h}).
    By expanding and cancelling terms, we find that the matrix entries of $B$ are
    \begin{multline}\label{eqn.Antrig.scB}
        B_{rs} = - \delta_{rs}(\ahm\hmM+2\bhm)\hmm_{r}\sum^{n}_{k=1}\hmm_{k}\beta_{rk}\left(e^{2\bhm\hmc_{r}}-e^{2\bhm\hmc_{k}}\right) + \delta_{rs}(\ahm\hmM+2\bhm)\bhm\hmm_{r}\hmM e^{2\bhm\hmc_{r}} \\
        + (\ahm\hmM+2\bhm)\hmm_{r}\hmm_{s}\beta_{rs}\left(e^{2\bhm\hmc_{r}}-e^{2\bhm\hmc_{s}}\right) - (\ahm\hmM+2\bhm)\bhm\hmm_{r}\hmm_{s}\left(e^{2\bhm\hmc_{r}}+e^{2\bhm\hmc_{s}}\right)\\
        - \delta_{rs}(\ahm\hmM+2\bhm)\bhm\hmm_{r}\sum^{n}_{k=1}\hmm_{k}e^{2\bhm\hmc_{k}} .
    \end{multline}
    Observe that
    \begin{equation*}
        \beta_{ij}\left(e^{2\bhm\hmc_{i}}-e^{2\bhm\hmc_{j}}\right) =
        \begin{cases}
            \bhm \left(e^{2\bhm\hmc_{i}}+e^{2\bhm\hmc_{j}}\right) &\textrm{ if } i\neq j, \\
            0 &\textrm{ otherwise.}
        \end{cases}
    \end{equation*}
    Then formula (\ref{eqn.Antrig.scB}) can be simplified to $B_{rs} = \kappa Q_{rs}$ as required.
\end{proof}
Theorems \ref{thm.Antrig.sc1} and \ref{thm.Antrig.sc2} generalise observations from the arXiv version of \cite{HM03-5D} to the case of arbitrary (non-zero) parameters $\hmm_{i}$.

\section{Comparing different trigonometric solutions}\label{sn.rat2trig}
\subsection{\texorpdfstring{$A_{n}$}{An}-type systems}\label{sn.Anr2t}
We show that an arbitrary choice of Legendre transform $S_{\gamma}$ applied to a rational solution $F^{\textrm{rat}}_{A_{n}(\dpa)}$ of the form (\ref{eqn: An rat sol}) produces a trigonometric solution $F^{\textrm{trig}}_{\hmm}$ of the form (\ref{eqn: An trig sol}), subject to some linear change of variables and a choice of parameters $\hmm_{i}, \ahm, \bhm, \chm$. We will deal with the most general case, where the deformation parameters $\dpa_{i}$ have arbitrary (non-zero) values. 
In the case when $\dpa_{i} = 1$ for all $i$, the solution $F^{\textrm{rat}}_{A_{n}(\dpa)}$ corresponds to the root system $A_{n}$. In this case, its Legendre transformation is equivalent to the solution found by Hoevenaars and Martini which is discussed in Section \ref{sn.Antrig}.
\par
Let $\widehat{F} = S_{\gamma}\left(F^{\textrm{rat}}_{A_{n}(\dpa)}\right)$ for some $\gamma \in \{1,\dots,n\}$, given explicitly in Theorem \ref{thm: An rat LT}. The coordinate system $\hat{x}$ is given by the Legendre transformation $S_{\gamma}$, see formulas (\ref{eqn: An rat hat cov}). We also set $\widetilde{F} = F^{\textrm{trig}}_{\hmm}$ given by formula (\ref{eqn: An trig sol}).
\begin{thm}
    Define $\ahm,\bhm,\chm,$ and $\hmm=(\hmm_{1},\dots,\hmm_{n})$ by
    \begin{equation}\label{eqn.Anrat2trig.params}
    \begin{aligned}
        \hmm_{\alpha} & = 
        \begin{cases}
            1 & \textrm{if } \alpha = \gamma , \\
            \dpa_{\alpha} & \textrm{otherwise;}
        \end{cases} \\
        \ahm & = -\frac{2}{\dpA+1} ; \\
        \bhm & = 1 ; \\
        \chm & = - (\dpA+\dpa_{\gamma}+1),
    \end{aligned}
    \end{equation}
    where $\dpA = \sum_{i=1}^{n}\dpa_{i}$.
    Then functions $\widehat{F}$ and $\widetilde{F}$ satisfy
    \begin{equation}\label{eqn.Anrat2trig.F2Frel}
        \frac{(\dpA+1)^{2}}{8\dpa_{\gamma}^{2}}\widehat{F}\left(\hat{x}\right) = \widetilde{F}\left(y\right) \textrm{ up to quadratic terms,}
    \end{equation}
    where the coordinate system $y$ is given by
    \begin{equation}\label{eqn: An rat2trig coord}
        y_{\alpha} =
        \begin{cases}
            -\frac{\dpA+1}{2\dpa_{\gamma}}\hat{x}^{\gamma} & \textrm{if } \alpha = \gamma , \\
            \frac{\dpA+1}{2\dpa_{\gamma}}\left(\hat{x}^{\alpha}-\hat{x}^{\gamma}\right) & \textrm{otherwise.}
        \end{cases}
    \end{equation}
\end{thm}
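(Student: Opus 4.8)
The plan is to substitute the linear change of variables (\ref{eqn: An rat2trig coord}) into $\widetilde{F}=F^{\textrm{trig}}_{\hmm}$ and check that the resulting function of $\hat{x}$ equals $\frac{(\dpA+1)^{2}}{8\dpa_{\gamma}^{2}}\widehat{F}(\hat{x})$ up to quadratic terms, where $\widehat{F}$ is given explicitly in Theorem \ref{thm: An rat LT}. Writing $\lambda=\frac{\dpA+1}{2\dpa_{\gamma}}$, the change of variables reads $\hmc_{\gamma}=-\lambda\hat{x}^{\gamma}$ and $\hmc_{\alpha}=\lambda(\hat{x}^{\alpha}-\hat{x}^{\gamma})$ for $\alpha\neq\gamma$. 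Both $\widetilde{F}(\hmc(\hat{x}))$ and $\widehat{F}(\hat{x})$ are sums of a cubic polynomial and a collection of $f$-terms; since equality up to quadratic terms means equality of all third-order derivatives, and the cubic polynomials have constant third derivatives whereas the $f$-terms do not, it suffices to match the $f$-parts exactly and the cubic polynomials coefficient by coefficient. I would treat these two contributions separately.

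For the $f$-parts: from the change of variables, $\hmc_{i}-\hmc_{j}=\lambda(\hat{x}^{i}-\hat{x}^{j})$ when $i,j\neq\gamma$, while $\hmc_{i}-\hmc_{\gamma}=\lambda\hat{x}^{i}$ for a pair $\{i,\gamma\}$. With $\hmm_{\gamma}=1$ and $\hmm_{i}=\dpa_{i}$, the sum $\sum_{i<j}\hmm_{i}\hmm_{j}f(\hmc_{i}-\hmc_{j})$ splits into $\sum_{i<j,\,i,j\neq\gamma}\dpa_{i}\dpa_{j}\,f(\lambda(\hat{x}^{i}-\hat{x}^{j}))$ and $\sum_{i\neq\gamma}\dpa_{i}\,f(\pm\lambda\hat{x}^{i})$. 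The signs in the second sum are immaterial because $f(-z)=f(z)$ up to a quadratic: since $f'''(z)=\coth z$ is odd, $f(-z)-f(z)$ has vanishing third derivative. Applying the scaling $\frac{(\dpA+1)^{2}}{8\dpa_{\gamma}^{2}}$ to the last line of (\ref{eqn: An rat LT}) cancels the coefficients $\frac{8\dpa_{\gamma}^{2}\dpa_{i}}{(\dpA+1)^{2}}$ and $\frac{8\dpa_{\gamma}^{2}\dpa_{i}\dpa_{j}}{(\dpA+1)^{2}}$, leaving exactly these two sums.

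For the cubic polynomials: I would set $S_{1}=\sum_{i}\hmm_{i}\hmc_{i}$, $S_{2}=\sum_{i}\hmm_{i}\hmc_{i}^{2}$ and $S_{3}=\sum_{i}\hmm_{i}\hmc_{i}^{3}$, so that the polynomial part of $\widetilde{F}$ is $\frac{\ahm}{6}S_{1}^{3}+\frac{\bhm}{2}S_{1}S_{2}+\frac{\chm}{6}S_{3}$. Under the change of variables each $S_{j}$ becomes an explicit polynomial in $\hat{x}$, and I would compare the coefficient of each monomial type — $(\hat{x}^{\gamma})^{3}$, $(\hat{x}^{\gamma})^{2}\hat{x}^{i}$, $\hat{x}^{\gamma}(\hat{x}^{i})^{2}$, $(\hat{x}^{i})^{3}$, $(\hat{x}^{i})^{2}\hat{x}^{j}$ and $\hat{x}^{i}\hat{x}^{j}\hat{x}^{l}$ with distinct indices — against the first two lines of $\frac{(\dpA+1)^{2}}{8\dpa_{\gamma}^{2}}\widehat{F}$, using the explicit values $\eta_{\gamma\gamma}=2\dpa_{\gamma}(\dpA-\dpa_{\gamma}+1)$, $\eta_{i\gamma}=-2\dpa_{i}\dpa_{\gamma}$, $\eta_{ii}=2\dpa_{i}(\dpA-\dpa_{i}+1)$, $\eta_{ij}=-2\dpa_{i}\dpa_{j}$, the parameter choices (\ref{eqn.Anrat2trig.params}), and $\hmM=\dpA-\dpa_{\gamma}+1$. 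As a representative check, the fully mixed monomial $\hat{x}^{i}\hat{x}^{j}\hat{x}^{l}$ (indices distinct and $\neq\gamma$) receives a contribution only from $\frac{\ahm}{6}S_{1}^{3}$, since $\hmc_{q}^{2}$ and $\hmc_{q}^{3}$ involve only $\hat{x}^{q}$ and $\hat{x}^{\gamma}$; this gives $\ahm\lambda^{3}\dpa_{i}\dpa_{j}\dpa_{l}=-\frac{(\dpA+1)^{2}}{4\dpa_{\gamma}^{3}}\dpa_{i}\dpa_{j}\dpa_{l}$, which agrees with $\frac{(\dpA+1)^{2}}{8\dpa_{\gamma}^{2}}\cdot\bigl(-\frac{2\dpa_{i}\dpa_{j}\dpa_{l}}{\dpa_{\gamma}}\bigr)$.

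The main obstacle is precisely this last, mixed bookkeeping: the monomials $(\hat{x}^{i})^{3}$ and $(\hat{x}^{i})^{2}\hat{x}^{j}$ each collect contributions from all three of $S_{1}^{3}$, $S_{1}S_{2}$ and $S_{3}$, and verifying their coefficients hinges on the exact interplay between $\ahm,\bhm,\chm$ and $\hmM$. A useful consistency check — which simultaneously confirms that $\widetilde{F}$ is a genuine trigonometric WDVV solution via Theorem \ref{thm: An trig} — is that the parameters (\ref{eqn.Anrat2trig.params}) satisfy $\bhm\hmM+\chm=-2\dpa_{\gamma}\neq0$ and $\ahm\hmM^{2}+3\bhm\hmM+\chm=-\frac{2\dpa_{\gamma}^{2}}{\dpA+1}\neq0$, and that the cubic constraint (\ref{eqn: An trig condWDVV}) collapses, using $\bhm=1$, $\hmM-\chm=2(\dpA+1)$ and $\hmM+\chm=-2\dpa_{\gamma}$, to $4(\hmM+\chm)+\ahm(\hmM-\chm)(\hmM+\chm)=-8\dpa_{\gamma}+8\dpa_{\gamma}=0$.
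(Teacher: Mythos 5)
Your proposal is correct and takes essentially the same approach as the paper: apply the linear change of variables, match the $f$-sums exactly (the paper silently uses $f(-z)=f(z)$ up to quadratic terms, which you justify via the oddness of $f'''$), and compare the cubic polynomial coefficients monomial type by monomial type — this is precisely the paper's computation, merely performed in the opposite substitution direction. As a side remark, your consistency value $\bhm\hmM+\chm=-2\dpa_{\gamma}$ is the correct one (the paper's remark following the theorem states $\dpa_{\gamma}$, an apparent slip; either way it is non-zero), and your direct verification of (\ref{eqn: An trig condWDVV}) replaces the paper's indirect appeal to the converse part of Theorem \ref{thm: An trig}.
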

\begin{proof}
    Inverting the expressions in (\ref{eqn: An rat2trig coord}) gives us
    \begin{equation*}
        \hat{x}^{\alpha} = 
        \begin{cases}
            -\frac{2\dpa_{\gamma}}{\dpA+1}y_{\gamma} & \textrm{ if } \alpha = \gamma , \\
            \frac{2\dpa_{\gamma}}{\dpA+1}(y_{\alpha}-y_{\gamma}) & \textrm{ otherwise.}
        \end{cases}
    \end{equation*}
    We rearrange the sum of the terms in $\widehat{F}$ which are proportional to $\hat{x}^{i}\hat{x}^{j}\hat{x}^{l}$ as
    \begin{align*}
        -\sum_{i<j<l}\frac{2\dpa_{i}\dpa_{j}\dpa_{l}}{\dpa_{\gamma}}\hat{x}^{i}\hat{x}^{j}\hat{x}^{l} &= -2\hat{x}^{\gamma}\sum_{\substack{i < j\\ i,j \neq \gamma}}\dpa_{i}\dpa_{j}\hat{x}^{i}\hat{x}^{j} -\sum_{\substack{i<j<l\\ i,j,l\neq\gamma}}\frac{2\dpa_{i}\dpa_{j}\dpa_{l}}{\dpa_{\gamma}}\hat{x}^{i}\hat{x}^{j}\hat{x}^{l} .
    \end{align*}    
    We then make the coordinate transformation $\hat{x}\to y$ in the expression for $\widehat{F}$ to obtain
    \begin{align*}
        \widehat{F} & = \sum_{i\neq\gamma}\frac{8\dpa_{\gamma}^{2}\dpa_{i}}{(\dpA+1)^{2}}f(y_{i}-y_{\gamma}) + \sum_{\substack{i < j \\ i,j \neq \gamma}}\frac{8\dpa_{\gamma}^{2}\dpa_{i}\dpa_{j}}{(\dpA+1)^{2}}f(y_{i}-y_{j}) + \textrm{cubic terms} \\
        & = \frac{8\dpa_{\gamma}^{2}}{(\dpA+1)^{2}}\sum_{1\leq i<j\leq n}\hmm_{i}\hmm_{j}f(y_{i}-y_{j}) + \textrm{cubic terms} ,
    \end{align*}
    since the $\hmm_{\alpha}$ are defined by (\ref{eqn.Anrat2trig.params}).
    The function $\widetilde{F}$ can be rewritten as
    \begin{multline}\label{eqn: An rat2trig F1}
        \widetilde{F} = \sum_{1 \leq i < j \leq n}\hmm_{i}\hmm_{j}f\left(\hmc_{i}-\hmc_{j}\right) + \ahm\sum_{1 \leq i<j<k \leq n}\hmm_{i}\hmm_{j}\hmm_{k}y_{i}y_{j}y_{k} \\
        + \frac{1}{2}\sum^{n}_{i\neq j}(\ahm\hmm_{i}+\bhm)\hmm_{i}\hmm_{j}y_{i}^{2}y_{j} + \frac{1}{6}\sum_{i=1}^{n}(\ahm\hmm_{i}^{2}+3\bhm\hmm_{i}+\chm)\hmm_{i}y_{i}^{3}.
    \end{multline}
    We now compare coefficients of the various cubic terms in $\widetilde{F}$ and $\frac{(\dpA+1)^{2}}{8\dpa_{\gamma}^{2}}\widehat{F}$ to show that relation (\ref{eqn.Anrat2trig.F2Frel}) is satisfied.
    
     After the change of variables $\hat{x}\rightarrow y$ in $\widehat{F}$ and some algebraic manipulation, we find that the sum of the cubic terms of the form $y_{i}y_{j}y_{k}$ in $\frac{(\dpA+1)^{2}}{8\dpa_{\gamma}^{2}}\widehat{F}$, with $i,j,k$ distinct, is equal to
    \begin{equation*}
        -\frac{2}{\dpA+1}\sum_{\substack{i<j \\i,j\neq\gamma}}\dpa_{i}\dpa_{j}y_{\gamma}y_{i}y_{j} - \frac{2}{\dpA+1}\sum_{\substack{i<j<l \\i,j,l\neq\gamma}}\dpa_{i}\dpa_{j}\dpa_{l}y_{i}y_{j}y_{l}.
    \end{equation*}
    This is the same as the second term in (\ref{eqn: An rat2trig F1}), since $\hmm_{\gamma}=1$ and $\ahm = -\frac{2}{\dpA+1}.$
    \par
    Next, we consider terms of the form $y_{i}^{2}y_{j}$, with $i\neq j$, in $\frac{(\dpA+1)^{2}}{8\dpa_{\gamma}^{2}}\widehat{F}$.
    The coefficient of the sum of the terms proportional to $y_{\gamma}^{2}y_{i}$, for $i\neq\gamma$, in $\frac{(\dpA+1)^{2}}{8\dpa_{\gamma}^{2}}\widehat{F}$ is equal to
    \begin{multline}\label{eqn.Anrat2trig.comp1}
        \frac{\dpa_{\gamma}}{\dpA+1}\left[-\dpa_{i}\dpa_{\gamma}+2\dpa_{i}(\dpA+1-\dpa_{i})-2\dpa_{i}(\dpA-\dpa_{i}-\dpa_{\gamma})-\frac{2\dpa_{i}}{\dpa_{\gamma}}\sum_{\substack{j< l \\ j,l \neq i, \gamma}}\dpa_{j}\dpa_{l}\right] \\
        + \frac{\dpa_{i}}{2(\dpA+1)}\big[2\dpa_{i}(\dpA+1-\dpa_{i})-(\dpA+1)(\dpA+1+\dpa_{\gamma}-\dpa_{i})\big] \\
        + \frac{\dpa_{i}}{2(\dpA+1)}\sum_{j\neq i, \gamma}\dpa_{j}\left(3+3\dpA-4\dpa_{i}-2\dpa_{j}\right) .
    \end{multline}
    By replacing the first sum in this expression with
    \begin{equation*}
        -\frac{\dpa_{i}}{\dpa_{\gamma}}\left((\dpA-\dpa_{i}-\dpa_{\gamma})^{2}-\sum_{j\neq i, \gamma}\dpa_{j}^{2}\right)
    \end{equation*}
    and some straightforward but substantial manipulations, the expression (\ref{eqn.Anrat2trig.comp1}) can be simplified to $\frac{\dpa_{i}(\dpA-1)}{2(\dpA+1)}$.
    All terms of the form $y_{i}^{2}y_{j}$, with $i\neq j$, in $\frac{(\dpA+1)^{2}}{8\dpa_{\gamma}^{2}}\widehat{F}$ sum to
    \begin{equation*}
        y_{\gamma}^{2}\sum_{i\neq\gamma}\frac{\dpa_{i}(\dpA-1)}{2(\dpA+1)}y_{i} + y_{\gamma}\sum_{i\neq\gamma}\frac{\dpa_{i}(\dpA-2\dpa_{i}+1)}{2(\dpA+1)}y_{i}^{2} + \sum_{\substack{i\neq j \\i,j\neq\gamma}}\frac{\dpa_{i}\dpa_{j}(\dpA-2\dpa_{i}+1)}{2(\dpA+1)}y_{i}^{2}y_{j},
    \end{equation*}
    which is equal to the third term in (\ref{eqn: An rat2trig F1}) since we have set $\bhm=1$.

    In computing the coefficient of terms proportional to $y_{\gamma}^{3}$, it is useful to note that
    \begin{equation*}
        \sum_{\substack{i<j<l \\ i,j,l\neq\gamma}}\dpa_{i}\dpa_{j}\dpa_{l} = \frac{1}{6}(\dpA-\dpa_{\gamma})^{3}-\frac{1}{2}(\dpA-\dpa_{\gamma})\sum_{i\neq\gamma}\dpa_{i}^{2}+\frac{1}{3}\sum_{i\neq\gamma}\dpa_{i}^{3} .
    \end{equation*}
    The sum of all terms of the form $y_{i}^{3}$ in $\frac{(\dpA+1)^{2}}{8\dpa_{\gamma}^{2}}\widehat{F}$ can be expressed as
    \begin{equation*}
        \left(\frac{\dpA(1-\dpA)}{6(\dpA+1)}-\frac{\dpa_{\gamma}}{6}\right)y_{\gamma}^{3}  + \sum_{i\neq\gamma}\left(-\frac{\dpa_{i}^{3}}{3(\dpA+1)}+\frac{\dpa_{i}^{2}}{2}-\frac{\dpa_{i}(\dpA+\dpa_{\gamma}+1)}{6}\right)y_{i}^{3}.
    \end{equation*}
    This agrees with the last term in (\ref{eqn: An rat2trig F1}), as $\chm=-(\dpA+\dpa_{\gamma}+1)$.
\end{proof}

Note that for the choice of parameters in (\ref{eqn.Anrat2trig.params}) we have 
\begin{equation*}
    \bhm\hmM+\chm = \dpa_{\gamma} \neq 0
\end{equation*}
and
\begin{equation*}
    \ahm\hmM^{2}+3\bhm\hmM+\chm = -\frac{2\dpa_{\gamma}^{2}}{\dpA+1} \neq 0.
\end{equation*}
By Theorem \ref{thm: An trig}, it follows that condition (\ref{eqn: An trig condWDVV}) holds, which can also be checked directly.

We also note that in general the family of solutions represented by $\widetilde{F}$ is larger than the family given by $\widehat{F}$.

\subsection{\texorpdfstring{$B_{n}$}{Bn}-type systems}\label{sn.Bnr2t}
The function $\widehat{F} = S_{\gamma}\left(F^{\textrm{rat}}_{B_{n}(\dpb)}\right)$, given in equality (\ref{eqn: Bn rat LT}), has a similar form to a known trigonometric solution generated by a $\vee$-system of type $BC_{n-1}$. Following the results of Alkadhem and Feigin in \cite{AF20-TRIGVS}, specifically Theorem 5.5, the trigonometric $BC_{n-1}$-type WDVV solutions have the form
\begin{multline} \label{eqn: AFBCn}
    \widetilde{F}  = \frac{1}{3}\xi_{0}^{3}+h\xi_{0}\sum_{i=1}^{n-1}m_{i}\xi_{i}^{2}+\lambda r \sum_{i=1}^{n-1}m_{i}\Tilde{f}(\xi_{i}) \\
    + \lambda \sum_{i=1}^{n-1}\left(sm_{i}+\frac{1}{2}qm_{i}(m_{i}-1)\right)\Tilde{f}(2\xi_{i}) + \lambda q \sum_{1 \leq i < j \leq n-1}m_{i}m_{j}\Tilde{f}(\xi_{i}\pm\xi_{j}) ,
\end{multline}
with coordinates $\xi = (\xi_{0}, \xi_{1}, \dots , \xi_{n-1})\in \CC^{n}$, and independent parameters $q,r,s \in \CC$, such that $q\neq 0$, and $m=(m_{1}, \dots , m_{n-1}) \in \left(\CC^{\times}\right)^{n-1}$.
The constants $h$, $\lambda$, $M \in \CC^{\times}$ are defined as follows:
\begin{align}
    h & = r+4s+2q(M-1) ; \label{eqn.BCn.h}\\
    \lambda & = \left(\frac{2h^{3}}{q(r+8s+2q(M-2))} \right)^{1/2} ; \label{eqn.BCn.lambda}\\
    M & = \sum_{i=1}^{n-1}m_{i} ; \label{eqn.BCn.M}
\end{align}
with the requirement that 
\begin{equation}\label{eqn: BCn cond}
    q\left(r+8s+2q(M-2)\right) \neq 0 .
\end{equation}
Finally, the function $\Tilde{f}(z)$ is given by
\begin{equation}\label{eqn: ftilde def}
    \Tilde{f}(z) = -f(-iz),
\end{equation}
where $f(z)$ is defined in (\ref{eqn: f def}).
\par
First, we show that an arbitrary solution of the form $\widehat{F}$ is equivalent to a solution of the form $\widetilde{F}$.
\begin{thm}\label{thm: B2BC}
    Suppose $\dpB = \sum_{i=0}^{n}\dpb_{i} \neq \dpb_{\gamma}$.
    Define $q, r, s, m$ as
    \begin{align}
        q & = -\frac{2\dpb_{\gamma}^{2}}{R\dpB^{2}} ; \label{eqn.B2BC.q} \\
        r & = -\frac{4\dpb_{\gamma}^{2}(\dpb_{0}+\dpB)}{R\dpB^{2}} ; \label{eqn.B2BC.r} \\
        s & = \frac{\dpb_{\gamma}^{2}(\dpB-1)}{R\dpB^{2}} ; \label{eqn.B2BC.s} \\
        m_{\alpha} & =
        \begin{cases}
            \dpb_{\alpha} & \textrm{ if $1\le \alpha < \gamma$} ,\\
            \dpb_{\alpha+1} & \textrm{ if $\gamma \leq \alpha \leq n-1 $},
        \end{cases} \label{eqn.B2BC.m}
    \end{align}
    where $R\in\CC^{\times}$ is an independent scalar.
    Then functions $\widetilde{F}$ and $\widehat{F}$ satisfy
    \begin{equation}\label{eqn.Btrigrat.B2BCrel}
        \frac{R}{\lambda}\widetilde{F}\left(\xi\right) = \widehat{F}\left(\hat{x}\right)
    \end{equation}
    with coordinate transformation
    \begin{align}
        \xi_{0} & = \left(\frac{4\dpb_{\gamma}(\dpB-\dpb_{\gamma})}{R}\right)^{1/2} \hat{x}^{\gamma} ; \label{eqn.Btrigrat.B2BCy}\\
        \xi_{\alpha} & =
    \begin{cases}
       \frac{i\dpB}{\dpb_{\gamma}}\hat{x}^{\alpha} & \textrm{ if $1 \leq \alpha < \gamma$},\\
       \frac{i\dpB}{\dpb_{\gamma}}\hat{x}^{\alpha+1} & \textrm{ if $\gamma \leq \alpha \leq n-1 $} .
    \end{cases} \label{eqn.Btrigrat.B2BCxi}
    \end{align}
    Condition (\ref{eqn: BCn cond}) is satisfied by the choice of parameters (\ref{eqn.B2BC.q})--(\ref{eqn.B2BC.m}), and we also have $\lambda$, $h\neq 0$.
\end{thm}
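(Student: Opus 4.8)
The plan is to verify the claimed identity (\ref{eqn.Btrigrat.B2BCrel}) termwise, treating the non-polynomial ($f$-function) part and the cubic polynomial part separately, exactly as in the $A_{n}$ comparison of Section \ref{sn.Anr2t}. I would begin by substituting the coordinate transformation (\ref{eqn.Btrigrat.B2BCy})--(\ref{eqn.Btrigrat.B2BCxi}) into each term of $\widetilde{F}$ as given by (\ref{eqn: AFBCn}). The essential observation is that, since $\xi_{\alpha} = \frac{i\dpB}{\dpb_{\gamma}}\hat{x}^{\alpha}$ (with the index shift of (\ref{eqn.Btrigrat.B2BCxi}) skipping $\gamma$), the defining relation $\Tilde{f}(z) = -f(-iz)$ from (\ref{eqn: ftilde def}) gives $\Tilde{f}(\xi_{\alpha}) = -f\!\left(\frac{\dpB}{\dpb_{\gamma}}\hat{x}^{\alpha}\right)$, and analogously for the doubled and $\pm$ arguments. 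Using that $f'''$ is odd (equation (\ref{eqn: An trig f minus})), so that $f(-z)$ and $f(z)$ agree up to quadratic terms, each $\Tilde{f}$-term of $\widetilde{F}$ is thereby converted into precisely one of the three species of $f$-term appearing in $\widehat{F}$ in (\ref{eqn: Bn rat LT}).

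Having made this conversion, I would match the coefficients of the three $f$-term species. Because the $f$-terms of $\widetilde{F}$ all carry the common prefactor $\lambda$, the overall scaling $R/\lambda$ cancels it, so these matchings are independent of $\lambda$ and fix the parameters directly: the $\Tilde{f}(\xi_{i})$ terms give $r$, the $\Tilde{f}(2\xi_{i})$ terms give the combination $s m_{i} + \frac{1}{2}q m_{i}(m_{i}-1)$, and the $\Tilde{f}(\xi_{i}\pm\xi_{j})$ terms give $q$. Taking $m_{\alpha}$ as in (\ref{eqn.B2BC.m}), one checks that these reproduce exactly the coefficients $\frac{4\dpb_{i}\dpb_{\gamma}^{2}(\dpb_{0}+\dpB)}{\dpB^{2}}$, $\frac{\dpb_{i}\dpb_{\gamma}^{2}(\dpb_{i}-\dpB)}{\dpB^{2}}$ and $\frac{2\dpb_{i}\dpb_{j}\dpb_{\gamma}^{2}}{\dpB^{2}}$, confirming the values (\ref{eqn.B2BC.q})--(\ref{eqn.B2BC.s}).

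The remaining, and more delicate, part concerns the cubic polynomial terms $\frac{1}{3}\xi_{0}^{3}$ and $h\xi_{0}\sum_{i}m_{i}\xi_{i}^{2}$, which must reproduce $\frac{\eta_{\gamma\gamma}}{6}(\hat{x}^{\gamma})^{3}$ and $\sum_{i\neq\gamma}\frac{\eta_{ii}}{2}\hat{x}^{\gamma}(\hat{x}^{i})^{2}$ with $\eta_{\alpha\beta} = 4\dpb_{\alpha}\dpB\,\delta_{\alpha\beta}$. Here $\lambda$ no longer cancels, since it enters through the scaling of $\xi_{0}$ in (\ref{eqn.Btrigrat.B2BCy}). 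I would first compute $M = \dpB - \dpb_{0} - \dpb_{\gamma}$ from (\ref{eqn.BCn.M}) and the chosen $m_{\alpha}$, then evaluate $h$ from (\ref{eqn.BCn.h}) and the quantity $r + 8s + 2q(M-2)$ entering (\ref{eqn.BCn.lambda}); a short computation gives $h = \frac{4\dpb_{\gamma}^{2}(\dpb_{\gamma}-\dpB)}{R\dpB^{2}}$ and $r + 8s + 2q(M-2) = \frac{4\dpb_{\gamma}^{3}}{R\dpB^{2}}$, whence the defining formula (\ref{eqn.BCn.lambda}) yields $\lambda^{2} = \frac{16\dpb_{\gamma}(\dpB-\dpb_{\gamma})^{3}}{R\dpB^{2}}$.

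The main obstacle is precisely the consistency of this single scalar $\lambda$: it is rigidly fixed by the $BC_{n-1}$ constraint (\ref{eqn.BCn.lambda}), yet it must simultaneously make both the $(\hat{x}^{\gamma})^{3}$ coefficient and the $\hat{x}^{\gamma}(\hat{x}^{i})^{2}$ coefficients come out correctly. I would verify this by substituting the computed $\lambda$ and $h$ together with the scaling $\xi_{0} = \left(4\dpb_{\gamma}(\dpB-\dpb_{\gamma})/R\right)^{1/2}\hat{x}^{\gamma}$: the $\frac{1}{3}\xi_{0}^{3}$ term produces $\frac{R}{\lambda}\cdot\frac{1}{3}\left(4\dpb_{\gamma}(\dpB-\dpb_{\gamma})/R\right)^{3/2} = \frac{2\dpb_{\gamma}\dpB}{3} = \frac{\eta_{\gamma\gamma}}{6}$, while the $h\xi_{0}m_{i}\xi_{i}^{2}$ term, using $\xi_{i}^{2} = -\frac{\dpB^{2}}{\dpb_{\gamma}^{2}}(\hat{x}^{i})^{2}$, produces $2\dpb_{i}\dpB = \frac{\eta_{ii}}{2}$; the half-integer powers of $R$ and the factors of $(\dpB-\dpb_{\gamma})^{1/2}$ must cancel, which is exactly what the square root in (\ref{eqn.BCn.lambda}) guarantees. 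Finally I would record that $\dpB \neq \dpb_{\gamma}$ ensures $h \neq 0$ and $\lambda \neq 0$, and that $q\left(r+8s+2q(M-2)\right) = -\frac{8\dpb_{\gamma}^{5}}{R^{2}\dpB^{4}} \neq 0$, so condition (\ref{eqn: BCn cond}) holds.
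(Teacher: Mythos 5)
Your proposal is correct and follows essentially the same route as the paper's proof: apply the coordinate change, use $\Tilde{f}(z)=-f(-iz)$ (together with the parity of $f$ modulo quadratic terms) to match the three species of transcendental terms, compute $h$, $r+8s+2q(M-2)=\frac{4\dpb_{\gamma}^{3}}{R\dpB^{2}}$ and hence $\lambda^{2}=\frac{16\dpb_{\gamma}(\dpB-\dpb_{\gamma})^{3}}{R\dpB^{2}}$ from the $BC_{n-1}$ constraints, match the cubic terms against $\frac{\eta_{\gamma\gamma}}{6}(\hat{x}^{\gamma})^{3}$ and $\frac{\eta_{ii}}{2}\hat{x}^{\gamma}(\hat{x}^{i})^{2}$, and check condition (\ref{eqn: BCn cond}) via $q(r+8s+2q(M-2))=-\frac{8\dpb_{\gamma}^{5}}{R^{2}\dpB^{4}}\neq 0$. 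The only cosmetic difference is the direction of substitution (you transform $\widetilde{F}$ into the $\hat{x}$ coordinates, the paper transforms $\widehat{F}$ into the $\xi$ coordinates), which changes nothing of substance.
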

\begin{proof}
    We start by inverting expressions (\ref{eqn.Btrigrat.B2BCy}) and (\ref{eqn.Btrigrat.B2BCxi}). The resulting coordinate transformation $\hat{x} \to \xi$ in $\widehat{F}$ gives us
    \begin{multline*}
        \widehat{F} = \sum_{i=1}^{n-1}\left[-\frac{4m_{i}\dpb_{\gamma}^{2}(\dpb_{0}+\dpB)}{\dpB^{2}}\Tilde{f}\left(\xi_{i}\right) + \frac{m_{i}\dpb_{\gamma}^{2}\left(\dpB-m_{i}\right)}{\dpB^{2}}\Tilde{f}\left(2\xi_{i}\right)\right] \\
        - \sum^{n-1}_{i < j}\frac{2m_{i}m_{j}\dpb_{\gamma}^{2}}{\dpB^{2}}\Tilde{f}\left(\xi_{i}\pm\xi_{j}\right) + \text{cubic terms} ,
    \end{multline*}
    where we have also used (\ref{eqn: ftilde def}) and (\ref{eqn.B2BC.m}).
    This expression matches the corresponding terms in $R\widetilde{F}/\lambda$. 
    
    Note that $M=\dpB-\dpb_{0}-\dpb_{\gamma}$ by (\ref{eqn.BCn.M}).
    Since we have $q,r,s, M$ in terms of the $\dpb_{\alpha}$, we may use (\ref{eqn.BCn.h}) and (\ref{eqn.BCn.lambda}) to derive
    \begin{align}
        h & = \frac{4\dpb_{\gamma}^{2}(\dpb_{\gamma}-\dpB)}{R\dpB^{2}} \label{eqn.Btrigrat.B2BCh}\\
        \intertext{and}
        \lambda^{2} & = \frac{16\dpb_{\gamma}(\dpB-\dpb_{\gamma})^{3}}{R\dpB^{2}} . \label{eqn.Btrigrat.B2BCl}
    \end{align}
    We can now consider the cubic terms in $\widehat{F}(\xi)$, which are
    \begin{equation*}
        \frac{\dpB R^{3/2}}{12\dpb_{\gamma}^{1/2}(\dpB-\dpb_{\gamma})^{3/2}}\xi_{0}^{3} - \frac{\dpb_{\gamma}^{3/2}R^{1/2}}{\dpB(\dpB-\dpb_{\gamma})^{1/2}}\xi_{0}\sum_{i=1}^{n-1}m_{i}\xi_{i}^{2} = \frac{R}{3\lambda}\xi_{0}^{3} + \frac{hR}{\lambda}\xi_{0}\sum_{i=1}^{n-1}m_{i}\xi_{i}^{2}.
    \end{equation*}
    This matches the cubic terms in $R\widetilde{F}/\lambda$: hence, relation (\ref{eqn.Btrigrat.B2BCrel}) holds.
    
    Substitution of the expressions for $q,r,s,M$ in terms of the $\dpb_{\alpha}$ into condition (\ref{eqn: BCn cond}) gives us
    \begin{equation*}
        q\left(r+8s+2q(M-2)\right) = -\frac{8\dpb_{\gamma}^{5}}{R^{2}\dpB^{4}} \neq 0 .
    \end{equation*}
    Also, expressions (\ref{eqn.Btrigrat.B2BCh}), (\ref{eqn.Btrigrat.B2BCl}) are non-zero.
\end{proof}

The family of solutions $\widehat{F}$ depends on $n+1$ parameters $\dpb_{0}, \dots, \dpb_{n}$ while the family $\widetilde{F}$ depends on $n+2$ parameters $q$, $r$, $s$, $m_{1}, \dots, m_{n-1}$. As we introduced an extra scale factor $R$ in Theorem \ref{thm: B2BC}, we can now relate a general solution of the family $\widetilde{F}$ to a solution of the family $\widehat{F}$. More precisely, the following statement holds.

\begin{thm}\label{thm: BC2B}
    Given a solution $\widetilde{F}$, such that $q\neq 2s$, we define:
    \begin{align}
        \dpb_{\alpha} & = 
        \begin{cases}
            \frac{r-2q+4s}{2q} & \textrm{ if $\alpha = 0$}, \\[0.5em]
            m_{\alpha} & \textrm{ if $0 < \alpha < \gamma$}, \\[0.5em]
            \frac{2q(2-M)-r-8s}{2q} & \textrm{ if $\alpha = \gamma$}, \\[0.5em]
            m_{\alpha -1} & \textrm{ if $\gamma < \alpha \leq n $};
        \end{cases} \label{eqn.BC2B.params}\\
        R & = - \frac{(r+8s+2q(M-2))^{2}}{2q(q-2s)^{2}} . \label{eqn.BC2B.R}
    \end{align}
    Then $\widehat{F}$ and $\widetilde{F}$ are related by $\frac{\lambda}{R}\widehat{F}(\hat{x}) = \widetilde{F}(\xi)$
    with coordinate transformation
    \begin{equation*}
        \hat{x}^{\alpha} = 
        \begin{cases}
            \frac{i(r+8s+2q(M-2))}{2(q-2s)}\xi_{\alpha} & \textrm{ if $1 \leq \alpha < \gamma$},\\[0.5em]
            \frac{h}{\lambda(q-2s)} \xi_{0} & \textrm{ if $\alpha = \gamma$}, \\[0.5em]
            \frac{i(r+8s+2q(M-2))}{2(q-2s)}\xi_{\alpha-1} & \textrm{ if $\gamma < \alpha \leq n$} .
        \end{cases}
    \end{equation*}
\end{thm}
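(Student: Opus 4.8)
The plan is to recognise this statement as the inverse of Theorem \ref{thm: B2BC}. That theorem takes the $n+1$ parameters $\dpb_0, \dots, \dpb_n$ together with the free scale $R$ and produces the $n+2$ trigonometric parameters $q, r, s, m_1, \dots, m_{n-1}$ through formulas (\ref{eqn.B2BC.q})--(\ref{eqn.B2BC.m}). Since the parameter counts agree, this assignment should be invertible, and I would show that (\ref{eqn.BC2B.params}) and (\ref{eqn.BC2B.R}) furnish exactly the inverse. Concretely: given $\widetilde{F}$, define $\dpb_0, \dots, \dpb_n$ and $R$ by (\ref{eqn.BC2B.params})--(\ref{eqn.BC2B.R}), and check by substitution that feeding these back into (\ref{eqn.B2BC.q})--(\ref{eqn.B2BC.m}) reproduces the original $q, r, s, m$. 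The desired identity $\tfrac{\lambda}{R}\widehat{F} = \widetilde{F}$ is then just the rearrangement of relation (\ref{eqn.Btrigrat.B2BCrel}), applied to the $\widehat{F}$ built from these $\dpb_\alpha$.

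To invert the parameter map, I would first eliminate the common factor $\dpb_\gamma^2/(R\dpB^2)$: the ratios of (\ref{eqn.B2BC.s}) and (\ref{eqn.B2BC.r}) to (\ref{eqn.B2BC.q}) give $\dpB = (q-2s)/q$ and $\dpb_0 + \dpB = r/(2q)$. Since $m_\alpha$ ranges over the $\dpb_i$ with $\dpb_0$ and $\dpb_\gamma$ omitted, one has $M = \dpB - \dpb_0 - \dpb_\gamma$, and these three relations yield the stated $\dpb_0$ and $\dpb_\gamma$; then $R = -2\dpb_\gamma^2/(q\dpB^2)$ from (\ref{eqn.B2BC.q}) simplifies to (\ref{eqn.BC2B.R}). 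The reverse substitution is then a routine algebraic identity confirming that the two maps compose to the identity.

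Along the way I would verify the non-degeneracy conditions required by Theorem \ref{thm: B2BC}. The hypothesis $q \neq 2s$ is exactly what makes $\dpB = (q-2s)/q \neq 0$, so that the $B_n$-type metric is non-degenerate; condition (\ref{eqn: BCn cond}) gives $r + 8s + 2q(M-2) \neq 0$, whence $\dpb_\gamma \neq 0$ and $R \neq 0$; and $h \neq 0$ (from $\lambda \in \CC^\times$) converts, via the identity $\dpB - \dpb_\gamma = h/(2q)$, into the requirement $\dpB \neq \dpb_\gamma$ of Theorem \ref{thm: B2BC}. I would also confirm that the coordinate change is the inverse of (\ref{eqn.Btrigrat.B2BCy})--(\ref{eqn.Btrigrat.B2BCxi}); for $\alpha \neq \gamma$ this is the direct inversion of $\xi_\alpha = (i\dpB/\dpb_\gamma)\hat{x}^\alpha$, using the expression for $\dpb_\gamma/\dpB$ obtained above.

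The main obstacle I anticipate is the Legendre direction $\alpha = \gamma$, whose coordinate coefficient mixes $h$, $\lambda$, and a square root. Here one must substitute the derived expressions (\ref{eqn.Btrigrat.B2BCh})--(\ref{eqn.Btrigrat.B2BCl}) for $h$ and $\lambda^2$ in terms of the $\dpb_\alpha$ and $R$, reduce $q - 2s$ to $-2\dpb_\gamma^2/(R\dpB)$, and verify that $h/(\lambda(q-2s))$ and $(4\dpb_\gamma(\dpB-\dpb_\gamma)/R)^{1/2}$ are genuinely reciprocal, taking care over the branch of the square root. Once this is settled, the theorem follows immediately from Theorem \ref{thm: B2BC}.
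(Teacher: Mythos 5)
Your proposal is correct and is essentially the paper's own proof: the paper also proves Theorem \ref{thm: BC2B} by inverting the parameter relations and coordinate change of Theorem \ref{thm: B2BC}, pivoting on the same key identity $\dpB = \dpb_{0}+\dpb_{\gamma}+M = \frac{q-2s}{q}$ that you derive from the ratios of (\ref{eqn.B2BC.q})--(\ref{eqn.B2BC.s}). Your additional checks (non-degeneracy conditions, the $\dpB-\dpb_{\gamma}=h/(2q)$ identity, and the square-root branch in the $\alpha=\gamma$ coordinate) simply flesh out details the paper leaves implicit.
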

\noindent The proof follows by inverting parameter relations and the change of coordinates from Theorem \ref{thm: B2BC}. In particular, it is useful to note that
\begin{equation*}
    \dpB = \dpb_{0} + \dpb_{\gamma} + M = \frac{q-2s}{q} .
\end{equation*}

\bigskip
{
  \noindent \textbf{Acknowledgments:}
  \par
  LK would like to thank the University of Glasgow for providing PhD funding, Dali Shen for discussions, and the organisers of the ISLAND 6 conference, where these results were first presented, for providing financial support.
}
\emergencystretch=1em
\raggedright
\printbibliography
\end{document}